\documentclass[11pt]{article}
\usepackage[utf8]{inputenc}
\usepackage[margin=1.25in]{geometry}
\linespread{1.15}

\usepackage{amsmath}
\usepackage{amssymb}
\usepackage{amsthm}
\usepackage{bbm}

\usepackage{multicol}
\usepackage{multirow}
\usepackage{booktabs}
\usepackage{graphicx}
\usepackage{float}
\usepackage{bm}
\usepackage{multirow}
\usepackage{xargs}
\usepackage{listings}
\usepackage{xcolor}
\usepackage{tabularx}
\usepackage{algorithm}
\usepackage{algcompatible}
\usepackage{rotating}
\usepackage{comment}

\usepackage{bbm}
\usepackage{comment}
\usepackage{natbib}
\usepackage[colorlinks=true, allcolors=blue]{hyperref}

\newtheorem{theorem}{Theorem}

\newtheorem{prop}[theorem]{Proposition}
\newtheorem{corollary}{Corollary}[theorem]

\theoremstyle{definition}
\newtheorem{assumption}{Assumption}

\newcommand{\rightarrowp}{\xrightarrow{p}}

\newcommand{\footremember}[2]{
    \footnote{#2}
    \newcounter{#1}
    \setcounter{#1}{\value{footnote}}
}
\newcommand{\footrecall}[1]{
    \footnotemark[\value{#1}]
} 

\providecommand{\keywords}[1]
{
  \small	
  \textbf{\textit{Keywords---}} #1
}

\title{A Weighted Prognostic Covariate Adjustment Method for Efficient and Powerful Treatment Effect Inferences in Randomized Controlled Trials
\author{Alyssa M. Vanderbeek\footnote{Corresponding author} \footremember{Unlearn.AI}{Unlearn.AI, Inc., San Francisco, CA, USA}, \and Anna A. Vidovszky\footrecall{Unlearn.AI}, \and Jessica L. Ross\footrecall{Unlearn.AI}, \and Arman Sabbaghi\footrecall{Unlearn.AI}, \and Jonathan R. Walsh\footrecall{Unlearn.AI}, \and Charles K. Fisher\footrecall{Unlearn.AI}, \and for the Critical Path for Alzheimer’s Disease\footnote{Data used in the preparation of this article were obtained from the Critical Path for Alzheimer’s Disease (CPAD) Database. In 2008, Critical Path Institute, in collaboration with the Engelberg Center for Health Care Reform at the Brookings Institution, formed the Coalition Against Major Diseases (rebranded to Critical Path for Alzheimer’s Disease (CPAD) consortium in 2018). The consortium brings together patient groups, biopharmaceutical companies, and scientists from academia, the U.S. Food and Drug Administration (FDA), the European Medicines Agency (EMA), the National Institute of Neurological Disorders and Stroke (NINDS), and the National Institute on Aging (NIA). The Critical Path for Alzheimer’s Disease (CPAD) consortium includes over 200 scientists from member and non-member organizations. The data available in the CPAD database has been volunteered by CPAD member companies and non-member organizations.}, \and the Alzheimer’s Disease Neuroimaging Initiative\footnote{Data used in preparation of this article were obtained from the Alzheimer’s Disease Neuroimaging Initiative (ADNI) database (adni.loni.usc.edu). As such, the investigators within the ADNI contributed to the design and implementation of ADNI and/or provided data but did not participate in analysis or writing of this report. A complete listing of ADNI investigators can be found in this document.}, \and European Prevention of Alzheimer’s Disease (EPAD) Consortium\footnote{Data used in preparation of this article were obtained from the Longitudinal Cohort Study (LCS), delivered by the European Prevention of Alzheimer’s Disease (EPAD) Consortium. As such investigators within the EPAD LCS and EPAD Consortium contributed to the design and implementation of EPAD and/or provided data but did not participate in analysis or writing of this report. A complete list of EPAD Investigators can be found in this document.},  and \and the Alzheimer’s Disease Cooperative Study\footnote{Data used in preparation of this manuscript/publication/article were obtained from the University of California, San Diego Alzheimer’s Disease Cooperative Study. Consequently, the ADCS Core Directors contributed to the design and implementation of the ADCS and/or provided data but did not participate in analysis or writing of this report.}}
\date{\today}}

\begin{document}

\maketitle

\newpage

\begin{abstract}
A crucial task for a randomized controlled trial (RCT) is to specify a statistical method that can yield an efficient estimator and powerful test for the treatment effect. A novel and effective strategy to obtain efficient and powerful treatment effect inferences is to incorporate predictions from generative artificial intelligence (AI) algorithms into covariate adjustment for the regression analysis of a RCT. Training a generative AI algorithm on historical control data enables one to construct a digital twin generator (DTG) for RCT participants, which utilizes a participant's baseline covariates to generate a probability distribution for their potential control outcome. Summaries of the probability distribution from the DTG are highly predictive of the trial outcome, and adjusting for these features via regression can thus improve the quality of treatment effect inferences, while satisfying regulatory guidelines on statistical analyses, for a RCT. However, a critical assumption in this strategy is homoskedasticity, or constant variance of the outcome conditional on the covariates. In the case of heteroskedasticity, existing covariate adjustment methods yield inefficient estimators and underpowered tests. We propose to address heteroskedasticity via a weighted prognostic covariate adjustment methodology (Weighted PROCOVA) that adjusts for both the mean and variance of the regression model using information obtained from the DTG. The expectation and variance of the probability distribution for each participant are used as covariates in modeling the mean and variance components for the regression analysis. We prove that our method yields unbiased treatment effect estimators, and demonstrate via comprehensive simulation studies that it can reduce the variance of the treatment effect estimator, maintain the Type I error rate, and increase the power of the test for the treatment effect from $80\%$ to $85\% \sim 90\%$ when the variances from the DTG can explain $5\%  \sim 10\%$ of the variation in the RCT participants' outcomes. We further demonstrate the consistent gains in efficiency and power that can follow from our methodology via re-analyses of data from three RCTs in Alzheimer's disease. Ultimately, our methodology can power the next generation of RCTs and accelerate drug development, as it incorporates features uniquely derived from DTGs to improve the quality of treatment effect inferences in the case of heteroskedasticity while maintaining alignment with regulatory guidance.
\end{abstract}

\keywords{causal inference, digital twins, personalized precisions, prognostic scores, Neyman-Rubin Causal Model, weighted linear regression}

\newpage

\section{Introduction}
\label{sec:introduction}

Randomized controlled trials (RCTs) are the gold standard for advancing medical theory and developing new treatments \citep{hariton_2018} because they facilitate unbiased inferences on treatment effects and potential complications associated with treatment \citep{food_and_drug_administration_ich_2021}. A crucial task for a RCT is the specification of a statistical method that can yield an efficient (i.e., low variance) treatment effect estimator and a powerful test for the treatment effect. One attractive approach to address this task is to analyze the data from a RCT via covariate adjustment, i.e., a regression model that includes the covariates associated with the outcome along with the treatment indicator as predictors \citep{food_and_drug_administration_adjusting_2023, schuler2022increasing}. The analysis of covariance \citep[ANCOVA,][]{fisher_1934, kempthorne_1952} is an established approach for covariate adjustment in the case of continuous endpoints. Covariate adjustment has been used to great success, and is advantageous over alternatives such as external control arms (ECAs) with respect to satisfying regulatory guidelines on statistical analyses for RCTs because it does not carry the risk of introducing bias or increasing Type I error rates \citep{food_and_drug_administration_2023}. 

Per regulatory guidance, the number of covariates used in covariate adjustment methods should be limited (\citet[FDA,][]{food_and_drug_administration_adjusting_2023} and \citet[EMA,][]{ema_adjusting_2015}). This requirement can be challenging to fulfill given the large number of covariates collected on participants in modern RCTs. Artificial intelligence (AI) offers potential solutions. Consider an AI algorithm trained on historical control data that learns a predictive probability distribution for a participant's control outcome as a function of their baseline covariates. These predicted outcomes (prognostic score) can serve as effective low-dimensional covariates for adjustment in the model. For example, the expected control outcome of a participant in the RCT calculated based on the DTG, referred to as the prognostic score, can serve as a single covariate for adjustment in the analysis. \citet{schuler2022increasing} established a statistical methodology to leverage prognostic scores obtained from an AI algorithm for prognostic covariate adjustment (abbreviated as ``PROCOVA''). The advantages of this method for RCTs are twofold. First, it effectively summarizes a high-dimensional covariate vector into a scalar that can be highly correlated with the observed outcome (because the prognostic score is constructed from an AI algorithm optimized to predict the control outcome). Second, as the AI algorithm is trained on historical, out-of-sample control data, adjusting for the prognostic score will not bias inferences, increase Type I error rates, or decrease confidence interval coverage rates \citep{schuler2022increasing}. Accordingly, adjusting for the prognostic score via a regression model can reduce the variance of the treatment effect estimator and increase the power of the test for the treatment effect in a regulatory acceptable manner. The EMA qualified PROCOVA as ``an acceptable statistical approach for primary analysis'' of Phase 2/3 RCTs with continuous endpoints \citep{ema_procova_2022}.

A critical assumption in covariate adjustment is homoskedasticity, or constant variance of the outcomes conditional on the covariates \citep[p.~12]{weisberg_2014}. Heteroskedasticity, or nonconstant variance of the outcomes, can arise due to lurking variables \citep{joiner_1981} or the nature of the endpoint \citep[p.~171]{weisberg_2014}. Although covariate adjustment methods that are based on homoskedasticity can yield unbiased and consistent estimators for treatment effects when this assumption is violated, failing to account for nonconstant variance typically yields inefficient inferences and underpowered tests for the treatment effect \citep{romano_resurrecting_2016}. 

One could instead fit a weighted regression model where the participants' observations are weighted inversely to their inferred variances, so that participants with smaller inferred variances are given more weight whereas participants with larger inferred variances are given less weight. \citet{davidian_carroll_1987} and \citet{carroll_et_al_1988} investigated the frequentist properties of weighted regression for inferring participant-level variances as a function of their covariates. \citet{romano_resurrecting_2016} proposed an adaptive approach for weighted linear regression. However, there are challenges in incorporating such methods in RCTs. First, there may be many covariates to select from when modeling participants' variances, and attempting to perform variable selection and incorporate multiple selected covariates into the participant-level variance model for weighted regression could make inferences unstable. Second, including multiple covariates in the variance model could make it difficult for non-statisticians to interpret the resulting analyses of a RCT, especially as they may not be as familiar with the variance as with the mean of a distribution. Finally, the inclusion of multiple covariates into the participant-level variance model contradicts FDA and EMA guidance on limiting the number of covariates in an analysis. These complications with weighted regression help to explain the limited application of variance modeling for addressing violations of the homoskedasticity assumption in RCTs. Instead of variance modeling, researchers use heteroskedastic consistent \citep[HC,][]{white_1980} standard errors for treatment effect inferences, which are inefficient when the regression model provides a good fit to the RCT data and do not address biases in the inferences when the model is misspecified \citep{freedman_2006}.

We propose to address heteroskedasticity in a RCT via a weighted prognosic covariate adjustment methodology, which we refer to as ``Weighted PROCOVA'', that leverages historical data through a generative AI algorithm for covariate adjustment of both the mean and variance model for participants. The fundamental idea is to leverage additional information from the predictive probability distribution of the control outcome as learned by the DTG in the participant-level variance component. More formally, our methodology uses the prognostic score as a covariate in the mean model and the logarithm of the variance of the probability distribution from the DTG as a covariate in the participant-level variance model for weighted linear regression. We refer to the inverse of the variance of the probability distribution from the DTG as a participant's ``personalized precision'', and interpret it as the predicted precision of a participant's outcome conditional on all their covariates. Thus, in our methodology the prognostic score remains as the primary covariate adjustment for the mean, and the logarithm of the personalized precision is the sole covariate for the participant-level variance model. The inverse of the predictions from the fitted participant-level variance model define the weights for our weighted regression analysis of RCT data. 

In contrast to existing weighted regression approaches, our Weighted PROCOVA methodology follows regulatory guidance on covariate adjustment by leveraging an AI algorithm trained on historical, out-of-sample control data to summarize the high-dimensional covariate vector for heteroskedasticity into the scalar personalized precision. The participant-level variance model in a RCT is then specified and fitted in a stable and interpretable manner solely as a function of the personalized precision, preventing the complications associated with adjusting for multiple covariances in the variance model. It yields an interpretable weighting mechanism for a RCT, with the strength of the personalized precisions in explaining heteroskedasticity governing the efficiency and power of treatment effect inferences from the weighted regression model. Additionally, the approach relates the quality of the DTG predictions to the gain in statistical efficiency (i.e., reduced variance). Finally, Weighted PROCOVA enables one to adapt the personalized precisions obtained from a DTG to the RCT dataset so as to prevent incongruities in the weights, such as assigning more weight to participants whose outcomes are not predicted well. 
 
We proceed in Section \ref{sec:background} to review regression, heteroskedasticity, and existing approaches to weighted regression in which participant-level variances (and corresponding weights) are inferred by fitting a regression model to a transformation of squared residuals \citep{davidian_carroll_1987, romano_resurrecting_2016}. Section \ref{sec:methods} describes Weighted PROCOVA for RCTs that leverages AI-generated predictions of control outcomes and participant variances. As we describe in this section, our perspective on weighted regression is fundamentally distinct from that of existing methods in that it does not attempt to adjust for multiple covariates in the participant-level variance model, but instead adjusts solely for a transformation of the personalized precisions. We establish the frequentist properties of the coefficient estimators for both the mean and variance models under our methodology in Section \ref{sec:frequentist_properties}, and provide formulae for the variance reduction associated with our methodology in Section \ref{sec:frequentist_properties_variance_reduction}. The power boost of Weighted PROCOVA is a direct function of its variance reduction. In Section \ref{sec:simulation_studies} we present the results of extensive simulation studies on the performance of our method for several settings, including cases in which the variance model is misspecified in that an important covariate for heteroskedasticity is omitted from the model specification. These simulation studies demonstrate the gains offered by our methodology compared to the case of ignoring heteroskedasticity. Finally, in Section \ref{sec:case_studies} we apply our methodology to real-life data from three RCTs on Alzheimer's disease. These analyses demonstrate that adjusting for both the prognostic score and personalized precision in our methodology does not yield biased treatment effect inferences, and that it can reduce the variance of the treatment effect estimator, and boost the power for testing the treatment effect, compared to the linear regression analysis that adjusts just for the prognostic score and does not account for heteroskedasticity. As we conclude in Section \ref{sec:concluding_remarks}, our weighted regression methodology effectively leverages generative AI to improve the quality of treatment effect inferences without introducing much risk, and thereby can power the next generation of RCTs and accelerate drug development.

\section{Background}
\label{sec:background}

\subsection{Notations and Assumptions}
\label{sec:notation}

We adopt the Neyman-Rubin Causal Model \citep{neyman1923causal, holland1986statistics} to define the super-population average treatment effect of a continuous endpoint as our causal estimand of interest for RCTs. This is a causal estimand as it corresponds to a comparison of the treatment and control potential outcomes for the super-population of participants. This estimand also corresponds to the coefficient for the treatment indicator in a linear regression model \citep[p.~119]{imbens_rubin_2015}. 

For each participant $i = 1, \ldots, N$ in a two-arm RCT, let $w_i \in \{0,1\}$ denote their treatment assignment and $x_i \in \mathbb{R}^L$ their covariate vector. A participant's covariate vector corresponds to their characteristics that are observed either prior to treatment assignment, or after treatment assignment and are known to be unaffected by treatment \citep[p.~15--16]{imbens_rubin_2015}. We consider continuous outcomes $Y$ throughout, and assume that the RCT satisfies the Stable Unit-Treatment Value Assumption \citep[SUTVA,][p.~9--13]{imbens_rubin_2015}. The potential outcome for participant $i$ under treatment $w \in \{0, 1\}$ is denoted $Y_i(w)$.

Following the framework of \citet[p.~116--117]{imbens_rubin_2015}, we consider the participants in a RCT as a random sample from an infinite, super-population of participants. The $\left \{ \left ( Y_1(0), Y_1(1), x_1 \right ), \ldots, \left ( Y_N(0), Y_N(1), x_N \right ) \right \}$ are considered to be independent and identically distributed random vectors, with their probability distribution defined by virtue of random sampling. We use $\left ( Y(0), Y(1), X \right )$ to denote a random vector that follows the same distribution as the $\left (Y_i(0), Y_i(1), x_i \right )$. We denote the expectations of the potential outcomes conditional on the covariates by $\mu_0 \left ( x \right ) = \mathbb{E} \left \{ Y(0) \mid X = x \right \}$ and $\mu_1 \left ( x \right ) = \mathbb{E} \left \{ Y(1) \mid X = x \right \}$, and the marginal (i.e., unconditional) expectations for the potential outcomes by $\mathbb{E} \left \{ Y(0) \right \}$ and $\mathbb{E} \left \{ Y(1) \right \}$. The super-population conditional treatment effect is $\tau \left ( x \right ) = \mathbb{E} \left \{ Y(1) \mid X = x \right \} - \mathbb{E} \left \{ Y(0) \mid X = x \right \}$, and the super-population marginal treatment effect is $\tau = \mathbb{E} \left \{ \tau \left ( X \right ) \right \}$. The latter effect is our causal estimand of interest. 

For each participant, at most one potential outcome can be observed according to their treatment assignment. As such, causal inference is a missing data problem under the Neyman-Rubin Causal Model, and the treatment assignment mechanism in the RCT is fundamental \citep{holland1986statistics}. This mechanism is characterized by the joint probability mass function $p ( w_1, \ldots, w_N \mid$ $Y_1(0), Y_1(1), \ldots, Y_N(0), Y_N(1)$, $x_1, \ldots, x_N )$, which is analogous to a missing data mechanism \citep{imbens_rubin_2015, little_rubin_2019}. By virtue of the design of a RCT the treatment assignment mechanism is unconfounded in that $p \left ( w_1, \ldots, w_N \mid Y_1(0), Y_1(1), \ldots, Y_N(0), Y_N(1), x_1, \ldots, x_N \right )$ does not change as a function of the  $\left ( Y_1(0), Y_1(1), \ldots, Y_N(0), Y_N(1) \right )$, i.e., there are no lurking confounders that are associated with both the treatment assignment and the potential outcomes conditional on the covariates. We consider RCTs in which treatment is assigned completely at random, so that $ \left ( w_1, \ldots, w_N \right )$ is independent of $\left ( x_i, \ldots, x_N \right )$, and in which treatment assignment is probabilistic in that $0 \leq p \left ( w_1, \ldots, w_N \mid Y_1 \left ( 0 \right ), Y_1 \left ( 1 \right ), \ldots, Y_N \left ( 0 \right ), Y_N \left ( 1 \right ), x_1, \ldots, x_N \right ) \leq 1$ for any $\left ( w_1, \ldots, w_N \right ) \in \{0, 1\}^N$ with strict inequality holding for at least one such vector. A probabilistic treatment assignment enables us to consider all participants for the design and analysis of the RCT, and reduces the risk of extrapolation bias when inferring treatment effects. We limit our focus to individualistic treatment assignment mechanisms, in which a participant's treatment assignment does not depend on the covariates or potential outcomes of other participants. This generally holds for classical RCTs, and its violation would complicate the design and analysis of a RCT.

\subsection{Linear Regression}
\label{sec:OLS}

Linear regression is an established methodology for causal inference from RCTs. Under this method, $\mu_0(x_i)$ and $\mu_1(x_i)$ are modeled via a predictor vector $v_i \in \mathbb{R}^{M+1}$ as linear functions of unknown regression coefficients $\beta = \left ( \beta_0, \ldots, \beta_M \right ) \in \mathbb{R}^{M+1}$, and the distributions of the $\left ( Y_i(0), Y_i(1) \right )$ conditional on the $v_i$ are specified as independent Bivariate Normal random variables with mean vector $\left ( \mu_0(x_i), \mu_1(x_i) \right )$ and with the covariance matrix proportional to the $2 \times 2$ identity matrix. The $v_i$ are defined via transformations of the $w_i$ and $x_i$, and the entry in $\beta$ associated with $w_i$ in $v_i$ corresponds to the estimand $\tau$ of interest for inference. Inferences for this estimand are performed via ordinary least squares or maximum likelihood estimation according to the linear model specification.

We assume $N > M+1$ throughout. For $v_i=(1, w_i, x_i)^{\mathsf{T}}$, we have the potential outcome generation mechanism
\begin{equation}
\label{eq:linear_model_potential_outcomes}
Y_i(w) = v_i^{\mathsf{T}}\beta + \epsilon_i(w),
\end{equation}
where the random errors $\epsilon_i(w) \sim \mathrm{N} \left ( 0, \sigma^2 \right )$ are mutually independent conditional on the predictors, and $\sigma^2 > 0$ is the variance of the potential outcomes conditional on the predictors. The model in equation (\ref{eq:linear_model_potential_outcomes}) also follows from a representation of the conditional means using functions of the covariates, treatment indicator, and potential outcomes \citep[p.~122--127]{imbens_rubin_2015}. The observed outcome $y_i$ for participant $i$ is a function of their potential outcomes and treatment indicator according to $y_i = w_iY_i(1) + \left ( 1 - w_i \right )Y_i(0)$. This motivates using equation (\ref{eq:linear_model_potential_outcomes}) to specify the linear regression model for the observed outcomes as
\begin{equation}
\label{eq:linear_model_observed_outcomes}
y_i = v_i^{\mathsf{T}}\beta + \epsilon_i,
\end{equation}
where, again, the $\epsilon_i \sim \mathrm{N} \left ( 0, \sigma^2 \right )$ are mutually independent conditional on the predictors \citep[p.~119, 122--127]{imbens_rubin_2015}. We let $y = \left ( y_1, \ldots, y_N \right )^{\mathsf{T}}$, $\epsilon = \left ( \epsilon_1, \ldots, \epsilon_N \right )^{\mathsf{T}}$, and
\begin{equation}
\label{eq:X_matrix}
\mathbf{V} = \begin{pmatrix} v_1^{\mathsf{T}} \\  v_2^{\mathsf{T}} \\ \vdots \\ v_N^{\mathsf{T}} \end{pmatrix}
\end{equation}
denote the linear regression model matrix for equation (\ref{eq:linear_model_observed_outcomes}), so that $y = \mathbf{V}\beta + \epsilon$. The four key assumptions of this model are linearity with respect to the regression coefficients $\beta$, independent random errors, constant variance $\sigma^2$ of the potential outcomes conditional on the predictors, and Normal random errors. We will always assume that $\mathbf{V}^{\mathsf{T}}\mathbf{V}$ is invertible so as to infer $\beta$. 

The causal estimand $\tau$ corresponds to $\beta_1$ in $\beta$, and can be inferred in a consistent manner via the model in equation (\ref{eq:linear_model_observed_outcomes}) \citep{imbens_rubin_2015}. Inference can be performed via ordinary least squares by finding the values that minimize the loss function
\begin{equation}
\label{eq:ols_loss_function}
L \left ( \beta \right ) = \sum_{i=1}^N \left ( y_i - v_i^{\mathsf{T}}\beta \right )^2 = \left ( y - \mathbf{V}\beta \right )^{\mathsf{T}} \left ( y - \mathbf{V}\beta \right ).
\end{equation}
\noindent Alternatively, $\beta$ and $\sigma^2$ can be inferred via maximum likelihood estimation by setting the loss function as the negative log-likelihood function 
\begin{equation}
\label{eq:ols_negative_log_likelihood}
l \left ( \beta, \sigma^2 \right ) = \frac{N}{2}\mathrm{log} \left ( \sigma^2 \right ) + \frac{1}{2\sigma^2}\sum_{i=1}^N  \left ( y_i - v_i^{\mathsf{T}} \beta \right )^2 = \frac{N}{2}\mathrm{log} \left ( \sigma^2 \right ) + \frac{1}{2\sigma^2} \left ( y - \mathbf{V}\beta \right )^{\mathsf{T}} \left ( y - \mathbf{V} \beta \right ).
\end{equation}
The resulting estimator of $\beta$ under either approach is $\hat{\beta} = \left ( \mathbf{V}^{\mathsf{T}} \mathbf{V} \right )^{-1} \mathbf{V}^{\mathsf{T}}y$, and this estimator is unbiased \citep{weisberg_2014}. The maximum likelihood estimator of $\sigma^2$ is biased in finite samples, and the unbiased estimator is $\widehat{\sigma^2} = \sum_{i=1}^N \left ( y_i - \hat{y}_i^2 \right )^2/(N-M-1)$ where $\hat{y}_i = v_i^{\mathsf{T}}\hat{\beta}$ is the predicted outcome for participant $i$. Both estimators of $\sigma^2$ involve the sum of squared residuals, with the residual for participant $i$ defined as $e_i = y_i - \hat{y}_i$. As proven by \citet[p.~122--127]{imbens_rubin_2015}, $\hat{\beta}_1$ is a \citet{fisher_1922} consistent estimator of $\tau$. 

The super-population perspective that we adopt is distinct from the finite-population perspective in which random sampling of participants is not considered, potential outcomes are fixed (potentially unobserved) numbers, and covariates are not probabilistic. As demonstrated by \citet{freedman2008randomization_2, freedman2008randomization_1}, the finite-population perspective introduces the complication of finite-sample bias for estimation of $\tau$ via $\hat{\beta}_1$ because the incorporation of covariates requires the estimation of additional nuisance parameters (i.e., the regression coefficients for the covariates) \citep[p.~113]{imbens_rubin_2015}. However, this issue does not arise under the super-population perspective.   

The vector of predicted outcomes for the participants is $\hat{y} = \mathbf{V} \left ( \mathbf{V}^{\mathsf{T}} \mathbf{V} \right )^{-1}\mathbf{V}^{\mathsf{T}} y$. The matrix $\mathbf{H} = \mathbf{V} \left ( \mathbf{V}^{\mathsf{T}} \mathbf{V} \right )^{-1}\mathbf{V}^{\mathsf{T}}$, referred to as the ``hat matrix'', will feature in the theory for Weighted PROCOVA in Section \ref{sec:frequentist_properties}. We let $h_{ij}$ denote entry $(i,j)$ of $\mathbf{H}$. The hat matrix is idempotent and symmetric. Each diagonal entry $h_{ii}$ corresponds to the leverage for participant $i$ \citep{weisberg_2014}, with $h_{ii} = \sum_{j=1}^N h_{ij}^2 = h_{ii}^2 + \sum_{j \neq i}^N h_{ij}^2 $, so that $0 \leq h_{ii} \leq 1$. All of the $h_{ij}$ approach zero when all the $h_{ii}$ approach zero, or when all the $h_{ii}$ approach $1$ \citep{meloun_2011}.

\subsection{Weighted Linear Regression}
\label{sec:WLS}

Weighted linear regression addresses heteroskedasticity in the outcomes. The potential outcome generation mechanism in this case is a modification of that in equation (\ref{eq:linear_model_potential_outcomes}) in which the $\epsilon_i(w) \sim \mathrm{N} \left ( 0, \sigma_i^2 \right )$ with each participant $i$ having their individual variance parameter $\sigma_i^2 > 0$. The observed outcomes are similarly modeled as in equation (\ref{eq:linear_model_observed_outcomes}), with the modification that the $\epsilon_i \sim \mathrm{N} \left ( 0, \sigma_i^2 \right )$. 

If the $\sigma_i^2$ were known, then the Best Linear Unbiased Estimator of $\beta$ is obtained via weighted least squares by finding the values that minimize the loss function 
\begin{equation}
\label{eq:wls_loss_function}
L \left ( \beta \right ) = \sum_{i=1}^N \left ( y_i - v_i^{\mathsf{T}} \beta \right )^2/\sigma_i^2 = \left ( y - \mathbf{V}\beta \right )^{\mathsf{T}} \Omega^{-1} \left ( y - \mathbf{V} \beta \right ),
\end{equation}
where $\Omega$ is the $N \times N$ diagonal matrix whose $i$th diagonal entry is $\Omega_{ii} = \sigma_i^2$ \citep[p.~3]{romano_resurrecting_2016}. This loss function is the negative log-likelihood function when the $\sigma_i^2$ are known. Consideration of inference for $\beta$ in the case of known $\sigma_i^2$ motivates a ``plug-in'' inferential strategy in the case of unknown $\sigma_i^2$ that can be distinct from maximum likelihood inference. Specifically, one first estimates $\sigma_i^2$ based on the RCT data, then plugs those estimates $\widehat{\sigma_i^2}$ into $\Omega$ in equation (\ref{eq:wls_loss_function}), and finally derives the estimator $\hat{\beta}$ that minimizes the loss function under fixed $\widehat{\sigma_i^2}$ \citep{davidian_carroll_1987, romano_resurrecting_2016}. The resulting estimator is
\begin{equation}
\label{eq:wls_estimator}
\hat{\beta} = \left ( \mathbf{V}^{\mathsf{T}} \widehat{\Omega}^{-1} \mathbf{V} \right )^{-1} \mathbf{V}^{\mathsf{T}} \widehat{\Omega}^{-1} y,
\end{equation}
\noindent where $\widehat{\Omega}$ is the $N \times N$ diagonal matrix whose $i$th diagonal entry is $\widehat{\Omega}_{ii} = \widehat{\sigma_i^2}$. 

Under the variance modeling approach for weighted linear regression, a transformation of $\sigma_i^2$ is modeled as a function of the covariates to obtain the estimators $\widehat{\sigma_i^2}$ that will be input into equation (\ref{eq:wls_loss_function}). This model is also referred to as the skedastic function model \citep[p.~2]{romano_resurrecting_2016}. We let $u_i \in \mathbb{R}^{L+1}$ denote the predictor vector for the skedastic function model, with the first entry in each $u_i$ equal to $1$. A natural transformation of the $\sigma_i^2$ for the skedastic function model is the logarithmic transformation, with the model being $\mathrm{log} \left ( \sigma_i^2 \right ) = u_i^{\mathsf{T}}\gamma$ for an unknown coefficient vector $\gamma \in \mathbb{R}^{L+1}$. Two other common transformations are the square root transformation $\sqrt{\sigma_i^2} = u_i^{\mathsf{T}}\gamma$ and the identity transformation $\sigma_i^2 = u_i^{\mathsf{T}}\gamma$. The latter two transformations complicate inferences because the $u_i^{\mathsf{T}}\gamma$ are restricted to be positive. A disadvantage of the logarithmic transformation is that the resulting inferences are sensitive to outliers \citep[p.~1083]{davidian_carroll_1987}. As the logarithmic transformation is considered more interpretable in practice \citep[p.~4]{romano_resurrecting_2016}, we utilize it in all the skedastic function models that we consider for weighted linear regression. Under the logarithmic transformation and an estimator $\hat{\gamma}$ for $\gamma$, the estimator for $\sigma_i^2$ is $\widehat{\sigma_i^2} = \mathrm{exp} \left ( u_i^{\mathsf{T}} \hat{\gamma} \right )$.

Maximum likelihood-based inferences for the treatment effect in the case of heteroskedasticity, as through the optimization of the negative log-likelihood function
\begin{equation}
\label{eq:wls_negative_log_likelihood_transformation}
l \left ( \beta, \gamma \right ) = \sum_{i=1}^N u_i^{\mathsf{T}} \gamma + \sum_{i=1}^N \mathrm{exp} \left ( -u_i^{\mathsf{T}}\gamma  \right ) \left ( y_i - v_i^{\mathsf{T}} \beta \right )^2,
\end{equation}
can be advantageous due to the existing theory for maximum likelihood estimation and the asymptotic efficiency of the parameter estimators when the model is well-specified. However, it has a significant disadvantage in that it is sensitive to model misspecification \citep[p.~1083]{davidian_carroll_1987}. Furthermore, optimization of equation (\ref{eq:wls_negative_log_likelihood_transformation}) is neither computationally straightforward nor insightful in terms of enabling one to understand how the quality of the treatment effect inferences is related to the predictors $u_i$. \citet[p.~1082--1084]{davidian_carroll_1987} describe an approach that is less sensitive to model misspecification, and more computationally efficient and insightful than maximum likelihood estimation. This approach involves iterative modeling of a transformation of the squared residuals $e_i^2 = \left ( y_i - \hat{y}_i \right )^2$ from linear regression. Specifically, under this approach and adoption of the logarithmic transformation, the $\mathrm{log} \left ( e_i^2 \right)$ are modeled via a separate regression
\begin{equation}
\label{eq:skedastic_function_model_residuals}
\mathrm{log} \left ( e_i^2 \right ) = u_i^{\mathsf{T}}\gamma + \xi_i,    
\end{equation}
with $\xi_i \sim \mathrm{N} \left ( 0, \psi^2 \right )$ being mutually independent. Alternatively, the loss function is
\begin{equation}
\label{eq:weighted_procova_loss_function}
L \left ( \beta, \gamma \right ) =  \sum_{i=1}^N \left [ \mathrm{log} \left \{ \left ( y_i - v_i^{\mathsf{T}} \beta \right )^2 \right \} - u_i^{\mathsf{T}} \gamma \right ]^2 + \sum_{i=1}^N \mathrm{exp} \left ( -u_i^{\mathsf{T}} \gamma \right ) \left ( y_i - v_i^{\mathsf{T}} \beta \right )^2,
\end{equation}
\noindent and instead of calculating the $(\beta, \gamma)$ values that simultaneously minimize this loss function one divides equation (\ref{eq:weighted_procova_loss_function}) into two components for iterative estimation of one parameter vector conditional on the other. More formally, we let $f_i \left ( \beta, \gamma \right ) = \mathrm{exp} \left ( - u_i^{\mathsf{T}} \gamma \right ) \left ( y_i - v_i^{\mathsf{T}} \beta \right )^2$, so that equation (\ref{eq:weighted_procova_loss_function}) is
\begin{equation}
\label{eq:weighted_procova_loss_function_reformulation}
L \left ( \beta, \gamma \right ) = \sum_{i=1}^N \left [ \mathrm{log} \left \{ f_i \left ( \beta, \gamma \right ) \right \} \right ]^2 + \sum_{i=1}^N f_i \left ( \beta, \gamma \right ).
\end{equation}
\noindent By means of equation (\ref{eq:weighted_procova_loss_function_reformulation}), for any estimates $\gamma^*$ of $\gamma$ and $\beta^*$ of $\beta$ we have that minimization of $\sum_{i=1}^N f_i \left ( \beta, \gamma^* \right )$ is a linear regression problem for $\beta$, and that minimization of $\sum_{i=1}^N \left [ \mathrm{log} \left \{ f_i \left ( \beta^*, \gamma \right ) \right \} \right ]^2$ is a linear regression problem for $\gamma$. Thus, a computationally efficient approach to estimate $(\beta, \gamma)$ is to iterate between the two regression problems given by these two components of equation (\ref{eq:weighted_procova_loss_function_reformulation}). The predictions $\mathrm{exp} \left ( u_i^{\mathsf{T}} \hat{\gamma} \right )$ obtained from the estimator $\hat{\gamma}$ of $\gamma$ under this approach then define the weights that will be input in equation (\ref{eq:wls_loss_function}) according to $\widehat{\Omega}_{ii} = \mathrm{exp} \left ( u_i^{\mathsf{T}} \hat{\gamma} \right )$. 

In practice, one initializes the iterative process by setting $\gamma^{(0)} = ( 0, \ldots, 0 )^{\mathsf{T}}$. In the first iteration under this initialization the estimator for $\beta$ is $\hat{\beta}^{(1)} = ( \mathbf{V}^{\mathsf{T}} \mathbf{V} )^{-1} \mathbf{V}^{\mathsf{T}} y$, and the updated estimator for $\gamma$ is $\hat{\gamma}^{(1)} = ( \mathbf{U}^{\mathsf{T}} \mathbf{U} )^{-1} \mathbf{U}^{\mathsf{T}} \mathrm{log} \{ ( y - \mathbf{H}y )^2 \}$, where $\mathbf{U}$ is the $N \times (L+1)$ matrix whose $i$th row is $u_i^{\mathsf{T}}$ and the logarithmic operation in this expression is performed entry-wise. These two steps can continue in multiple iterations if desired, with the mapping function and solution path for the procedure ultimately being a function of $\gamma$. Specifically, if we let $\mathcal{M}: \mathbb{R}^{M+1} \times \mathbb{R}^{L+1} \rightarrow \mathbb{R}^{M+1} \times \mathbb{R}^{L+1}$ denote the mapping function $\mathcal{M} ( \hat{\beta}^{(t-1)}, \hat{\gamma}^{(t-1)} ) = ( \hat{\beta}^{(t)}, \hat{\gamma}^{(t)} )$ from iteration $t-1$ to $t$, then $\hat{\beta}^{(t)} = \{ \mathbf{V}^{\mathsf{T}} ( \widehat{\Omega}^{(t-1)})^{-1} \mathbf{V} \}^{-1} \mathbf{V}^{\mathsf{T}} ( \widehat{\Omega}^{(t-1)} )^{-1} y$ and $\hat{\gamma}^{(t)} = \left ( \mathbf{U}^{\mathsf{T}} \mathbf{U} \right )^{-1} \mathbf{U}^{\mathsf{T}}e^{(t-1)}$ where $\widehat{\Omega}^{(t-1)}$ is a $N \times N$ diagonal matrix whose $i$th diagonal entry is $\mathrm{exp} \left ( v_i^{\mathsf{T}}\hat{\gamma}^{(t-1)} \right )$ and $e^{(t-1)}$ is a $N \times 1$ vector whose $i$th entry is $e_i^{(t-1)} = y_i - v_i^{\mathsf{T}} \{ \mathbf{V}^{\mathsf{T}} ( \widehat{\Omega}^{(t-1)} )^{-1} \mathbf{V} \}^{-1} \mathbf{V}^{\mathsf{T}} ( \widehat{\Omega}^{(t-1)} )^{-1} y$. If $\hat{\gamma}^{(t-1)}$ is the maximum likelihood estimate of $\gamma$ from equation (\ref{eq:wls_negative_log_likelihood_transformation}), then $\hat{\beta}^{(t)}$ will be the maximum likelihood estimate of $\beta$.

The computation involved with the approach of \citet{davidian_carroll_1987} is ordinary least squares, which is simpler than the optimization of equation (\ref{eq:wls_negative_log_likelihood_transformation}). Furthermore, as we demonstrate via simulation in Section \ref{sec:simulation_studies_results_boost}, the quality of the regression model in equation (\ref{eq:skedastic_function_model_residuals}) as measured by its coefficient of determination yields insights on the ability of the skedastic function model to reduce the variance of the treatment effect estimator. Similarly, the consideration of squared residuals in this approach is insightful as it indicates that participants with smaller squared residuals should be given more weight than participants with larger squared residuals, and that if the reverse were to occur then the weighted linear regression would result in lower-quality inferences for the treatment effect compared to linear regression. This procedure can be iterative because new residuals can be calculated from a particular weighted least squares solution to fit a model for the logarithmic transformation of the new squared residuals, which yields new weights as inputs for equation (\ref{eq:wls_loss_function}). This iteration is helpful when the estimates for $\beta$ obtained from weighted least squares differ from those obtained via ordinary least squares \citep{davidian_carroll_1987}.

\subsection{The Distributions of Residuals Under Heteroskedasticity}
\label{sec:residuals}

The skedastic function modeling approach that we adopt in Weighted PROCOVA involves modeling of transformations of residuals from linear regression. We summarize the distributional properties for the residuals, squared residuals, and logarithmic transformed squared residuals. These properties will be applied in Section \ref{sec:frequentist_properties} to establish the finite-sample and asymptotic properties of $\hat{\gamma}$ and $\hat{\beta}$. Derivations of these properties are in Appendix \ref{sec:appendix_residuals_proofs}.

The residuals of a linear regression model in the case of heteroskedasticity follow a Multivariate Normal distribution with $\left [ e_i \mid \mathbf{V} \right ] \sim \mathrm{N} \left ( 0, \displaystyle \left ( 1 - h_{ii} \right )^2 \sigma_i^2 + \sum_{k \neq i} h_{ik}^2 \sigma_k^2 \right )$ and $\mathrm{Cov} \left ( e_i, e_j \mid \mathbf{V} \right ) = \displaystyle \sum_{k \neq i, j} h_{ik}h_{jk}\sigma_k^2 - h_{ij} \left ( 1 - h_{ii} \right )\sigma_i^2 - h_{ij} \left ( 1 - h_{jj} \right )\sigma_j^2$. The squared residuals are correlated scaled $\chi_1^2$ random variables. Specifically, letting $\left [ Z_i \mid \mathbf{V} \right ] \sim \mathrm{N} \left ( 0, 1 \right )$ be Normal random variables with
\begin{align*}
\mathrm{Cov} \left ( Z_i, Z_j \mid \mathbf{V} \right ) &= \left ( \left ( 1 - h_{ii} \right )^2 \sigma_i^2 + \sum_{k \neq i} h_{ik}^2 \sigma_k^2 \right )^{-1/2} \left ( \left ( 1 - h_{jj} \right )^2 \sigma_j^2 + \sum_{k \neq j} h_{jk}^2 \sigma_k^2 \right )^{-1/2} \\ 
& \ \ \ \times \left \{ \displaystyle \sum_{k \neq i, j} h_{ik}h_{jk}\sigma_k^2 - h_{ij} \left ( 1 - h_{ii} \right )\sigma_i^2 - h_{ij} \left ( 1 - h_{jj} \right ) \sigma_j^2 \right \},
\end{align*}
\noindent 
then $ \left [ e_i^2 \mid \mathbf{V} \right ] \sim \left [ \left ( \left ( 1 - h_{ii} \right )^2 \sigma_i^2 + \displaystyle \sum_{k \neq i} h_{ik}^2 \sigma_k^2 \right ) Z_i^2 \mid \mathbf{V} \right ]$. Furthermore, the $Z_i^2$ and $e_i^2$ are respectively correlated according to
\begin{align*}
\mathrm{Cov} \left ( Z_i^2, Z_j^2 \mid \mathbf{V} \right ) &= \left ( \left ( 1 - h_{ii} \right )^2 \sigma_i^2 + \sum_{k \neq i} h_{ik}^2 \sigma_k^2 \right )^{-1} \left ( \left ( 1 - h_{jj} \right )^2 \sigma_j^2 + \sum_{k \neq j} h_{jk}^2 \sigma_k^2 \right )^{-1} \\
& \ \ \ \times \left [ 2 \left \{ \displaystyle \sum_{k \neq i, j} h_{ik}h_{jk}\sigma_k^2 - h_{ij} \left ( 1 - h_{ii} \right )\sigma_i^2 - h_{ij} \left ( 1 - h_{jj} \right ) \sigma_j^2 \right \}^2 \right ],
\end{align*}
and $\mathrm{Cov} \left ( e_i^2, e_j^2 \mid \mathbf{V} \right ) = 2 \left \{ \displaystyle \sum_{k \neq i, j} h_{ik}h_{jk}\sigma_k^2 - h_{ij} \left ( 1 - h_{ii} \right )\sigma_i^2 - h_{ij} \left ( 1 - h_{jj} \right ) \sigma_j^2 \right \}^2$.

The logarithmic transformed squared residuals are correlated and distributed according to $\left [ \mathrm{log} \left ( e_i^2 \right ) \mid \mathbf{V} \right ] \sim \mathrm{log} \left ( \left ( 1 - h_{ii} \right )^2 \sigma_i^2 + \displaystyle \sum_{k \neq i} h_{ik}^2 \sigma_k^2 \right ) + \left [ \mathrm{log} \left ( Z_i^2 \right ) \mid \mathbf{V} \right ]$. Their expectations and variances are $\mathbb{E} \left \{ \mathrm{log} \left ( e_i^2 \right ) \mid \mathbf{V} \right \} = \mathrm{log} \left ( \left ( 1 - h_{ii} \right )^2 \sigma_i^2 + \displaystyle \sum_{k \neq i} h_{ik}^2 \sigma_k^2 \right ) - (\gamma_{\mathrm{EM}} + \mathrm{log} \ 2 )$, where $\gamma_{\mathrm{EM}} \approx 0.577$ denotes the Euler-Mascheroni constant (and does not correspond to skedastic function model coefficients), and $\mathrm{Var} \left \{ \mathrm{log} \left ( e_i^2 \right ) \mid \mathbf{V} \right \} = \pi^2/2$.

\subsection{Prognostic Covariate Adjustment}
\label{sec:ML}

Covariate adjustments for RCTs can be performed using transformations of covariates. A transformation that can yield significant variance reduction for the treatment effect estimator is the expected control outcome conditional on the covariates \citep[p.~5--6]{schuler2022increasing}. The application of an optimized AI algorithm, trained on historical control data, to define this transformation can enable one to obtain valid, efficient, and powerful treatment effect inferences via a single adjustment. 

The PROCOVA methodology of \citet{schuler2022increasing} is a linear regression analysis of a RCT in which a single covariate for adjustment, referred to as the prognostic score, is derived via a trained AI algorithm. This methodology consists of five essential steps. First is the creation of the digital twin generator (DTG) by training the AI algorithm on historical control data. The sole inputs for the trained DTG are participant covariates and baseline measures  $x_i$. For each participant $i$ and timepoint after treatment assignment, the output of the DTG is a probability distribution for their control potential outcome. We consider a fixed timepoint post-treatment assignment throughout, and represent the probability distribution by the cumulative distribution function (CDF) $F_{i,0}: \mathbb{R} \rightarrow (0,1)$. In the second step, each CDF $F_{i,0}$ from the DTG is used to calculate the prognostic score $m_i = \int_{-\infty}^{\infty} r dF_{i,0}(r)$ for each participant in the RCT. In practice, prognostic scores are calculated via Monte Carlo, i.e., by obtaining independent samples from each distribution $F_{i,0}$ and calculating the averages of the respective samples. The third step is a linear regression analysis of the RCT in which the vector of predictors $v_i$ for participant $i$ contains $w_i$ and $m_i$. The treatment effect estimator is obtained in the fourth step via the fitted regression model and corresponds to $\hat{\beta}_1$. Finally, in the fifth step, the standard error of the treatment effect is calculated.

It is important to recognize that the first step does not involve any RCT data, is completely pre-specified for the RCT, and that the trained DTG has as its sole inputs the participants' covariates and baseline measures. As such, it does not introduce any bias in treatment effect inferences for the RCT. The DTG is applied to both treated and control participants in a RCT in the second step, effectively summarizing their large number of covariates into a scalar that is correlated with the observed outcomes. In the third step, one could consider an interaction between the treatment indicator and prognostic score, and include other covariates of interest (as well as their interactions with the treatment indicator and prognostic score) as predictor variables. \citet{schuler2022increasing} calculated HC standard errors in the fifth step, specifically, via the HC1 method \citep[p.~6--7]{romano_resurrecting_2016}, based on regulatory guidance regarding covariate adjustment for RCTs \citep{ema_adjusting_2015, food_and_drug_administration_adjusting_2023}. The EMA qualification of this procedure also notes the use of HC standard errors \citep{ema_procova_2022}.

Simulation studies were conducted by \citet[p.~8--10]{schuler2022increasing} to compare and contrast the mean squared error of the treatment effect estimator under PROCOVA and other methods. Across six different settings that were defined in terms of the distributions of covariates and potential outcomes, they observed that covariate adjustment with prognostic scores consistently yields smaller mean squared errors compared to the other methods. These observations demonstrate their theoretical results that PROCOVA attains the minimum possible asymptotic variance among a class of estimators, the uncertainty in the treatment effect estimator is minimized when the prognostic model predicts the participants' control potential outcomes, and that one can realize gains in efficiency even with imperfect prognostic models or in the presence of heterogeneous effects. They also established that PROCOVA decreases the variance of the treatment effect estimator proportional to the squared correlation of the prognostic score with the outcome while guaranteeing unbiasedness, control of Type I error rates, and desirable confidence interval coverage. This enabled \citet[p.~7--8]{schuler2022increasing} to derive a sample size calculation for the design of smaller trials that maintain their original power level via PROCOVA. The variance and sample size reductions achievable via PROCOVA were illustrated via a reanalysis of a previously reported clinical trial studying the effect of docosahexaenoic acid (DHA) on cognition in patients with Alzheimer’s disease \citep{quinn_et_al_2010}, which we shall describe in more detail in Section \ref{sec:case_studies}. Their reanalysis indicated that PROCOVA yields the same inferences as previous statistical models, with comparable (if not smaller) standard errors and narrower confidence intervals. \citet[p.~12]{schuler2022increasing} also performed a prospective power analysis based on the case study (without using any data from the trial), and demonstrated that PROCOVA could lead to a $20\%$ reduction of the control arm sample size while ensuring $80\%$ power for the test of the treatment effect.

PROCOVA unites AI and historical control data to decrease uncertainty in treatment effect inferences for RCTs. The DTG can be implemented by any mathematical or computational means, and \emph{a priori} variance reduction and power boost are cast in terms of goodness-of-fit metrics for DTGs on validation data. The benefits of adjusting via the prognostic score increase as a function of the correlation between the prognostic score and the outcomes. In addition, adjusting based solely on the prognostic score instead of the entire predictor vector $v_i$ uses fewer degrees of freedom, which in turn can increase power when $M$ is large. PROCOVA can be semiparametric efficient, so that the power of a trial using PROCOVA could be greater than or equal to the power of any other trial design that controls the Type I error rate \citep{schuler2022increasing}.

\section{Weighted Prognostic Covariate Adjustment}
\label{sec:methods}

\subsection{Adjustments for Prognostic Scores and Personalized Precisions}
\label{sec:methods_objective}

Weighted PROCOVA utilizes an additional feature that is uniquely derived from the DTG, beyond the prognostic score, to address heteroskedasticity and provide a more efficient and powerful analysis in a RCT. For each participant $i = 1, \ldots, N$ at a fixed timepoint post-treatment, we calculate $s_i^2 = \int_{-\infty}^{\infty} \left ( r - m_i \right )^2 dF_{i,0}(r)$ and utilize $\mathrm{log} \left ( s_i^2 \right )$ as the sole predictor for the skedastic function model. We refer to $ \left ( s_i^2 \right )^{-1}$ as the personalized precision for participant $i$, and recognize that the predictor for the skedastic function model is essentially the logarithm of the personalized precision. Similar to prognostic scores, we calculate the $s_i^2$ via Monte Carlo by obtaining independent samples from each distribution $F_{i,0}$ and calculating the variances of the respective samples. Given the point estimates $\hat{\gamma} = \left ( \mathbf{U}^{\mathsf{T}}\mathbf{U} \right )^{-1} \mathbf{U}^{\mathsf{T}} \mathrm{log} \left \{ \left ( y - \mathbf{Y} y \right )^2 \right \}$ as defined in Section \ref{sec:WLS} for $u_i = \left ( 1, \mathrm{log} \left ( s_i^2 \right ) \right )^{\mathsf{T}}$, we then estimate $\widehat{\sigma}_i^2 = e^{\hat{\gamma}_0} \left ( s_i^2 \right )^{\hat{\gamma}_1}$ and construct $\widehat{\Omega}$ accordingly to infer the treatment effect in terms of $\beta_1$. In order to satisfy regulatory guidelines, inferences on $\beta_1$ are based on the standard error of $\hat{\beta}_1$ as calculated using the combination of the HC1 method with $\widehat{\Omega}$. The prognostic score $m_i$ remains as a covariate in $v_i$. Hence, Weighted PROCOVA uses two features from the DTG, with $m_i$ used in the model for $\mathbb{E} \left ( y_i \mid \mathbf{U}, \mathbf{V} \right )$ and $s_i^2$ used in the model for $\mathrm{Var} \left ( y_i \mid \mathbf{U}, \mathbf{V} \right )$. The estimated variance $\widehat{\sigma}_i^2$ for each participant is proportional to a power of their $s_i^2$ as derived from the DTG. If the DTG is well-calibrated we can expect $\hat{\gamma}_0 \approx 0$ and $\hat{\gamma}_1 \approx 1$, so that the $(i,i)$ entry of $\widehat{\Omega}$ is effectively $s_i^2$. 

Diagnostics can be performed to help one recognize potentially undesirable statistical regimes for the application of Weighted PROCOVA. One set of diagnostics involves plots or statistical tests on the residuals from PROCOVA to evaluate the evidence for heteroskedasticity. Weighted PROCOVA would still be valid under homoskedasticity, but it would be preferable to directly set $\gamma_1 = 0$. Another diagnostic is an examination of the variance of the $s_i^2$. Specifically, when the variance of the $s_i^2$ is small then this feature may fail to provide sufficient information for Weighted PROCOVA to improve the quality of treatment effect inferences. A third diagnostic is the examination of the joint distributions of the inferred weights, personalized precisions, and residuals to confirm that large weights are attached to participants for whom the model is more confident with respect to predicting their outcomes. Weighted PROCOVA is expected to improve the quality of treatment effect inferences when participants with small residuals have large inferred weights. Alternatively, the quality of inferences is expected to decrease when participants with large residuals have large inferred weights. Finally, it is important to diagnose whether $\gamma_1$ is inferred to be negative. In this case, participants with small personalized precision could be given more weight even if their squared residuals are large. This would be undesirable as one should give more weight to participants with large personalized precisions and small squared residuals. 

A key advantage of Weighted PROCOVA is in its incorporation of generative AI. Generative AI can effectively summarize the relationships between the variation in the outcomes and all the covariates to construct a scalar feature $s_i^2$ that is optimized to explain the variation. This advantageous feature can free one from having to analyze all the entries in a high-dimensional covariate vector or consider a large number of coefficients when specifying a skedastic function model. This is desirable from a regulatory perspective, as guidance from regulatory agencies indicate that the number of covariates used in the statistical analysis of a RCT should be kept as small as possible. Furthermore, it facilitates the interpretability of the DTG and methodology, especially for non-statisticians. One can gain useful insights in a straightforward manner by comparing the $e_i^2$ and $s_i^2$, and considering the inferences on $\gamma_0$ and $\gamma_1$.

\subsection{Frequentist Properties of Weighted PROCOVA}
\label{sec:frequentist_properties}

We establish frequentist properties of the skedastic function model coefficient estimators $\hat{\gamma}$ and the regression coefficient estimators $\hat{\beta}$ under two perspectives. The first is the finite-sample perspective conditional on $\mathbf{U}$ and $\mathbf{V}$. The second is the asymptotic perspective, integrating over the distributions of the $u_i$ and $v_i$ and taking $N \rightarrow \infty$. For both cases our derivations are performed for the first iteration of Weighted PROCOVA, so as to facilitate the algebra. Under the finite-sample perspective, we calculate $\mathbb{E} \left ( \hat{\gamma} \mid \mathbf{U}, \mathbf{V} \right )$, $\mathbb{E} \left ( \hat{\beta} \mid \mathbf{U}, \mathbf{V} \right )$, and $\mathrm{Var} \left ( \hat{\gamma} \mid \mathbf{U}, \mathbf{V} \right )$ assuming only that the model in equation (\ref{eq:linear_model_observed_outcomes}), with $\epsilon_i \sim \mathrm{N} \left ( 0, \sigma_i^2 \right )$ independently, is correctly specified. We do not assume that the model in equation (\ref{eq:skedastic_function_model_residuals}) is correctly specified. Under the asymptotic perspective, we identify conditions on the predictors and the participant-level variances that are sufficient for the convergence of the expectation of $\hat{\gamma}$. The asymptotic covariance matrix of $\hat{\beta}$, which we use to characterize the variance reduction of Weighted PROCOVA compared to PROCOVA, is derived based on the work of \citet{romano_resurrecting_2016} and described in Section \ref{sec:frequentist_properties_variance_reduction}. All proofs are in Appendix \ref{sec:appendix_frequentist_proofs}.
 
The results on the finite-sample conditional expectations and variances are summarized below. 

\begin{prop}
\label{lem:conditional_expectations}
For the estimators $\hat{\gamma} = \left ( \mathbf{U}^{\mathsf{T}}\mathbf{U} \right )^{-1} \mathbf{U}^{\mathsf{T}} \mathrm{log} \left \{ \left ( y - \mathbf{H}y \right )^2 \right \}$,
\begin{align*}
\mathbb{E} \left ( \hat{\gamma}_0 \mid \mathbf{U}, \mathbf{V} \right ) &= 
\left [ N \displaystyle \sum_{k=1}^N \left \{ \mathrm{log} \left ( s_k^2 \right ) - \overline{\mathrm{log} \left ( s^2 \right )} \right \}^2 \right ]^{-1} \\
& \ \ \ \times \Bigg{(} \left [ \displaystyle \sum_{k=1}^N \left \{ \mathrm{log} \left ( s_k^2 \right ) \right \}^2 \right ] \displaystyle \sum_{i=1}^N \mathrm{log} \left \{ \left ( 1 - h_{ii} \right )^2 \sigma_i^2 + \displaystyle \sum_{k \neq i} h_{ik}^2 \sigma_k^2  \right \} \\ 
& \ \ \ \ \ \ \ - N\overline{\mathrm{log} \left ( s^2 \right )}  \displaystyle \sum_{i=1}^N \left [ \mathrm{log} \left ( s_i^2 \right ) \mathrm{log} \left \{ \left ( 1 - h_{ii} \right )^2 \sigma_i^2 + \displaystyle \sum_{k \neq i} h_{ik}^2 \sigma_k^2 \right \} \right ] \Bigg{)} - \left ( \gamma_{\mathrm{EM}} + \mathrm{log} 2 \right ),
\end{align*}

\begin{align*}
\mathbb{E} \left ( \hat{\gamma}_1 \mid \mathbf{U}, \mathbf{V} \right ) &= \left [ \displaystyle \sum_{k=1}^N \left \{ \mathrm{log} \left ( s_k^2 \right ) - \overline{\mathrm{log} \left ( s^2 \right )} \right \}^2 \right ]^{-1} \\
& \ \ \ \times \Bigg{(} \displaystyle \sum_{i=1}^N \left [ \left \{ \mathrm{log} \left ( s_i^2 \right ) - \overline{\mathrm{log} \left ( s^2 \right )} \right \} \mathrm{log} \left \{  \left ( 1 - h_{ii} \right )^2 \sigma_i^2 + \sum_{k \neq i} h_{ik}^2 \sigma_k^2 \right \} \right ] \Bigg{)},
\end{align*}
where $\gamma_{\mathrm{EM}} \approx 0.577$ is the Euler-Mascheroni constant.
\end{prop}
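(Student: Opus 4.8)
The plan is to exploit the fact that, conditional on $\mathbf{U}$ and $\mathbf{V}$, the estimator $\hat\gamma$ is a \emph{linear} image of the vector of log--squared residuals. Writing $e = (I - \mathbf{H})y$ for the first-iteration ordinary least squares residuals and $\log(e^2)$ for the vector with entries $\log(e_i^2)$, we have $\hat\gamma = (\mathbf{U}^{\mathsf{T}}\mathbf{U})^{-1}\mathbf{U}^{\mathsf{T}}\log(e^2)$, and the matrix $(\mathbf{U}^{\mathsf{T}}\mathbf{U})^{-1}\mathbf{U}^{\mathsf{T}}$ is deterministic once we condition on $\mathbf{U}$ (hence on the $s_i^2$) and $\mathbf{V}$ (hence on $\mathbf{H}$ and all the $h_{ij}$). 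First I would therefore pass the conditional expectation through this fixed linear map to obtain $\mathbb{E}(\hat\gamma\mid\mathbf{U},\mathbf{V}) = (\mathbf{U}^{\mathsf{T}}\mathbf{U})^{-1}\mathbf{U}^{\mathsf{T}}\,\mathbb{E}\{\log(e^2)\mid\mathbf{U},\mathbf{V}\}$, which reduces the problem to substituting the componentwise expectation already recorded in Section \ref{sec:residuals}.

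Next I would insert the distributional result from Section \ref{sec:residuals}, namely $\mathbb{E}\{\log(e_i^2)\mid\mathbf{V}\} = \log\{(1-h_{ii})^2\sigma_i^2 + \sum_{k\neq i}h_{ik}^2\sigma_k^2\} - (\gamma_{\mathrm{EM}}+\log 2)$; this expectation is finite because the residuals are Normal (Section \ref{sec:residuals}), so $\log(e_i^2)$ differs from $\log(\chi_1^2)$ only by an additive constant and the integral converges, which justifies moving the expectation inside the map. Writing $a_i = \log(s_i^2)$ with $\bar a = \overline{\log(s^2)}$, and $L_i = \log\{(1-h_{ii})^2\sigma_i^2 + \sum_{k\neq i}h_{ik}^2\sigma_k^2\}$ so that $\mathbb{E}\{\log(e_i^2)\mid\mathbf{V}\} = g_i := L_i - (\gamma_{\mathrm{EM}}+\log 2)$, the remaining task is purely the two--variable least squares algebra: since $\mathbf{U}$ has exactly the two columns $(1,\dots,1)^{\mathsf{T}}$ and $(a_1,\dots,a_N)^{\mathsf{T}}$, I would invert the $2\times 2$ Gram matrix explicitly using $\det(\mathbf{U}^{\mathsf{T}}\mathbf{U}) = N\sum_k(a_k - \bar a)^2$, and form the product against $(\sum_i g_i,\ \sum_i a_i g_i)^{\mathsf{T}}$.

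The slope component then collapses to the familiar simple--regression form $\mathbb{E}(\hat\gamma_1\mid\mathbf{U},\mathbf{V}) = \sum_i(a_i - \bar a)L_i \big/ \sum_k(a_k - \bar a)^2$, since the additive constant $-(\gamma_{\mathrm{EM}}+\log 2)$ is annihilated by the centered weights $\sum_i(a_i - \bar a) = 0$; this matches the claimed expression. For the intercept the same substitution shows that the constant contributes $-(\gamma_{\mathrm{EM}}+\log 2)\cdot[N\sum_k a_k^2 - (\sum_k a_k)^2]/[N\sum_k(a_k-\bar a)^2] = -(\gamma_{\mathrm{EM}}+\log 2)$, which is exactly the trailing offset in the stated formula, while the $L_i$ terms reorganize into $[(\sum_k a_k^2)\sum_i L_i - (\sum_k a_k)\sum_i a_i L_i]/[N\sum_k(a_k-\bar a)^2]$, equal to the displayed expression upon writing $\sum_k a_k = N\bar a$.

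These observations make the argument essentially a bookkeeping exercise rather than a deep one, since the hard distributional input has already been supplied by the residual calculations in Section \ref{sec:residuals}. The only genuinely load-bearing steps are (i) confirming that the conditional expectation passes inside the fixed linear map, which is immediate once one notes the projection matrix is deterministic given $\mathbf{U},\mathbf{V}$ and the componentwise expectations are finite, and (ii) organizing the $2\times 2$ inverse times the two moment sums so that the cross terms assemble into the centered sum $\sum_i(a_i - \bar a)L_i$ and the determinant $N\sum_k(a_k - \bar a)^2$. I expect the main (though modest) obstacle to be the intercept algebra --- tracking how the constant $-(\gamma_{\mathrm{EM}}+\log 2)$ factors back out cleanly and how $\sum_k a_k^2$ and $(\sum_k a_k)^2$ recombine into the determinant --- rather than anything conceptual.
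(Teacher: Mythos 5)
Your proposal is correct and follows essentially the same route as the paper's proof: both exploit linearity of $\hat{\gamma}$ in $\mathrm{log}\left(e_i^2\right)$, substitute the componentwise expectation $\mathbb{E}\left\{\mathrm{log}\left(e_i^2\right) \mid \mathbf{U}, \mathbf{V}\right\} = \mathrm{log}\left\{\left(1-h_{ii}\right)^2\sigma_i^2 + \sum_{k \neq i} h_{ik}^2\sigma_k^2\right\} - \left(\gamma_{\mathrm{EM}} + \mathrm{log}\, 2\right)$ from Section \ref{sec:residuals}, and carry out the explicit $2 \times 2$ Gram-inverse algebra with determinant $N\sum_k \left\{\mathrm{log}\left(s_k^2\right) - \overline{\mathrm{log}\left(s^2\right)}\right\}^2$, with the centered slope weights annihilating the constant and the intercept weights passing it through unchanged. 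The only cosmetic difference is that the paper decomposes $\mathrm{log}\left(e_i^2\right)$ into its deterministic part plus $\mathrm{log}\left(Z_i^2\right)$ before applying the linear map and takes the expectation of the second term afterward, whereas you move the expectation inside first; the algebra is identical.
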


\begin{prop}
\label{lem:covariances}
Let $\left [ \left ( Z_1, \ldots, Z_N \right )^{\mathsf{T}} \mid \mathbf{U}, \mathbf{V} \right ]$ be Multivariate Normal with $\mathbb{E} \left ( Z_i \mid \mathbf{U}, \mathbf{V} \right ) = 0$, $\mathrm{Var} \left ( Z_i \mid \mathbf{U}, \mathbf{V} \right ) = 1$, and
\[
\mathrm{Cov} \left ( Z_i, Z_j \mid \mathbf{U}, \mathbf{V} \right ) = \frac{\displaystyle \sum_{k \neq i, j} h_{ik}h_{jk}\sigma_k^2 - h_{ij} \left ( 1 - h_{ii} \right )\sigma_i^2 - h_{ij} \left ( 1 - h_{jj} \right ) \sigma_j^2 }{\sqrt{\left ( 1 - h_{ii} \right )^2 \sigma_i^2 + \displaystyle \sum_{k \neq i} h_{ik}^2\sigma_k^2} \sqrt{ \left ( 1 - h_{jj} \right )^2 \sigma_j^2 + \displaystyle \sum_{k \neq j} h_{jk}^2 \sigma_k^2}}
\]
for distinct $i, j \in \left \{ 1, \ldots, N \right \}$. Then
\[
\mathrm{Var} \left ( \hat{\gamma}_0 \mid \mathbf{U}, \mathbf{V} \right ) = \frac{\left \{ \overline{\mathrm{log} \left ( s^2 \right )} \right \}^2 \mathrm{Var} \left [  \displaystyle \sum_{i=1}^N \left \{ 1 - \mathrm{log} \left ( s_i^2 \right ) \right \} \mathrm{log} \left ( Z_i^2 \right ) \mid \mathbf{U}, \mathbf{V} \right ]}{\left [ \displaystyle \sum_{i=1}^N \left \{ \mathrm{log} \left ( s_i^2 \right ) - \overline{\mathrm{log} \left ( s^2 \right )} \right \}^2 \right ]^{2}}
\]
\noindent and 
\[
\mathrm{Var} \left ( \hat{\gamma}_1 \mid \mathbf{U}, \mathbf{V} \right ) = \frac{\mathrm{Var} \left [ \displaystyle \sum_{i=1}^N \left \{ \mathrm{log} \left ( s_i^2 \right ) - \overline{\mathrm{log} \left ( s^2 \right )} \right \} \mathrm{log} \left ( Z_i^2 \right ) \mid \mathbf{U}, \mathbf{V} \right ]}{\left [ \displaystyle \sum_{i=1}^N \left \{ \mathrm{log} \left ( s_i^2 \right ) - \overline{\mathrm{log} \left ( s^2 \right )} \right \}^2 \right ]^{2}}.
\]
\end{prop}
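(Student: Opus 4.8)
The plan is to exploit the fact that, conditional on $\mathbf{U}$ and $\mathbf{V}$, the estimator $\hat{\gamma}$ is a deterministic linear map applied to the vector of logarithmic transformed squared residuals. Writing $A = \left ( \mathbf{U}^{\mathsf{T}} \mathbf{U} \right )^{-1} \mathbf{U}^{\mathsf{T}}$ for the $2 \times N$ matrix that depends only on $\mathbf{U}$, we have $\hat{\gamma} = A \, \mathrm{log} \left \{ \left ( y - \mathbf{H} y \right )^2 \right \}$, so the only randomness enters through the vector whose $i$th entry is $\mathrm{log} \left ( e_i^2 \right )$. First I would invoke the representation established in Section \ref{sec:residuals}, namely $\left [ \mathrm{log} \left ( e_i^2 \right ) \mid \mathbf{V} \right ] \sim \mathrm{log} \left ( \left ( 1 - h_{ii} \right )^2 \sigma_i^2 + \sum_{k \neq i} h_{ik}^2 \sigma_k^2 \right ) + \left [ \mathrm{log} \left ( Z_i^2 \right ) \mid \mathbf{V} \right ]$. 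Since the leading term is a deterministic function of $\mathbf{V}$, it contributes nothing to the conditional variance, and hence $\mathrm{Var} \left ( \hat{\gamma} \mid \mathbf{U}, \mathbf{V} \right ) = A \, \mathrm{Var} \left ( \ell \mid \mathbf{U}, \mathbf{V} \right ) A^{\mathsf{T}}$, where $\ell$ denotes the vector whose $i$th entry is $\mathrm{log} \left ( Z_i^2 \right )$ and whose covariance matrix is inherited from the correlation structure of the $Z_i$ stated in the proposition. The two required variances are then the diagonal entries of this $2 \times 2$ quadratic form.

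The second step is to obtain the two rows of $A$ in closed form. Since $u_i = \left ( 1, \mathrm{log} \left ( s_i^2 \right ) \right )^{\mathsf{T}}$, the matrix $\mathbf{U}^{\mathsf{T}} \mathbf{U}$ has entries $N$, $\sum_i \mathrm{log} \left ( s_i^2 \right )$, and $\sum_i \left \{ \mathrm{log} \left ( s_i^2 \right ) \right \}^2$, with determinant $N \sum_k \left \{ \mathrm{log} \left ( s_k^2 \right ) - \overline{\mathrm{log} \left ( s^2 \right )} \right \}^2$. Inverting this $2 \times 2$ matrix and right-multiplying by $\mathbf{U}^{\mathsf{T}}$ yields the slope row of $A$ as $a_{1,i} = \left \{ \mathrm{log} \left ( s_i^2 \right ) - \overline{\mathrm{log} \left ( s^2 \right )} \right \} / \sum_k \left \{ \mathrm{log} \left ( s_k^2 \right ) - \overline{\mathrm{log} \left ( s^2 \right )} \right \}^2$. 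Substituting $a_{1,i}$ into the quadratic form and pulling the scalar denominator out of the variance operator immediately recovers the stated expression for $\mathrm{Var} \left ( \hat{\gamma}_1 \mid \mathbf{U}, \mathbf{V} \right )$, since its square leaves $\mathrm{Var} \left [ \sum_i \left \{ \mathrm{log} \left ( s_i^2 \right ) - \overline{\mathrm{log} \left ( s^2 \right )} \right \} \mathrm{log} \left ( Z_i^2 \right ) \mid \mathbf{U}, \mathbf{V} \right ]$ in the numerator. This direction is routine once the row is in hand.

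For the intercept I would read off the first row $a_{0,i} = 1/N - \overline{\mathrm{log} \left ( s^2 \right )} \left \{ \mathrm{log} \left ( s_i^2 \right ) - \overline{\mathrm{log} \left ( s^2 \right )} \right \} / \sum_k \left \{ \mathrm{log} \left ( s_k^2 \right ) - \overline{\mathrm{log} \left ( s^2 \right )} \right \}^2$ from the same inverse, and substitute it into $\mathrm{Var} \left ( \sum_i a_{0,i} \, \mathrm{log} \left ( Z_i^2 \right ) \mid \mathbf{U}, \mathbf{V} \right )$. I expect the main obstacle to be the algebraic reduction of this coefficient into the displayed form: the target is to factor $\left \{ \overline{\mathrm{log} \left ( s^2 \right )} \right \}^2$ out of the variance operator and collapse the bracketed sum to $\sum_i \left \{ 1 - \mathrm{log} \left ( s_i^2 \right ) \right \} \mathrm{log} \left ( Z_i^2 \right )$, which requires combining the $1/N$ term with the centered-predictor term in exactly the right way. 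Because the $\mathrm{log} \left ( Z_i^2 \right )$ are correlated with the nontrivial covariance structure stated in the proposition, this structure must be carried inside the variance operator rather than expanded; the identity $\mathrm{Var} \left \{ \mathrm{log} \left ( e_i^2 \right ) \mid \mathbf{V} \right \} = \pi^2/2$ together with the covariance formulae of Section \ref{sec:residuals} are the only further inputs needed to render the result fully explicit. This intercept reduction is the delicate step, whereas the linear-map reduction and the slope computation are mechanical.
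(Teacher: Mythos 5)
Your reduction is the same one the paper uses: the paper also writes $\hat{\gamma}$ as the fixed linear map $\left ( \mathbf{U}^{\mathsf{T}}\mathbf{U} \right )^{-1}\mathbf{U}^{\mathsf{T}}$ applied to the vector of $\log \left ( e_i^2 \right )$, discards the deterministic shift $\log \left \{ \left ( 1 - h_{ii} \right )^2 \sigma_i^2 + \sum_{k \neq i} h_{ik}^2 \sigma_k^2 \right \}$, and evaluates the sandwich $\left ( \mathbf{U}^{\mathsf{T}}\mathbf{U} \right )^{-1} \left ( \mathbf{U}^{\mathsf{T}} C \mathbf{U} \right ) \left ( \mathbf{U}^{\mathsf{T}}\mathbf{U} \right )^{-1}$, where $C$ is the conditional covariance matrix of the $\log \left ( Z_i^2 \right )$; your row-wise version is equivalent. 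Your slope row $a_{1,i}$ is correct, and your derivation of $\mathrm{Var} \left ( \hat{\gamma}_1 \mid \mathbf{U}, \mathbf{V} \right )$ goes through exactly as in the paper.

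The intercept step you deferred as ``delicate'' is a genuine gap, and in fact the reduction you set as your target cannot be carried out: the proposition's display for $\mathrm{Var} \left ( \hat{\gamma}_0 \mid \mathbf{U}, \mathbf{V} \right )$ is not equal to $\mathrm{Var} \left ( \sum_i a_{0,i} \log \left ( Z_i^2 \right ) \mid \mathbf{U}, \mathbf{V} \right )$ for your (correct) $a_{0,i}$. Writing $S = \sum_k \left \{ \log \left ( s_k^2 \right ) - \overline{\log \left ( s^2 \right )} \right \}^2$, matching the display would force $a_{0,i} = \pm \, \overline{\log \left ( s^2 \right )} \left \{ 1 - \log \left ( s_i^2 \right ) \right \}/S$, which, compared with your $a_{0,i} = 1/N - \overline{\log \left ( s^2 \right )} \left \{ \log \left ( s_i^2 \right ) - \overline{\log \left ( s^2 \right )} \right \}/S$, holds only when $\sum_k \left \{ \log \left ( s_k^2 \right ) \right \}^2 = \sum_k \log \left ( s_k^2 \right )$. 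A quick check exposes the problem: if the $\log \left ( s_i^2 \right )$ are centered so that $\overline{\log \left ( s^2 \right )} = 0$, the stated formula evaluates to zero, yet then $a_{0,i} = 1/N$ and $\mathrm{Var} \left ( \hat{\gamma}_0 \mid \mathbf{U}, \mathbf{V} \right ) = N^{-2} \mathrm{Var} \left ( \sum_i \log \left ( Z_i^2 \right ) \mid \mathbf{U}, \mathbf{V} \right ) > 0$. The source is a slip in the paper's own proof: the $(1,1)$ entry of the sandwich carries the weight $\sum_k \left \{ \log \left ( s_k^2 \right ) \right \}^2$ (exactly as the paper uses in the proof of Proposition \ref{lem:conditional_expectations}), but the ``In particular'' display in the proof of this proposition replaces it with $\sum_k \log \left ( s_k^2 \right ) = N \overline{\log \left ( s^2 \right )}$, and factoring $N \overline{\log \left ( s^2 \right )}$ out of the resulting expression produces the $\left \{ 1 - \log \left ( s_i^2 \right ) \right \}$ form of the statement. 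Your $a_{0,i}$ delivers the correct result directly:
\[
\mathrm{Var} \left ( \hat{\gamma}_0 \mid \mathbf{U}, \mathbf{V} \right ) = \frac{1}{S^2} \, \mathrm{Var} \left [ \sum_{i=1}^N \left \{ \frac{1}{N} \sum_{k=1}^N \left \{ \log \left ( s_k^2 \right ) \right \}^2 - \overline{\log \left ( s^2 \right )} \, \log \left ( s_i^2 \right ) \right \} \log \left ( Z_i^2 \right ) \mid \mathbf{U}, \mathbf{V} \right ],
\]
which coincides with the proposition's display only in the degenerate case noted above. So do not force the algebra into the stated form: had you pushed your computation through honestly, you would have found the statement (and the paper's proof of it) in error at precisely the step you singled out, while your framework, slope result, and intercept row are all sound.
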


\begin{theorem}
\label{thm:finite_sample_beta_expectation}
The estimator $\hat{\beta} = \left ( \mathbf{V}^{\mathsf{T}} \hat{\Omega}^{-1} \mathbf{V} \right )^{-1} \mathbf{V}^{\mathsf{T}}\hat{\Omega}^{-1}y$ is unbiased conditional on $\mathbf{U}$ and $\mathbf{V}$, i.e.,  $\mathbb{E} \left ( \hat{\beta} \mid \mathbf{U}, \mathbf{V} \right ) = \beta$.
\end{theorem}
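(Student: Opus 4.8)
The plan is to exploit the symmetry of the Normal errors under a sign flip, which forces the estimation error to be an odd function of the noise and hence to have zero conditional expectation. First I would substitute the model $y = \mathbf{V}\beta + \epsilon$ into the weighted estimator and use the identity $\left(\mathbf{V}^{\mathsf{T}}\hat{\Omega}^{-1}\mathbf{V}\right)^{-1}\mathbf{V}^{\mathsf{T}}\hat{\Omega}^{-1}\mathbf{V} = \mathbf{I}$, valid for any positive diagonal $\hat{\Omega}$ for which $\mathbf{V}^{\mathsf{T}}\hat{\Omega}^{-1}\mathbf{V}$ is invertible, to obtain the clean decomposition
\[
\hat{\beta} - \beta = \left(\mathbf{V}^{\mathsf{T}}\hat{\Omega}^{-1}\mathbf{V}\right)^{-1}\mathbf{V}^{\mathsf{T}}\hat{\Omega}^{-1}\epsilon.
\]
The essential subtlety is that $\hat{\Omega}$ is not deterministic but a function of $y$, so this is not a linear estimator; I would therefore track precisely how $\hat{\Omega}$ depends on $\epsilon$.

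The key observation is that $\hat{\Omega}$ depends on the RCT outcomes only through the ordinary least squares \emph{squared} residuals. Since $\hat{\gamma} = (\mathbf{U}^{\mathsf{T}}\mathbf{U})^{-1}\mathbf{U}^{\mathsf{T}}\mathrm{log}\{(y - \mathbf{H}y)^2\}$ and $\hat{\Omega}_{ii} = \mathrm{exp}(u_i^{\mathsf{T}}\hat{\gamma})$, while the personalized precisions $s_i^2$ (and hence $\mathbf{U}$) come from the DTG and are fixed under the conditioning on $\mathbf{U}$ and $\mathbf{V}$, the only stochastic input to $\hat{\Omega}$ is the residual vector $e = (\mathbf{I} - \mathbf{H})y = (\mathbf{I} - \mathbf{H})\epsilon$, where the second equality uses $(\mathbf{I}-\mathbf{H})\mathbf{V} = \mathbf{0}$. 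Because $\hat{\gamma}$ enters only through $\mathrm{log}(e_i^2)$, the matrix $\hat{\Omega}$ is a function of the $e_i^2$ alone. Under the map $\epsilon \mapsto -\epsilon$ we have $e \mapsto -e$ and thus $e_i^2 \mapsto e_i^2$, so $\hat{\Omega}$ --- and consequently the entire matrix $A(\epsilon) := (\mathbf{V}^{\mathsf{T}}\hat{\Omega}^{-1}\mathbf{V})^{-1}\mathbf{V}^{\mathsf{T}}\hat{\Omega}^{-1}$ --- is invariant, i.e.\ an even function of $\epsilon$.

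With this in hand I would close by a parity argument. Writing $\hat{\beta} - \beta = A(\epsilon)\epsilon$, the evenness of $A$ gives $A(-\epsilon)(-\epsilon) = -A(\epsilon)\epsilon$, so the estimation error is an odd function of $\epsilon$. Conditional on $\mathbf{U}$ and $\mathbf{V}$ the errors satisfy $\epsilon \sim \mathrm{N}(\mathbf{0}, \mathrm{diag}(\sigma_1^2, \ldots, \sigma_N^2))$, a law symmetric about the origin, so $-\epsilon$ has the same conditional distribution as $\epsilon$. The expectation of an odd function of a symmetric random vector vanishes, yielding $\mathbb{E}(\hat{\beta} - \beta \mid \mathbf{U}, \mathbf{V}) = \mathbf{0}$ and hence $\mathbb{E}(\hat{\beta} \mid \mathbf{U}, \mathbf{V}) = \beta$.

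The step I expect to be the main obstacle is not the symmetry itself but justifying that the conditional expectation exists, so that the cancellation $\mathbb{E}[A(\epsilon)\epsilon] = -\mathbb{E}[A(\epsilon)\epsilon]$ is legitimate rather than an $\infty - \infty$ manipulation. The reciprocal weights $\hat{\Omega}_{ii}^{-1} = \mathrm{exp}(-u_i^{\mathsf{T}}\hat{\gamma})$ can be large, and $\mathrm{log}(e_i^2)$ is unbounded below as any residual approaches zero, so I would need an integrability bound establishing $\mathbb{E}(\|A(\epsilon)\epsilon\| \mid \mathbf{U}, \mathbf{V}) < \infty$. I would obtain this by noting that $u_i^{\mathsf{T}}\hat{\gamma}$ is a fixed linear functional of $\mathrm{log}(e^2)$, so each weight and its reciprocal is a product of powers of the $e_k^2$; since squared Gaussian variates possess all positive moments and negative moments of sufficiently small order --- consistent with the log-chi-squared law for the residuals recorded in Section \ref{sec:residuals} --- the reciprocal weights carry enough integrability for $A(\epsilon)\epsilon$ once the coefficients of the functional are controlled by $\mathbf{U}$. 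Once integrability is secured, the parity argument completes the proof.
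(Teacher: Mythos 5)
Your proof is correct and rests on the same core idea as the paper's: the weights $\hat{\Omega}$ depend on the data only through the squared residuals $e_i^2 = \left \{ d_i^{\mathsf{T}} \left ( I - \mathbf{H} \right ) \epsilon \right \}^2$, which are invariant under the sign flip $\epsilon \mapsto -\epsilon$, while the conditional law of $\epsilon$ given $\mathbf{U}, \mathbf{V}$ is symmetric about the origin. The packaging differs, however. The paper routes the argument through the tower property: it conditions on $\left ( e_1^2, \ldots, e_N^2 \right )$, pulls $\hat{\Omega}$ out of the inner expectation, and argues that the conditional distribution of each $\epsilon_i$ given the squared residuals is symmetric about zero, so $\mathbb{E} \left ( \epsilon \mid e_1^2, \ldots, e_N^2, \mathbf{U}, \mathbf{V} \right ) = 0$. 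You instead apply the sign flip once, globally, concluding that $A(\epsilon)\epsilon$ is odd and its expectation vanishes. Your version is arguably cleaner: it sidesteps the paper's inaccurate side claim that $\left [ \epsilon_i \mid e_1^2, \ldots, e_N^2, \mathbf{U}, \mathbf{V} \right ]$ takes only finitely many values --- in fact only the residual vector $e = \left ( I - \mathbf{H} \right ) \epsilon$ is pinned down up to finitely many sign patterns, while the component $\mathbf{H}\epsilon$ remains continuously distributed --- although the symmetry the paper actually needs survives that correction. One caution on your integrability discussion, which is the weakest link: writing $u_i^{\mathsf{T}} \hat{\gamma} = \sum_k P_{ik} \, \mathrm{log} \left ( e_k^2 \right )$ with $P = \mathbf{U} \left ( \mathbf{U}^{\mathsf{T}} \mathbf{U} \right )^{-1} \mathbf{U}^{\mathsf{T}}$ a projection, the exponents $P_{ik}$ need not be smaller than $1/2$ in magnitude, and $\mathbb{E} \left \{ \left ( e_k^2 \right )^{-p} \right \} = \infty$ for $p \geq 1/2$, so your appeal to ``negative moments of sufficiently small order'' does not close the gap factor-by-factor; one would have to exploit cancellation between $\hat{\Omega}^{-1}$ in $\mathbf{V}^{\mathsf{T}} \hat{\Omega}^{-1} \epsilon$ and the inverse Gram matrix $\left ( \mathbf{V}^{\mathsf{T}} \hat{\Omega}^{-1} \mathbf{V} \right )^{-1}$, whose conditioning degrades exactly when weights concentrate. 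That said, the paper's proof is entirely silent on existence of these expectations, so your treatment of this point is, if anything, more careful than the published argument.
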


\noindent A consequence of Theorem \ref{thm:finite_sample_beta_expectation} is super-population unbiasedness of $\hat{\beta}$.
\begin{corollary}
\label{lem:super-population_expectation}
The estimator $\hat{\beta} = \left ( \mathbf{V}^{\mathsf{T}} \hat{\Omega}^{-1} \mathbf{V} \right )^{-1} \mathbf{V}^{\mathsf{T}}\hat{\Omega}^{-1}y$ is unbiased unconditional on $\mathbf{U}$ and $\mathbf{V}$, i.e.,  $\mathbb{E} \left ( \hat{\beta} \right ) = \beta$.
\end{corollary}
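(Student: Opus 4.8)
The plan is to obtain the unconditional result directly from the conditional statement established in Theorem \ref{thm:finite_sample_beta_expectation} by invoking the law of iterated expectations (the tower property). Under the super-population framework adopted in Section \ref{sec:notation}, the vectors $\left ( Y_i(0), Y_i(1), x_i \right )$ are drawn i.i.d.\ from the super-population, so the predictor matrices $\mathbf{U}$ and $\mathbf{V}$ --- whose rows are built from the $w_i$, $m_i$, and $\mathrm{log} \left ( s_i^2 \right )$ --- are themselves random. The estimator $\hat{\beta}$ is a function of $\left ( y, \mathbf{U}, \mathbf{V} \right )$, and Theorem \ref{thm:finite_sample_beta_expectation} shows that for (almost) every realization of $\left ( \mathbf{U}, \mathbf{V} \right )$ rendering $\mathbf{V}^{\mathsf{T}} \widehat{\Omega}^{-1} \mathbf{V}$ invertible, the conditional expectation $\mathbb{E} \left ( \hat{\beta} \mid \mathbf{U}, \mathbf{V} \right )$ equals the fixed coefficient vector $\beta$.

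First I would write $\mathbb{E} \left ( \hat{\beta} \right ) = \mathbb{E} \left \{ \mathbb{E} \left ( \hat{\beta} \mid \mathbf{U}, \mathbf{V} \right ) \right \}$, which is valid provided $\hat{\beta}$ is integrable. Substituting the conclusion of Theorem \ref{thm:finite_sample_beta_expectation} yields $\mathbb{E} \left ( \hat{\beta} \right ) = \mathbb{E} \left ( \beta \right )$, and since $\beta$ is a deterministic constant that does not depend on $\left ( \mathbf{U}, \mathbf{V} \right )$, the outer expectation leaves it unchanged, giving $\mathbb{E} \left ( \hat{\beta} \right ) = \beta$. The argument contains essentially no genuine obstacle; the only point requiring care is to confirm that the conditional unbiasedness of Theorem \ref{thm:finite_sample_beta_expectation} holds across the full set of $\left ( \mathbf{U}, \mathbf{V} \right )$ realizations over which we integrate --- namely those for which the weighted design matrix is invertible, which occurs with probability one under the maintained assumptions --- so that the substitution into the tower property is legitimate and no measure-zero exceptional set contaminates the outer expectation.
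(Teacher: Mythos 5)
Your proposal is correct and matches the paper's own treatment: the paper states the corollary as an immediate consequence of Theorem \ref{thm:finite_sample_beta_expectation} via the law of iterated expectations, $\mathbb{E} \left ( \hat{\beta} \right ) = \mathbb{E} \left \{ \mathbb{E} \left ( \hat{\beta} \mid \mathbf{U}, \mathbf{V} \right ) \right \} = \beta$, exactly as you argue. Your added remarks on integrability and the almost-sure invertibility of $\mathbf{V}^{\mathsf{T}} \widehat{\Omega}^{-1} \mathbf{V}$ are sensible care points that the paper leaves implicit, but they do not change the route.
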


The asymptotic expectations for $\hat{\gamma}$ are established under the following assumptions on the predictors $\mathbf{U}$ and $\mathbf{V}$, and the participant-level variances $\sigma_i^2$. These assumptions are motivated by considering the case in which a simple random sample of participants is taken from a super-population, and a completely randomized design is conducted for the participants. In this case, the $\left ( x_1, \sigma_1^2 \right ), \left ( x_2, \sigma_2^2 \right ), \ldots$ are independent and identically distributed in the super-population.

\begin{assumption}
\label{asmp:skedastic_function_predictors}
The predictors $\mathrm{log} \left ( s_1^2 \right ), \mathrm{log} \left ( s_2^2 \right ), \ldots$ for the super-population of participants are independent and identically distributed with finite mean and variance.
\end{assumption} 

\begin{assumption}
\label{asmp:predictors_sigma}
The predictors $v_i$ and the participant-level variances $\sigma_i^2$ in the super-population are distributed such that, for any finite sample of size $N > K+1$ of participants from the super-population, the $\mathrm{log} \left \{ \left ( 1 - h_{ii} \right )^2 \sigma_i^2 + \displaystyle \sum_{k \neq i} h_{ik}^2 \sigma_k^2 \right \}$ are identically distributed with finite mean and variance.
\end{assumption} 

\begin{assumption}
\label{asmp:predictors_infinite_limit}
For any finite sample of size $N > K+1$ of participants from the super-population, let $F_{H_{\sigma},N}$ denote the cumulative distribution function of the $\mathrm{log} \left \{ \left ( 1 - h_{11} \right )^2 \sigma_1^2 + \displaystyle \sum_{k \neq 1} h_{1k}^2 \sigma_k^2 \right \}$, $\ldots$, $\mathrm{log} \left \{  \left ( 1 - h_{NN} \right )^2 \sigma_N^2 + \displaystyle \sum_{k \neq N} h_{Nk}^2 \sigma_k^2 \right \}$. As $N \rightarrow \infty$, $F_{H_{\sigma},N}$ converges pointwise to a cumulative distribution function $F_{H_{\sigma}}$ with finite mean and variance.
\end{assumption} 
\noindent To simplify the presentation of the results, we let $\mathrm{log} \left ( S^2 \right )$ denote a random variable with the same distribution as the $\mathrm{log} \left (s_1^2 \right ), \mathrm{log} \left ( s_2^2 \right ), \ldots$ in the super-population and $\mathrm{log} \left ( H_{\sigma} \right )$ denote the logarithmic transformation of the random variable $H_{\sigma}$ that has cumulative distribution function $F_{H_{\sigma}}$.

\begin{prop}
\label{lem:consistency_expectation}
Under Assumptions \ref{asmp:skedastic_function_predictors} - \ref{asmp:predictors_infinite_limit}, as $N \rightarrow \infty$
\[
\mathbb{E} \left ( \hat{\gamma}_0 \right ) \rightarrow \frac{\mathbb{E} \left [ \left \{ \mathrm{log} \left ( S^2 \right ) \right \}^2 \right ] \mathbb{E} \left \{ \mathrm{log} \left ( H_{\sigma} \right ) \right \} - \mathbb{E} \left \{ \mathrm{log} \left ( S^2 \right ) \right \} \mathbb{E} \left \{ \mathrm{log} \left ( S^2 \right ) \mathrm{log} \left ( H_{\sigma} \right ) \right \}}{\mathrm{Var} \left \{ \mathrm{log} \left ( S^2 \right ) \right \}} - \left ( \gamma_{\mathrm{EM}} + \mathrm{log} 2 \right ),
\]
\[
\mathbb{E} \left ( \hat{\gamma}_1 \right ) \rightarrow \frac{\mathrm{Cov} \left \{ \mathrm{log} \left ( S^2 \right ), \mathrm{log} \left ( H_{\sigma} \right ) \right \}}{\mathrm{Var} \left \{ \mathrm{log} \left ( S^2 \right ) \right \}},
\]
\noindent where $\gamma_{\mathrm{EM}} \approx 0.577$ is the Euler-Mascheroni constant.
\end{prop}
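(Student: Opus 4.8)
The plan is to start from the exact finite-sample conditional expectations in Proposition~\ref{lem:conditional_expectations} and pass to the limit by recognizing each expression as a ratio of empirical averages governed by a law of large numbers. Write $g_i = \mathrm{log}\{(1-h_{ii})^2\sigma_i^2 + \sum_{k\neq i}h_{ik}^2\sigma_k^2\}$ for the $i$-th hat-matrix-weighted log-variance appearing in Proposition~\ref{lem:conditional_expectations}. For $\hat\gamma_1$, dividing the numerator and denominator of $\mathbb{E}(\hat\gamma_1 \mid \mathbf{U}, \mathbf{V})$ by $N$ exhibits it as $A_N/B_N$, where $B_N = N^{-1}\sum_i\{\mathrm{log}(s_i^2) - \overline{\mathrm{log}(s^2)}\}^2$ is the sample variance of the $\mathrm{log}(s_i^2)$ and $A_N = N^{-1}\sum_i\{\mathrm{log}(s_i^2) - \overline{\mathrm{log}(s^2)}\}g_i$ is a sample covariance between the $\mathrm{log}(s_i^2)$ and the $g_i$. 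The goal is then to show $A_N \rightarrowp \mathrm{Cov}\{\mathrm{log}(S^2), \mathrm{log}(H_\sigma)\}$ and $B_N \rightarrowp \mathrm{Var}\{\mathrm{log}(S^2)\}$, so that the continuous mapping theorem yields $A_N/B_N \rightarrowp$ the stated ratio, and a uniform-integrability argument upgrades this to convergence of the unconditional expectation $\mathbb{E}(\hat\gamma_1) = \mathbb{E}(A_N/B_N)$ obtained via the tower property.

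First I would dispatch the denominator: under Assumption~\ref{asmp:skedastic_function_predictors} the $\mathrm{log}(s_i^2)$ are i.i.d.\ with finite mean and variance, so $\overline{\mathrm{log}(s^2)} \rightarrowp \mathbb{E}\{\mathrm{log}(S^2)\}$ and $B_N \rightarrowp \mathrm{Var}\{\mathrm{log}(S^2)\}$ by the law of large numbers, and likewise $N^{-1}\sum_i\{\mathrm{log}(s_i^2)\}^2 \rightarrowp \mathbb{E}[\{\mathrm{log}(S^2)\}^2]$. Next I would decompose the numerator as $A_N = N^{-1}\sum_i\mathrm{log}(s_i^2)g_i - \overline{\mathrm{log}(s^2)}\cdot N^{-1}\sum_i g_i$ and establish that $N^{-1}\sum_i g_i \rightarrowp \mathbb{E}\{\mathrm{log}(H_\sigma)\}$ and $N^{-1}\sum_i\mathrm{log}(s_i^2)g_i \rightarrowp \mathbb{E}\{\mathrm{log}(S^2)\mathrm{log}(H_\sigma)\}$; substituting these limits reproduces $\mathrm{Cov}\{\mathrm{log}(S^2),\mathrm{log}(H_\sigma)\}/\mathrm{Var}\{\mathrm{log}(S^2)\}$. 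The argument for $\hat\gamma_0$ is structurally identical: its conditional expectation is again a ratio whose denominator is $N$ times the same sample variance, and whose numerator, after scaling by $N^2$, is the linear combination $[N^{-1}\sum_i\{\mathrm{log}(s_i^2)\}^2][N^{-1}\sum_i g_i] - \overline{\mathrm{log}(s^2)}[N^{-1}\sum_i\mathrm{log}(s_i^2)g_i]$; feeding in the same limits and simplifying yields the first displayed expression, with the additive constant $-(\gamma_{\mathrm{EM}} + \mathrm{log}\,2)$ carried through unchanged.

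The crux of the argument, and the step I expect to be the main obstacle, is the convergence of the averages of the $g_i$ and of the products $\mathrm{log}(s_i^2)g_i$. Unlike the $\mathrm{log}(s_i^2)$, the $g_i$ are not i.i.d.: each $g_i$ depends on the full configuration of variances through the hat matrix, so $\{g_1, \ldots, g_N\}$ forms a dependent triangular array whose common marginal law changes with $N$. This is precisely what Assumptions~\ref{asmp:predictors_sigma} and~\ref{asmp:predictors_infinite_limit} are designed to control: Assumption~\ref{asmp:predictors_sigma} makes the $g_i$ identically distributed within each sample with finite mean and variance, so that $\mathbb{E}\{N^{-1}\sum_i g_i\} = \mathbb{E}(g_1)$ equals the common mean at sample size $N$, while Assumption~\ref{asmp:predictors_infinite_limit} forces the marginal CDF $F_{H_\sigma, N}$ to converge to $F_{H_\sigma}$, so that this common mean converges to $\mathbb{E}\{\mathrm{log}(H_\sigma)\}$ once a finite-variance uniform-integrability argument is invoked to pass from convergence in distribution to convergence of means. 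To obtain convergence in probability rather than merely convergence of means, I would bound the variance $\mathrm{Var}(N^{-1}\sum_i g_i) = N^{-2}\sum_{i,j}\mathrm{Cov}(g_i, g_j)$ and argue that it vanishes, using that the off-diagonal leverages $h_{ik}$ ($k \neq i$) are asymptotically negligible so that $g_i$ is governed primarily by its own $\sigma_i^2$ and the cross-covariances decay summably; the joint limit $\mathbb{E}\{\mathrm{log}(S^2)\mathrm{log}(H_\sigma)\}$ for the product average follows from the same reasoning applied to the pair $(\mathrm{log}(s_i^2), g_i)$, whose joint law converges under the same triangular-array control. Establishing that these covariances genuinely decay --- i.e., that the hat matrix does not concentrate leverage on a vanishing fraction of participants --- is the delicate point, and is exactly the regularity that Assumptions~\ref{asmp:predictors_sigma}--\ref{asmp:predictors_infinite_limit} are meant to encode.
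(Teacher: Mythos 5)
Your proposal follows essentially the same route as the paper's proof: both reduce the conditional expectations from Proposition~\ref{lem:conditional_expectations} to ratios of sample moments, send the sample variance of the $\mathrm{log}(s_i^2)$ and the sample (co)moments involving $g_i = \mathrm{log}\{(1-h_{ii})^2\sigma_i^2 + \sum_{k\neq i}h_{ik}^2\sigma_k^2\}$ to their population limits, apply the continuous mapping theorem to the ratio, and pass to the unconditional expectation via the tower property. The only difference is one of explicitness, in your favor: the paper simply invokes ``the Weak Law of Large Numbers and the assumptions'' for the averages of the non-i.i.d.\ $g_i$ and the portmanteau lemma for the final convergence of means, whereas you correctly flag that the $g_i$ form a dependent triangular array whose covariances must be controlled under Assumptions~\ref{asmp:predictors_sigma}--\ref{asmp:predictors_infinite_limit}, and that a uniform-integrability argument is needed to convert distributional convergence into convergence of (unbounded) means --- precisely the steps the paper leaves implicit.
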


We ultimately conclude under both the finite-sample and asymptotic perspectives that Weighted PROCOVA yields an unbiased treatment effect estimator. By calculating the standard error of the treatment effect estimator, e.g., via the bootstrap or the sandwich estimator \citep{white_1980}, one can also conduct hypothesis tests that control the Type I error rate and create confidence intervals that have the desired coverage rate. We illustrate these frequentist properties in the simulation studies of Section \ref{sec:simulation_studies}.

\subsection{Variance Reduction Under Weighted PROCOVA}
\label{sec:frequentist_properties_variance_reduction}

\citet{romano_resurrecting_2016} establish the asymptotic distribution of $\hat{\beta}$ under additional assumptions and conditions on the predictors and skedastic function. Their result facilitates an understanding and description of the factors driving the potential variance reduction of Weighted PROCOVA compared to PROCOVA. To describe the variance reduction, we will invoke the conditions in equations (3.10), (3.11), (3.12), and (3.13) of \citet[p.~5]{romano_resurrecting_2016}. Furthermore, we will assume that no residual from a fitted PROCOVA model will ever be exactly equal to zero, so as to eliminate the need for introducing the constant $\delta^2$ in equation (3.6) \citep[p.~4]{romano_resurrecting_2016}. In all our calculations we will have $v_i = \left ( 1, w_i, m_i \right )^{\mathsf{T}}$ and $u_i = \left (1, \mathrm{log} \left ( s_i^2 \right ) \right )^{\mathsf{T}}$. We focus on variance reduction in these calculations because the power boost from Weighted PROCOVA is a direct consequence of its reduction of the variance of the treatment effect estimator compared to PROCOVA. A demonstration of this general fact is provided by \citet{fisher_2023_power}. 

Under these assumptions, we have from Theorem 3.1 in \citep[p.~5]{romano_resurrecting_2016} that $\hat{\beta}$ converges in distribution to a Normal distribution centered at $\beta$ with asymptotic covariance matrix a function of the predictors in the mean model, the limit of the fitted skedastic function model, and the participant-level variances. More formally, we let $\mathcal{G}: \mathbb{R} \rightarrow \mathbb{R}_{>0}$ denote the limit of the fitted skedastic function model that has as its input $s_i^2$, and define
\begin{equation}
\label{eq:Omega_bread_matrix}
\Omega_{1/\mathcal{G} \left ( s^2 \right )} = \mathbb{E} \left \{ \frac{1}{\mathcal{G} \left (s_i^2 \right )} \begin{pmatrix} 1 & w_i & m_i \\ w_i & w_i & w_im_i \\ m_i & w_i m_i & m_i^2 \end{pmatrix} \right \}    
\end{equation}
and 
\begin{equation}
\label{eq:Omega_meat_matrix}
\Omega_{\sigma^2/\mathcal{G} \left ( s^2 \right )} = \mathbb{E} \left \{ \frac{\sigma_i^2}{\mathcal{G} \left (s_i^2 \right )^2} \begin{pmatrix} 1 & w_i & m_i \\ w_i & w_i & w_im_i \\ m_i & w_i m_i & m_i^2 \end{pmatrix} \right \}.    
\end{equation}
Then the asymptotic covariance matrix of $\hat{\beta}$ under Weighted PROCOVA is $\Omega_{1/\mathcal{G} \left ( s^2 \right )}^{-1} \Omega_{\sigma^2/\mathcal{G} \left ( s^2 \right )} \Omega_{1/\mathcal{G} \left ( s^2 \right )}^{-1}$. In addition, we have from Lemma 3.1 in \citep[p.~5]{romano_resurrecting_2016} that the asymptotic covariance matrix of the regression coefficient estimators under PROCOVA is
\begin{equation}
\label{eq:asymptotic_covariance_matrix_procova}
\left [ \mathbb{E} \left \{ \begin{pmatrix} 1 & w_i & m_i \\ w_i & w_i & w_im_i \\ m_i & w_im_i & m_i^2 \end{pmatrix} \right \} \right ]^{-1} \mathbb{E} \left \{ \sigma_i^2 \begin{pmatrix} 1 & w_i & m_i \\ w_i & w_i & w_im_i \\ m_i & w_im_i & m_i^2 \end{pmatrix} \right \} \left [ \mathbb{E} \left \{ \begin{pmatrix} 1 & w_i & m_i \\ w_i & w_i & w_im_i \\ m_i & w_im_i & m_i^2 \end{pmatrix} \right \} \right ]^{-1}.
\end{equation}
The variance reduction of Weighted PROCOVA to PROCOVA is thus calculated as the ratio of the $(2,2)$ entry in the matrix $\Omega_{1/\mathcal{G} \left ( s^2 \right )}^{-1} \Omega_{\sigma^2/\mathcal{G} \left ( s^2 \right )} \Omega_{1/\mathcal{G} \left ( s^2 \right )}^{-1}$ with the corresponding entry of the matrix in equation (\ref{eq:asymptotic_covariance_matrix_procova}). 

A simplified expression for the variance reduction follows by means of the following three assumptions under the super-population perspective.

\begin{assumption}
\label{asmp:uncorrelated_1}
The transformed predictors $1/\mathcal{G} \left ( s_i^2 \right )$ are uncorrelated with the $m_i, m_i^2, w_i$, and $w_im_i$.
\end{assumption} 

\begin{assumption}
\label{asmp:uncorrelated_2}
The participant-level variances $\sigma_i^2$ are uncorrelated with the $m_i, m_i^2, w_i$, and $w_im_i$.
\end{assumption} 

\begin{assumption}
\label{asmp:uncorrelated_3}
The ratios of the participant-level variances to the square of the transformed predictors, i.e., the $\sigma_i^2 \left \{ \mathcal{G} \left ( s_i^2 \right ) \right \}^{-2}$ are uncorrelated with the $m_i, m_i^2, w_i$, and $w_im_i$.
\end{assumption} 
\noindent These assumptions can be also motivated by recognition of the physical act of randomization in a RCT, and by considering the case in which the prognostic scores are independent of the personalized precisions and participant-level variances. Under Assumptions \ref{asmp:uncorrelated_1} - \ref{asmp:uncorrelated_3}, the expectations in equations (\ref{eq:Omega_bread_matrix}), (\ref{eq:Omega_meat_matrix}), and \ref{eq:asymptotic_covariance_matrix_procova}) can be simplified to yield the expression
\begin{equation}
\label{eq:variance_reduction_eta}
1 - \frac{\mathbb{E} \left [ \sigma_i^2\left \{ \mathcal{G} \left ( s_i^2 \right ) \right \}^{-2} \right ]}{\mathbb{E} \left \{ \mathcal{G} \left ( s_i \right )^{-2} \right \} \mathbb{E} \left ( \sigma_i^2 \right )}.
\end{equation}
Finally, when we have the limiting case of $\mathcal{G} \left ( s_i^2 \right ) = \mathrm{exp} \left \{ \gamma_0 + \gamma_1 \mathrm{log} \left ( s_i^2 \right ) \right \}$, then we have that the variance reduction is a function of $\gamma_1$. We demonstrate this result via simulation in Section \ref{sec:simulation_studies_results_boost}.

\section{Simulation Studies}
\label{sec:simulation_studies}

\subsection{Data Generation Mechanisms and Evaluation Metrics}
\label{sec:generative_models}

We conduct two sets of simulation studies to compare and contrast the frequentist properties of Weighted PROCOVA with PROCOVA and the unadjusted analysis. The first set involves correctly specified mean and variance component models. The second set introduces a predictor in the data generation mechanism that is not in the statistical analysis to illuminate the impact of skedastic function model misspecification on the properties. In all of our simulations we assume no covariate drifts or changes in standard of care.

The observed outcomes are generated according to
\begin{equation}
\label{eq:mean_model_simulation}
y_i = \beta_0 + \beta_1 w_i + \beta_2 m_i + \epsilon_i,
\end{equation}
where $\epsilon_i \sim \mathrm{N} \left ( 0, \sigma_i^2 \right )$ independently, and the $\sigma_i^2$ are generated according to
\begin{equation}
\label{eq:skedastic_function_simulation}
\mathrm{log} \left ( \sigma_i^2 \right ) = \gamma_0 + \gamma_1 \mathrm{log} \left ( s_i^2 \right ) + \gamma_2 u_{i,2} + \zeta_i,
\end{equation}
where $u_{i,2}$ is an additional factor that can change the level of heteroskedasticity but is not included in any Weighted PROCOVA analysis, and $\zeta_i \sim \mathrm{N} \left ( 0, \psi^2 \right )$ independently. The $u_{i,2}$ are introduced to evaluate the effects of skedastic function model misspecification on the frequentist properties of Weighted PROCOVA. For the first set of simulation studies we fix $\gamma_2 = 0$ to consider a correctly specified skedastic function model, and for the second set we consider a range of non-zero $\gamma_2$ values to consider a misspecified skedastic function model in terms of the omission of a predictor variable. The $\zeta_i$ are unobservable sources of variation for the participant-level variances. We generate $m_i \sim \mathrm{N} \left ( 0, \tau_1^2 \right )$, $\mathrm{log} \left ( s_i^2 \right ) \sim \mathrm{N} \left ( 0, \tau_2^2 \right )$, and $u_{i,2} \sim \mathrm{N} \left ( 0, \tau_3^2 \right )$ independently. 

Each set of simulation studies consists of three types of heteroskedasticity scenarios. In the first scenario the total level of heteroskedasticity, defined as $\mathrm{Var} \left \{ \mathrm{log} \left ( \sigma_i^2 \right ) \right \}$, is fixed across all combinations of simulation parameters. The quantity $\mathrm{Var} \left \{ \mathrm{log} \left ( \sigma_i^2 \right ) \right \}$ is the sum of the variation explained by the predictors of the skedastic function that are contained in $x_i$, and the variation due to unobserved sources. More formally, in this scenario $\gamma_1^2\tau_2^2 + \psi^2$ remains constant across the combinations of simulation parameters, or alternatively $\psi = \sqrt{\mathrm{Var} \left \{ \mathrm{log} \left ( \sigma_i^2 \right ) \right \} - \gamma_1^2 \tau_2^2}$ for a fixed value of $\mathrm{Var} \left \{ \mathrm{log} \left ( \sigma_i^2 \right ) \right \}$ and values of $\gamma_1$ and $\tau_2^2$ such that $\gamma_1^2 \tau_2^2 < \mathrm{Var} \left \{ \mathrm{log} \left ( \sigma_i^2 \right ) \right \}$. As $\mathrm{log} \left ( s_i^2 \right )$ explains more of the variation in the $\mathrm{log} \left ( \sigma_i^2 \right )$ the variation of the $\zeta_i$ decreases. This scenario enables us to characterize the quality of $\mathrm{log} \left ( s_i^2 \right )$ for Weighted PROCOVA in terms of the correlation between $\mathrm{log} \left ( s_i^2 \right )$ and $\mathrm{log} \left ( \sigma_i^2 \right )$. The second scenario differs from the first in that the participants' variances are a deterministic function of the predictors of the skedastic function, i.e., $\psi = 0$ across combinations of simulation parameters. When $\gamma_2 = 0$ all variation in the $\mathrm{log} \left ( \sigma_i^2 \right )$ will arise from that of $\mathrm{log} \left ( s_i^2 \right )$, and $\mathrm{Var} \left ( \mathrm{log} \left ( \sigma_i^2 \right ) \right \} \propto \tau_2^2$ with proportionality constant $\gamma_1^2$. If $\gamma_1 < 1$ then $\mathrm{Var} \left \{ \mathrm{log} \left ( \sigma_i^2 \right ) \right \} < \tau_2^2$ and we may expect that Weighted PROCOVA will not have desirable variance reduction or power boost properties, whereas if $\gamma_1 > 1$ then $\mathrm{Var} \left \{ \mathrm{log} \left ( \sigma_i^2 \right ) \right \} > \tau_2^2$ and we would expect it to perform well. Although the skedastic function is deterministic, our study of the second scenario still enables us to learn how the change in the relationship between the $\mathrm{log} \left ( s_i^2 \right )$ and $\mathrm{log} \left ( \sigma_i^2 \right )$ as characterized by $\gamma_1$ affects the frequentist properties of Weighted PROCOVA. This scenario was also considered by \citet[p.~9]{romano_resurrecting_2016} and can arise in practice when subpopulations of participants with identical covariates also have identical outcome variances. The final scenario is more general than the first two in that the participants' variances are not a deterministic function of their covariates, and the total level of heteroskedasticity is not fixed across all combinations of simulation parameters. Specifically, $\psi^2 > 0$ is fixed and $\mathrm{Var} \left \{ \mathrm{log} \left ( \sigma_i^2 \right ) \right \}$ can change across the combinations of simulation parameters. This scenario helps us understand how the combination of unobserved sources of variation and changes in the correlation between the $\mathrm{log} \left ( s_i^2 \right )$ and $\mathrm{log} \left ( \sigma_i^2 \right )$ affect the performance of Weighted PROCOVA. Our consideration of all these scenarios ultimately enables us to map out the statistical regimes for Weighted PROCOVA in terms of its frequentist properties.

The analysis model that we consider throughout has the same mean model specification as in equation (\ref{eq:mean_model_simulation}), but only includes $\mathrm{log} \left ( s_i^2 \right )$ as the sole predictor in the skedastic function model and omits $u_{i,2}$ from the analysis. For each setting we simulate $10^4$ datasets to evaluate the biases of $\hat{\beta}_1$, Type I error rates of the tests for $H_0: \beta_1 = 0$, and coverage rates of the confidence intervals for $\beta_1$ for the three methods. For the power evaluations we set the true value of $\beta_1$ to $0.4$, and for each combination of simulation parameters we select $N$ such that PROCOVA has $80 \%$ power for rejecting the null. We evaluate the percentage variance reduction of Weighted PROCOVA compared to PROCOVA for the case of $\beta_1 = 0$, as we consider additive treatment effects throughout and the actual value of the treatment effect should not affect the variance reduction.

A simulation setting is defined by the combination of values for $N, \beta_0, \beta_1, \beta_2, \gamma_0, \gamma_1, \gamma_2, \tau_1^2, \tau_2^2, \tau_3^2$, and $\psi^2$. We consider ranges of values for $N, \beta_1, \beta_2, \gamma_0, \gamma_1, \tau_2^2$, and $\psi^2$. For the null case of $\beta_1 = 0$, the range of RCT sample sizes are $N = 50, 100, 300, 500, 1000$. In all simulation settings the number of treated participants is $N_1 = N/2$. We fix $\beta_0 = 0$ and $\tau_1^2 = \tau_3^2 = 1$ throughout. The values for $\gamma_0, \gamma_1, \gamma_2, \tau_2^2$, and $\psi^2$ in our simulation studies are summarized in Table \ref{tab:simulation_parameter_settings}. The range of $\gamma_1$ values correspond to $R^2$ values for the skedastic function model fit being less that $0.1 \sim 0.2$ on average. 

\begin{table} 
\centering
\begin{tabular}{| c | c | c | c | }
\hline
Parameter & Setting 1 & Setting 2 & Setting 3 \\
\hline
$N$ & $50, 100, 300,$ & $50, 100, 300,$ & $50, 100, 300,$ \\
(null case) & $500, 1000$ & $500, 1000$ & $500, 1000$ \\
\hline
$N$ & $250$ & $210, 220, 240$ & $350, 375, 400$ \\
(alternative case) & & $250, 280, 325$ & $425, 475, 525$ \\
\hline
$\beta_1$ & $0, 0.4$ & $0, 0.4$ & $0, 0.4$ \\
\hline
$\beta_2$ & $0.2, 0.3, 0.4,$ & $0.2, 0.3, 0.4, $ & $0.2, 0.3, 0.4,$ \\
 & $0.5, 0.6$ & $0.5, 0.6$ & $0.5, 0.6$ \\
\hline
$\gamma_0$ & $-1.75$ & $0$ & $0$ \\
\hline
$\gamma_1$ & $0.4, 0.6,$ & $0.4, 0.6,$ & $0.4, 0.6,$ \\
& $0.8, 1, 1.2, 1.4$ & $0.8, 1, 1.2, 1.4$ & $0.8, 1, 1.2, 1.4$ \\
\hline
$\gamma_2$ & $0, 0.25, 0.5, 0.75,$ & $0, 0.25, 0.5, 0.75,$ & $0, 0.25, 0.5, 0.75,$ \\
 & $1, 1.25, 1.5$ & $1, 1.25, 1.5$ & $1, 1.25, 1.5$ \\
\hline
$\tau_2^2$ & $0.5$ & $0.5$ & $0.5$ \\
\hline
$\psi^2$ & $4 - \gamma_1^2\tau_2^2$ & $0$ & $1$ \\
\hline
\end{tabular}
\caption{Values of the parameters considered in our two sets of simulation studies. In the alternative case of $\beta_1 = 0.4$, the value of $N$ depends on the other parameter values as it is selected so that PROCOVA has power $80\%$ for rejecting the null. The first set of experiments has $\gamma_2 = 0$, and the second set has $\gamma_2 \neq 0$.}
\label{tab:simulation_parameter_settings}
\end{table}

\subsection{Bias, Type I Error Rate, and Confidence Interval Coverage}
\label{sec:simulation_studies_results_freq}

Table \ref{tab:freq_table_experiments_1_2} summarizes the average biases, Type I error rates, and confidence interval coverage rates for Weighted PROCOVA in the case of $\beta_1 = 0$ across the values of the other simulation parameters. We observe that Weighted PROCOVA yields unbiased estimators of the treatment effects, and controls the Type I error rates and confidence interval coverage rates. Tables \ref{tab:main_effects_experiment_1} and \ref{tab:main_effects_experiment_2} contain inferences on the effects of the simulation parameters $\beta_2, \gamma_1, \gamma_2$, and $N$ on these metrics. Specifically, we provide both the estimated main effects of these parameters and the p-values for testing whether they have any effects on the metrics. These inferences are obtained via linear regression on the results from the simulation studies. We see that all of the main effects for these parameters are estimated to be zero, and that none of them are statistically significant. This helps to indicate that for these ranges of $\beta_2, \gamma_1, \gamma_2$, and $N$, these parameters don't affect the frequentist properties of Weighted PROCOVA.

\begin{table}[H]
\centering
\begin{tabular}{| c | c | c | c | c |}
\hline
Value of $\gamma_2$ & Setting & Type I Error Rate & Confidence Interval Coverage & Bias \\
\hline
$\gamma_2 = 0$ & Setting 1 & $0.0506$ & $0.9494$ & $10^{-4}$ \\
\hline
$\gamma_2 = 0$ & Setting 2 & $0.0558$ & $0.9442$ & $0$ \\
\hline
$\gamma_2 = 0$ & Setting 3 & $0.0545$ & $0.9455$ & $0$ \\
\hline
$\gamma_2 \neq 0$ & Setting 1 & $0.0501$ & $0.9499$ & $0$ \\
\hline
$\gamma_2 \neq 0$ & Setting 2 & $0.0556$ & $0.9444$ & $-10^{-4}$ \\
\hline
$\gamma_2 \neq 0$ & Setting 3 & $0.0546$ & $0.9454$ & $10^{-4}$ \\
\hline
\end{tabular}
\caption{Summary of Type I error rates, confidence interval coverages, and biases for Weighted PROCOVA, averaged over the simulation parameter values in Table \ref{tab:simulation_parameter_settings}. The first set of simulation studies correspond to the rows with $\gamma_2 = 0$, and the second set correspond to the rows with $\gamma_2 \neq 0$.}
\label{tab:freq_table_experiments_1_2}
\end{table}

\begin{table}[H]
\centering
\begin{tabular}{| c | c | c | c | c |}
\hline
Parameter & Setting & Main Effect ($p$-value) & Main Effect ($p$-value) & Main Effect ($p$-value) \\
& & Type I Error Rate & Coverage Rate & Bias \\
\hline
$\beta_m$ & Setting 1 & $-0.001$ ($0.394$) & $0.001$ ($0.394$) & $-0.001$ ($0.5$)\\
\hline
$\gamma_1$ & Setting 1 & $0.004$ ($<0.0001$) & $-0.004$ ($<0.0001$) & $0$ ($0.87$)\\
\hline
$N$ & Setting 1 & $0$ ($0.235$) & $0$ ($0.235$) & $0$ ($0.884$)\\
\hline
$\beta_m$ & Setting 2 & $0.002$ ($0.379$) & $-0.002$ ($0.379$) & $0$ ($0.671$)\\
\hline
$\gamma_1$ & Setting 2 & $0.003$ ($<0.0001$) & $-0.003$ ($<0.0001$) & $0$ ($0.848$)\\
\hline
$N$ & Setting 2 & $0$ ($<0.0001$) & $0$ ($<0.0001$) & $0$ ($0.975$)\\
\hline
$\beta_m$ & Setting 3 & $-0.002$ ($0.36$) & $0.002$ ($0.36$) & $0$ ($0.991$)\\
\hline
$\gamma_1$ & Setting 3 & $0.003$ ($<0.0001$) & $-0.003$ ($<0.0001$) & $0$ ($0.864$)\\
\hline
$N$ & Setting 3 & $0$ ($<0.0001$) & $0$ ($<0.0001$) & $0$ ($0.842$)\\
\hline
\end{tabular}
\caption{Main effects and $p$-values of the effects of the parameters $\beta_2, \gamma_1$, and $N$ on the Type I error rates, confidence interval coverage rates, and biases of Weighted PROCOVA in the case of $\gamma_2 = 0$.}
\label{tab:main_effects_experiment_1}
\end{table}

\begin{table}[H]
\centering
\begin{tabular}{| c | c | c | c | c |}
\hline
Parameter & Setting & Main Effect ($p$-value) & Main Effect ($p$-value) & Main Effect ($p$-value) \\
& & Type I Error Rate & Coverage Rate & Bias \\
\hline
$\gamma_2$ & Setting 1 & $-0.003$ ($<0.0001$) & $0.003$ ($<0.0001$) & $-0.001$ ($0.259$)\\
\hline
$\gamma_1$ & Setting 1 & $0.004$ ($<0.0001$) & $-0.004$ ($<0.0001$) & $0$ ($0.403$)\\
\hline
$N$ & Setting 1 & $0$ ($0.02$) & $0$ ($0.02$) & $0$ ($0.547$)\\
\hline
$\gamma_2$ & Setting 2 & $-0.001$ ($0.279$) & $0.001$ ($0.279$) & $0$ ($0.774$)\\
\hline
$\gamma_1$ & Setting 2 & $0.003$ ($0.005$) & $-0.003$ ($0.005$) & $-0.001$ ($0.031$)\\
\hline
$N$ & Setting 2 & $0$ ($<0.0001$) & $0$ ($<0.0001$) & $0$ ($0.281$)\\
\hline
$\gamma_2$ & Setting 3 & $-0.002$ ($0.016$) & $0.002$ ($0.016$) & $0$ ($0.86$)\\
\hline
$\gamma_1$ & Setting 3 & $0.003$ ($0.003$) & $-0.003$ ($0.003$) & $0$ ($0.503$)\\
\hline
$N$ & Setting 3 & 0 ($<0.0001$) & $0$ ($<0.0001$) & $0$ ($0.476$)\\
\hline
\end{tabular}
\caption{Main effects and $p$-values of the effects of the parameters $\beta_2, \gamma_1, \gamma_2$, and $N$ on the Type I error rates, confidence interval coverage rates, and biases of Weighted PROCOVA in the case of $\gamma_2 \neq 0$.}
\label{tab:main_effects_experiment_2}
\end{table}

\subsection{Variance Reduction and Power Boost}
\label{sec:simulation_studies_results_boost}

Figures \ref{fig:pct_varred_fitted_exp1setting1} to \ref{fig:pct_varred_fitted_exp3setting3} summarize the percentage variance reductions of Weighted PROCOVA compared to PROCOVA for the two sets of simulations in which $\gamma_2 = 0$ and $\gamma_2 \neq 0$, where $\beta_1 = 0$ throughout. Variance reduction is significantly affected by $\gamma_1$, and increases as a function of $\gamma_1$. As $N$ increases the variation in variance reduction decreases, and the probability of an inflation of the variance compared to PROCOVA decreases. In addition, as $\gamma_2$ increases the variation in the variance reduction increases, and the probability of an inflation of the variance compared to PROCOVA increases. The $N$ and $\gamma_2$ parameters have an interaction effect on variance reduction (Figures \ref{fig:pct_varred_fitted_exp3setting1} to \ref{fig:pct_varred_fitted_exp3setting3}). The values of $\beta_2$ do not affect variance reduction. The second and third heteroskedasticity scenarios exhibit less variation in variance reduction compared to the first scenario for the cases of $\gamma_2 = 0$ and $\gamma_2 \neq 0$, respectively. 

Tables \ref{tab:power_table_1} and \ref{tab:power_table_2} summarize the power boosts of Weighted PROCOVA over PROCOVA for the two sets of simulations in which $\gamma_2 = 0$ and $\gamma_2 \neq 0$. This metric is driven by variance reduction, and so the relationships between the simulation parameters and power boost are similar to those for the case of variance reduction. The percentage variance reductions and power boosts of Weighted PROCOVA are of practical significance, as they can lead to the reduction of control arm sample sizes, and accelerate the timelines, of RCTs.

The dependence of Weighted PROCOVA's performance metrics on $\gamma_1$ corresponds to the fact that as $\gamma_1$ increases the $s_i^2$ explain more of the heteroskedasticity in the outcomes. This also follows from the fact that $\gamma_1$ is related to the expected coefficient of determination $R^2$ value for the skedastic function model fit. The values of $\gamma_1$ in our simulation studies correspond to expected $R^2$ values ranging from $0.1$ to $0.2$ for the skedastic function model fit. The percentage of variance reduction offered by Weighted PROCOVA can reach up to $50\%$, even when the skedastic function model is misspecified, so long as the personalized precisions explain the heteroskedasticity in the RCT. For small $\gamma_1$ the amount of heteroskedasticity would be negligible, so it would be preferable to use PROCOVA. 

\begin{figure}[H]
\centering
\includegraphics[scale=0.2]{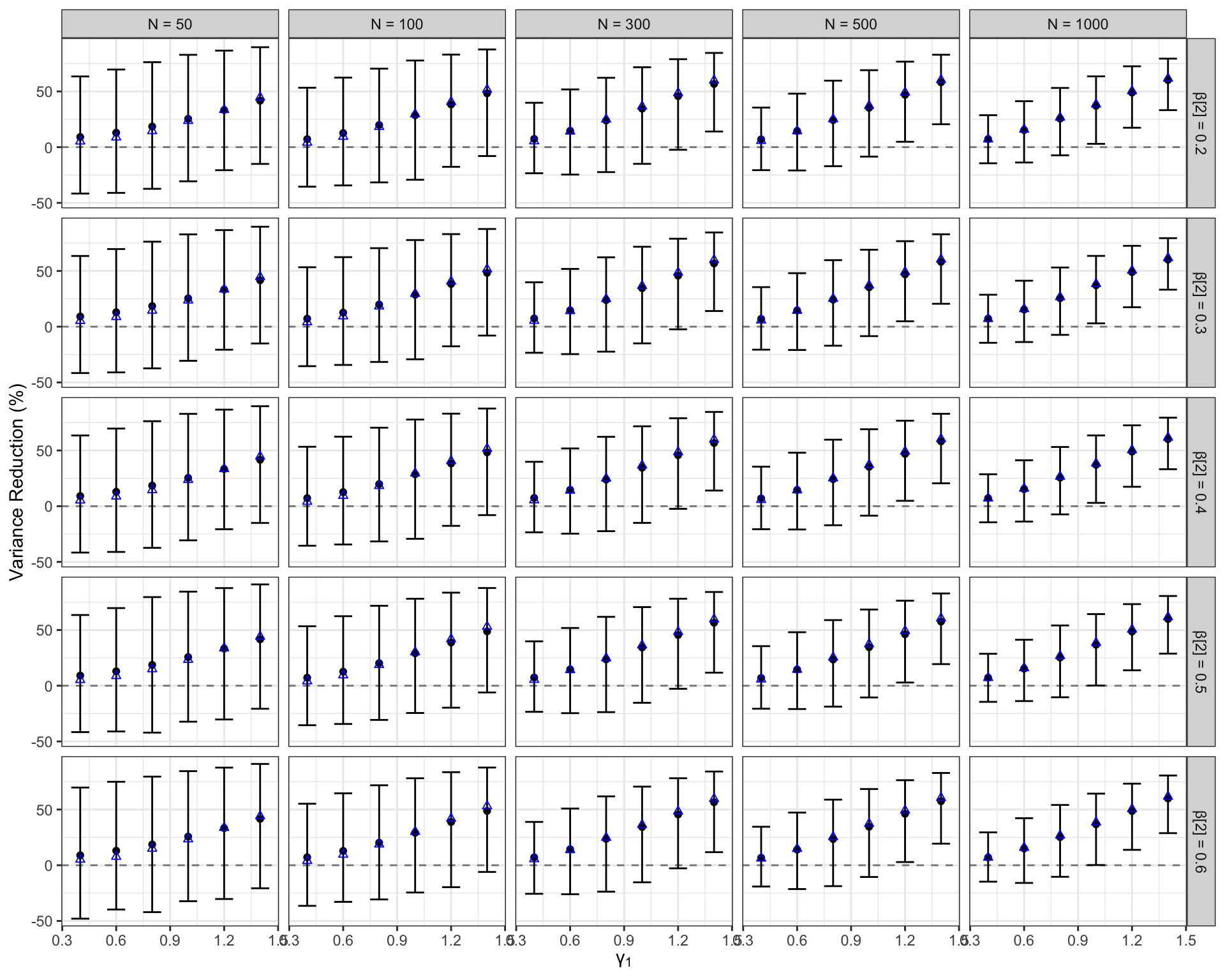}
\caption{Percentage variance reduction for Weighted PROCOVA compared to PROCOVA under the first heteroskedasticity scenario, with $\gamma_2 = 0$ and $\beta_1 = 0$. The triangle indicates the median of the percentage variance reduction, and the dot indicates the mean of the percentage variance reduction, across the simulated datasets.}
\label{fig:pct_varred_fitted_exp1setting1}
\end{figure}

\begin{figure}[H]
\centering
\includegraphics[scale=0.2]{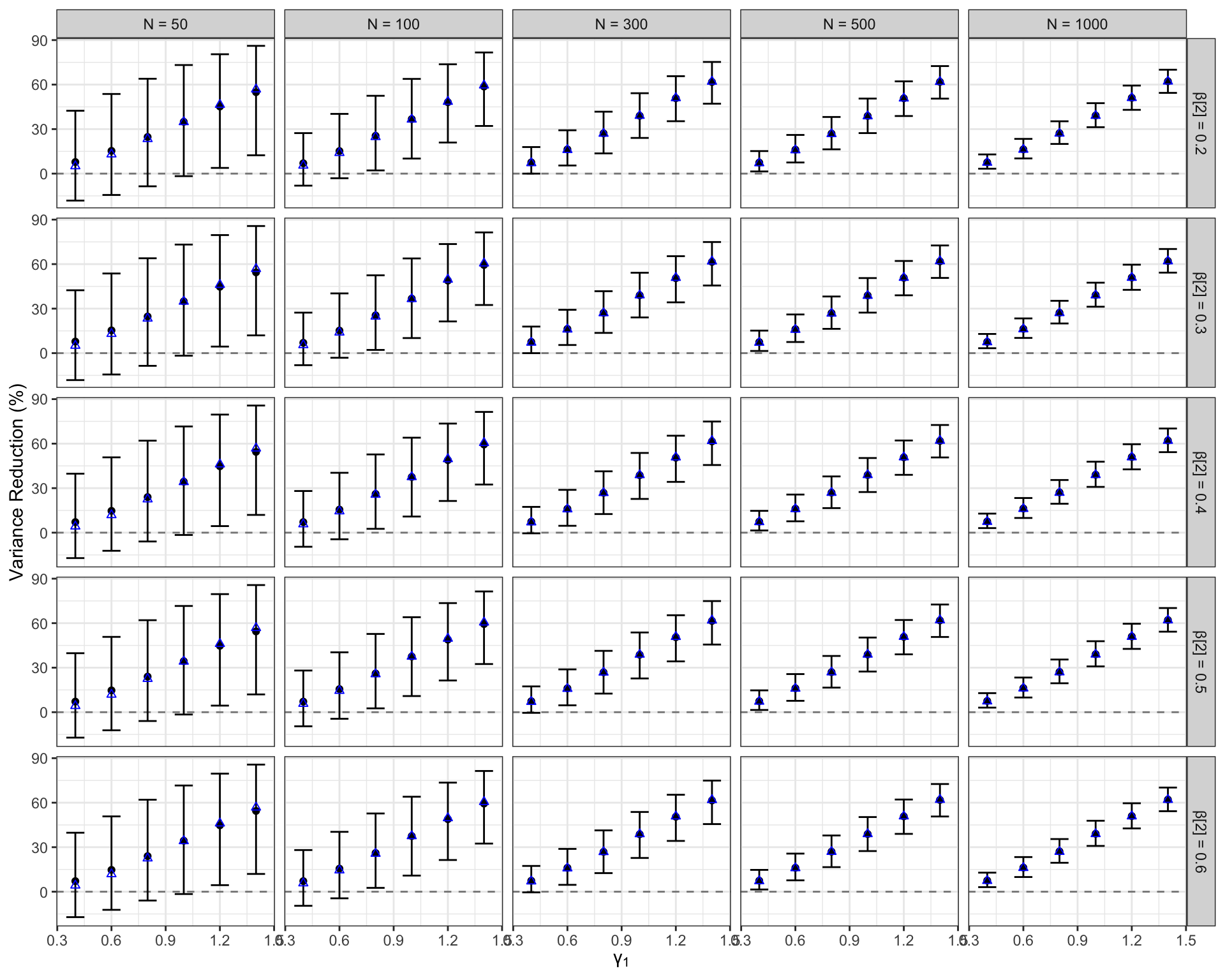}
\caption{Percentage variance reduction for Weighted PROCOVA compared to PROCOVA under the second heteroskedasticity scenario, with $\gamma_2 = 0$ and $\beta_1 = 0$. The triangle indicates the median of the percentage variance reduction, and the dot indicates the mean of the percentage variance reduction, across the simulated datasets.}
\label{fig:pct_varred_fitted_exp1setting2}
\end{figure}

\begin{figure}[H]
\centering
\includegraphics[scale=0.2]{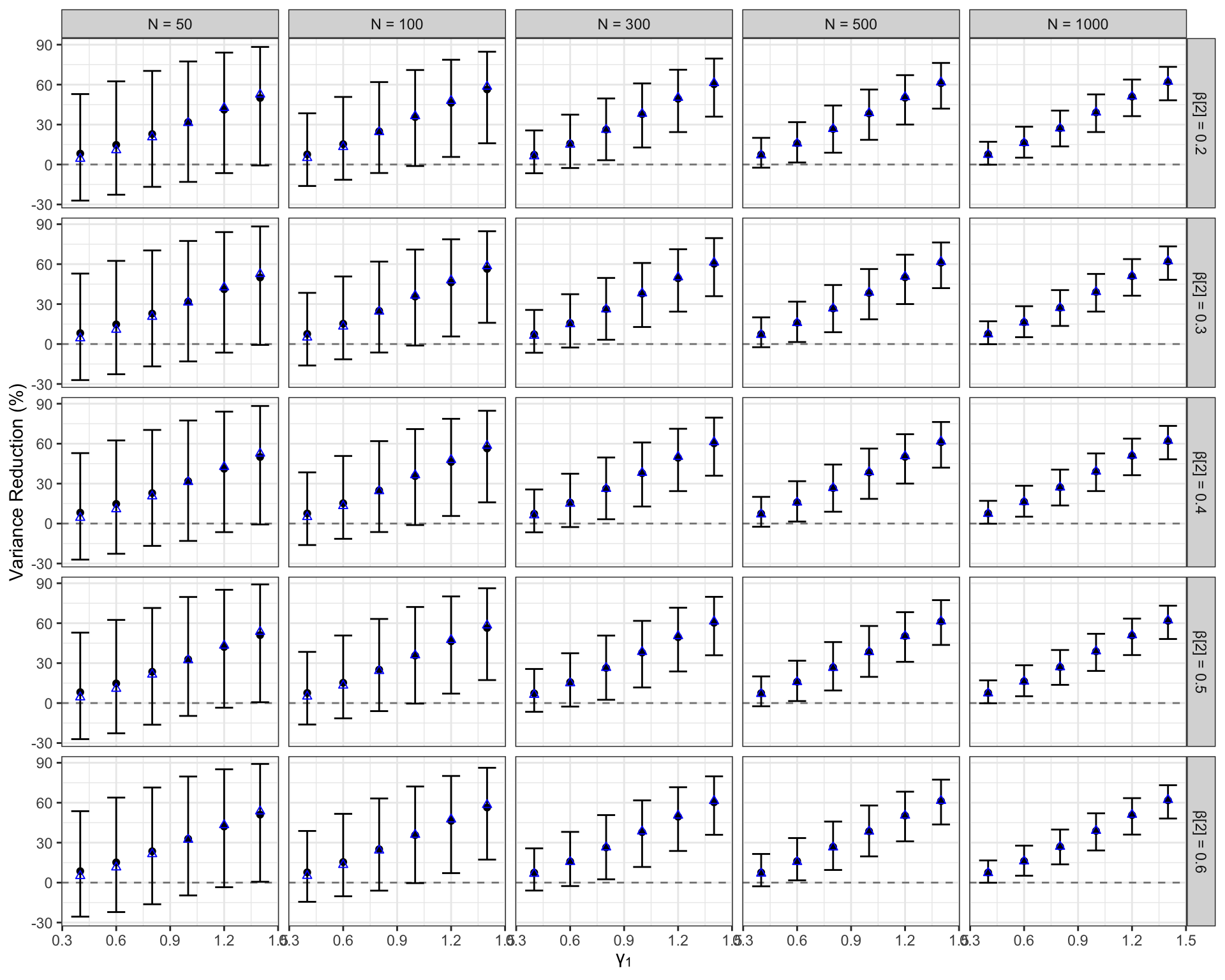}
\caption{Percentage variance reduction for Weighted PROCOVA compared to PROCOVA under the third heteroskedasticity scenario, with $\gamma_2 = 0$ and $\beta_1 = 0$. The triangle indicates the median of the percentage variance reduction, and the dot indicates the mean of the percentage variance reduction, across the simulated datasets.}
\label{fig:pct_varred_fitted_exp1setting3}
\end{figure}

\begin{figure}[H]
\centering
\includegraphics[scale=0.2]{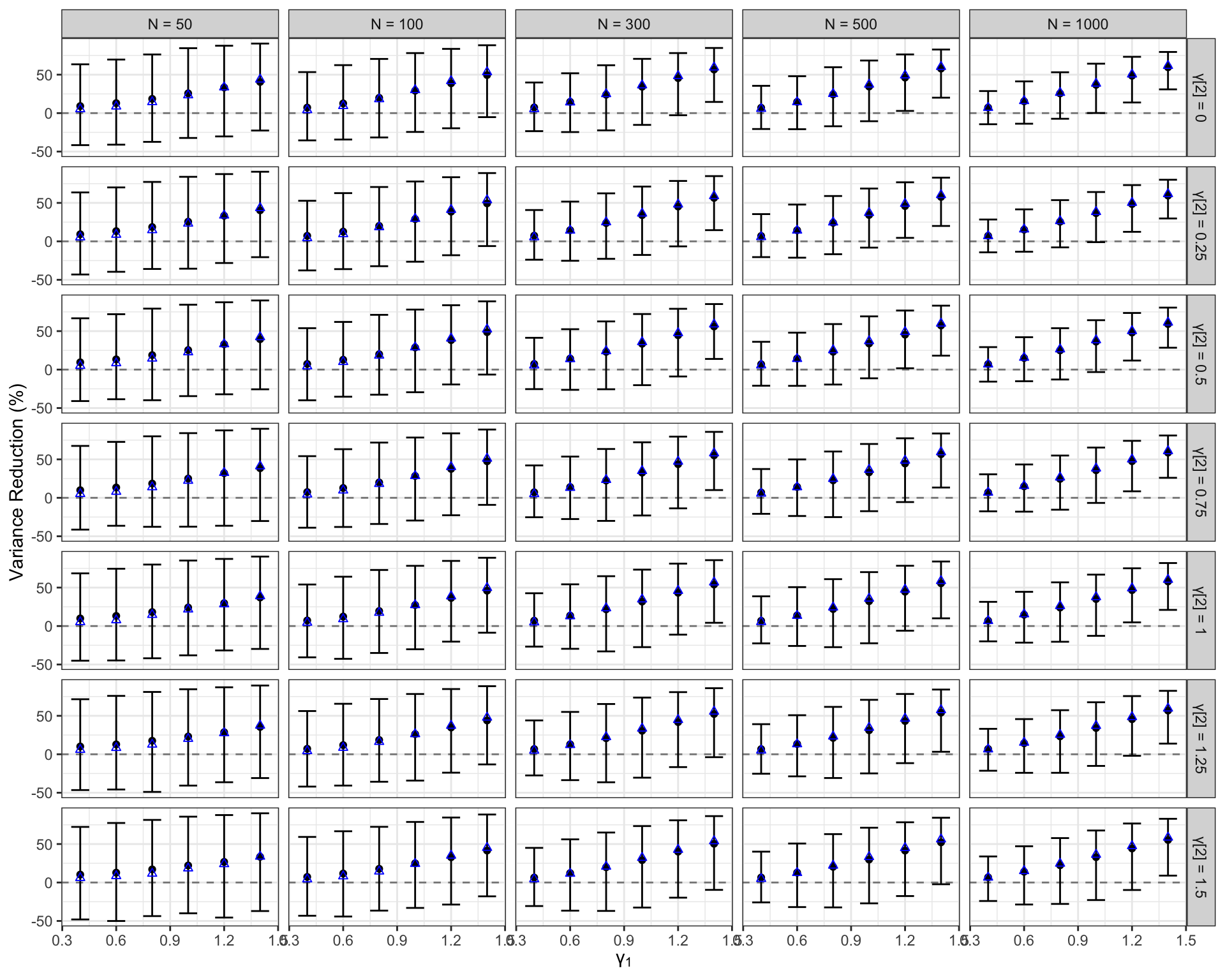}
\caption{Percentage variance reduction for Weighted PROCOVA compared to PROCOVA under the first heteroskedasticity scenario, with $\gamma_2 \neq 0$ and $\beta_1 = 0$. The triangle indicates the median of the percentage variance reduction, and the dot indicates the mean of the percentage variance reduction, across the simulated datasets.}
\label{fig:pct_varred_fitted_exp3setting1}
\end{figure}

\begin{figure}[H]
\centering
\includegraphics[scale=0.2]{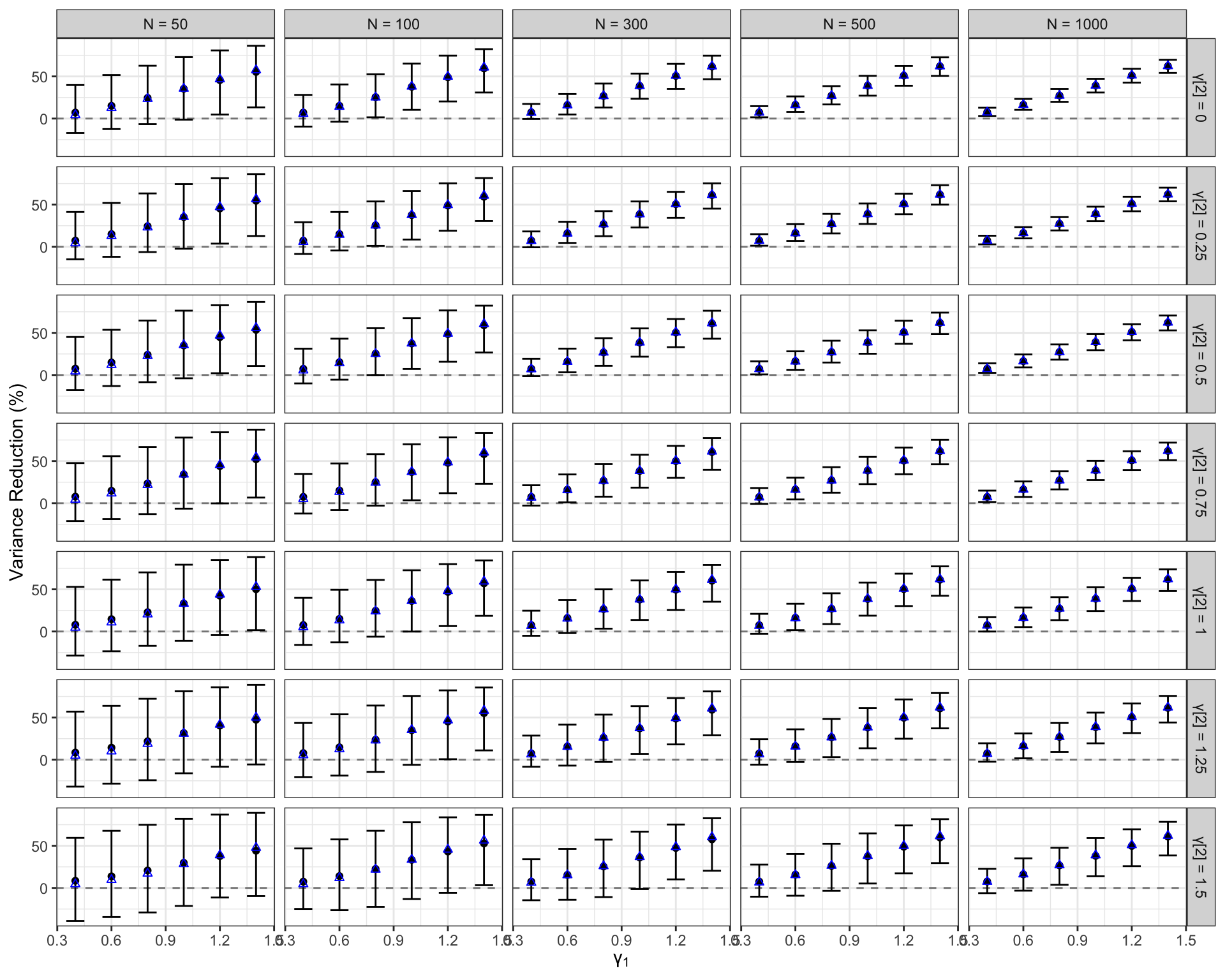}
\caption{Percentage variance reduction for Weighted PROCOVA compared to PROCOVA under the second heteroskedasticity scenario, with $\gamma_2 \neq 0$ and $\beta_1 = 0$. The triangle indicates the median of the percentage variance reduction, and the dot indicates the mean of the percentage variance reduction, across the simulated datasets.}
\label{fig:pct_varred_fitted_exp3setting2}
\end{figure}

\begin{figure}[H]
\centering
\includegraphics[scale=0.2]{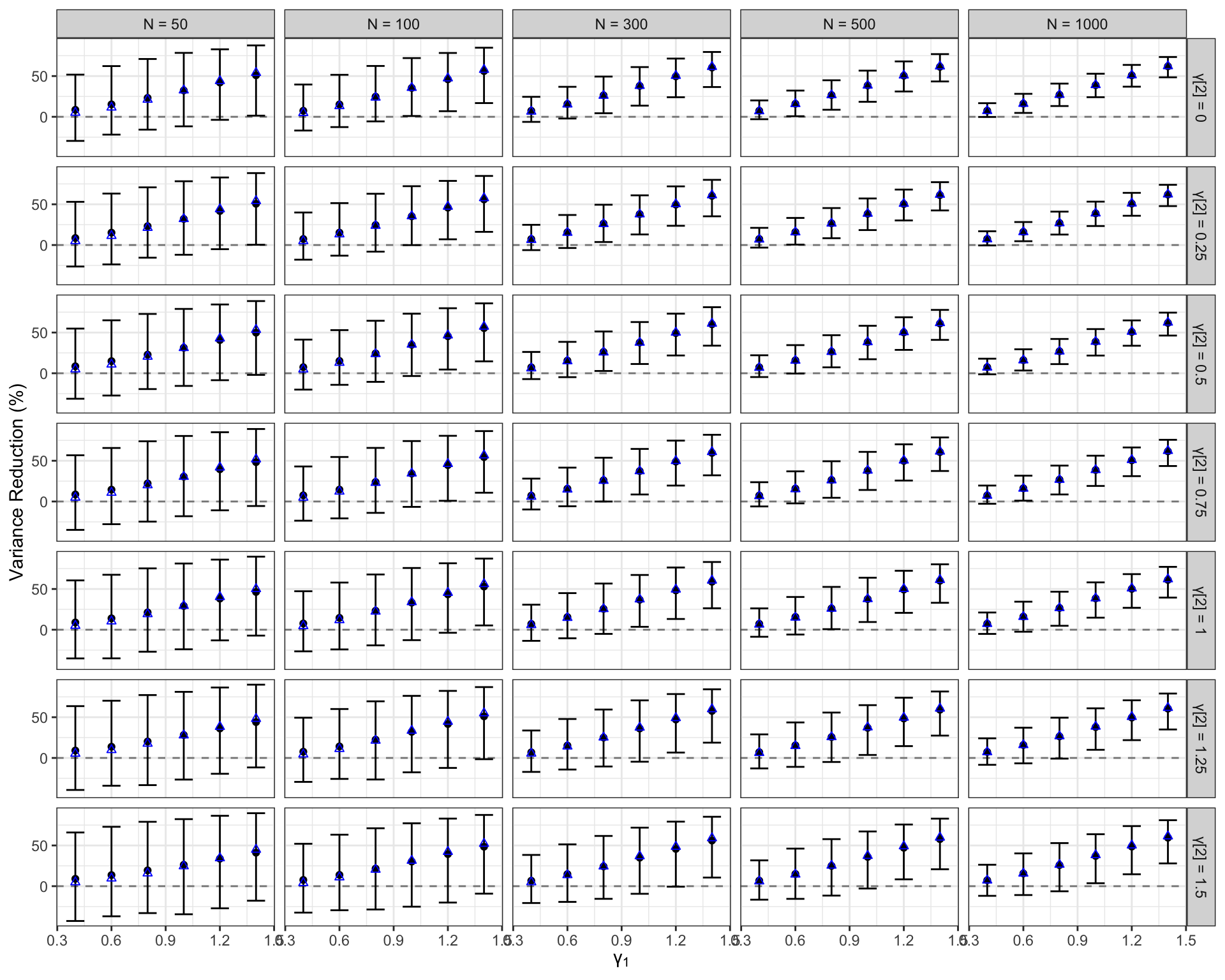}
\caption{Percentage variance reduction for Weighted PROCOVA compared to PROCOVA under the third heteroskedasticity scenario, with $\gamma_2 \neq 0$ and $\beta_1 = 0$. The triangle indicates the median of the percentage variance reduction, and the dot indicates the mean of the percentage variance reduction, across the simulated datasets.}
\label{fig:pct_varred_fitted_exp3setting3}
\end{figure}

\begin{sidewaystable}[!ph] 
\begin{tabular}{|c|c|l|l|l|l|l|l|l|}
\hline
$\beta_m$ & Setting & $\gamma_1=0.4$ & $\gamma_1=0.6$ & $\gamma_1=0.8$ & $\gamma_1=1$ & $\gamma_1=1.2$ & $\gamma_1=1.4$ \\
\hline
0.2 & 1 & 0.843 (0.03) & 0.863 (0.05) & 0.895 (0.08) & 0.921 (0.1) & 0.95 (0.13) & 0.979 (0.16)\\
\hline
0.3 & 1 & 0.843 (0.03) & 0.863 (0.05) & 0.895 (0.08) & 0.921 (0.1) & 0.95 (0.13) & 0.979 (0.16)\\
\hline
0.4 & 1 & 0.843 (0.03) & 0.863 (0.05) & 0.895 (0.08) & 0.921 (0.1) & 0.95 (0.13) & 0.979 (0.16)\\
\hline
0.5 & 1 & 0.843 (0.03) & 0.863 (0.05) & 0.895 (0.08) & 0.921 (0.1) & 0.95 (0.13) & 0.979 (0.16)\\
\hline
0.6 & 1 & 0.843 (0.03) & 0.863 (0.05) & 0.895 (0.08) & 0.921 (0.1) & 0.95 (0.13) & 0.979 (0.16)\\
\hline
0.2 & 2 & 0.838 (0.03) & 0.857 (0.06) & 0.911 (0.11) & 0.942 (0.15) & 0.977 (0.19) & 0.995 (0.18)\\
\hline
0.3 & 2 & 0.838 (0.03) & 0.857 (0.06) & 0.911 (0.11) & 0.942 (0.15) & 0.977 (0.19) & 0.995 (0.18)\\
\hline
0.4 & 2 & 0.838 (0.03) & 0.857 (0.06) & 0.911 (0.11) & 0.942 (0.15) & 0.977 (0.19) & 0.995 (0.18)\\
\hline
0.5 & 2 & 0.838 (0.03) & 0.857 (0.06) & 0.908 (0.1) & 0.943 (0.14) & 0.977 (0.17) & 0.993 (0.19)\\
\hline
0.6 & 2 & 0.832 (0.03) & 0.866 (0.06) & 0.908 (0.1) & 0.943 (0.14) & 0.977 (0.17) & 0.993 (0.19)\\
\hline
0.2 & 3 & 0.83 (0.02) & 0.882 (0.06) & 0.914 (0.1) & 0.945 (0.13) & 0.977 (0.17) & 0.992 (0.19)\\
\hline
0.3 & 3 & 0.83 (0.02) & 0.882 (0.06) & 0.914 (0.1) & 0.945 (0.13) & 0.977 (0.17) & 0.992 (0.19)\\
\hline
0.4 & 3 & 0.83 (0.02) & 0.882 (0.06) & 0.914 (0.1) & 0.945 (0.13) & 0.977 (0.17) & 0.992 (0.19)\\
\hline
0.5 & 3 & 0.83 (0.02) & 0.882 (0.06) & 0.914 (0.1) & 0.945 (0.13) & 0.977 (0.17) & 0.992 (0.19)\\
\hline
0.6 & 3 & 0.83 (0.02) & 0.882 (0.06) & 0.914 (0.1) & 0.945 (0.13) & 0.977 (0.17) & 0.992 (0.19)\\
\hline
\end{tabular}
\caption{Power and power boosts of Weighted PROCOVA over PROCOVA in the case that $\gamma_2 = 0$. Throughout all simulations the power of PROCOVA was $0.8$, and the power boosts of Weighted PROCOVA are in parentheses.}
\label{tab:power_table_1}
\end{sidewaystable}

\begin{sidewaystable}[!ph] 
\begin{tabular}{|c|c|l|l|l|l|l|l|}
\hline
$\gamma_2$ & Setting & $\gamma_1=0.4$ & $\gamma_1=0.6$ & $\gamma_1=0.8$ & $\gamma_1=1$ & $\gamma_1=1.2$ & $\gamma_1=1.4$ \\
\hline
0.00 & 1 & 0.828 (0.03) & 0.797 (0.05) & 0.849 (0.1) & 0.873 (0.13) & 0.919 (0.18) & 0.972 (0.16)\\
\hline
0.25 & 1 & 0.822 (0.03) & 0.862 (0.05) & 0.84 (0.1) & 0.917 (0.1) & 0.953 (0.14) & 0.967 (0.17)\\
\hline
0.50 & 1 & 0.798 (0.03) & 0.831 (0.04) & 0.869 (0.09) & 0.899 (0.11) & 0.943 (0.16) & 0.963 (0.19)\\
\hline
0.75 & 1 & 0.813 (0.03) & 0.842 (0.05) & 0.871 (0.08) & 0.906 (0.11) & 0.94 (0.16) & 0.96 (0.18)\\
\hline
1.00 & 1 & 0.795 (0.03) & 0.819 (0.05) & 0.841 (0.08) & 0.894 (0.12) & 0.931 (0.16) & 0.955 (0.2)\\
\hline
1.25 & 1 & 0.796 (0.02) & 0.822 (0.05) & 0.818 (0.07) & 0.891 (0.11) & 0.918 (0.14) & 0.952 (0.19)\\
\hline
1.50 & 1 & 0.789 (0.03) & 0.825 (0.06) & 0.84 (0.08) & 0.87 (0.13) & 0.912 (0.15) & 0.952 (0.2)\\
\hline
0.00 & 2 & 0.818 (0.02) & 0.83 (0.07) & 0.919 (0.1) & 0.94 (0.16) & 0.958 (0.2) & 0.991 (0.23)\\
\hline
0.25 & 2 & 0.803 (0.02) & 0.815 (0.06) & 0.907 (0.1) & 0.932 (0.16) & 0.978 (0.17) & 0.988 (0.24)\\
\hline
0.50 & 2 & 0.852 (0.03) & 0.87 (0.06) & 0.885 (0.1) & 0.956 (0.13) & 0.974 (0.2) & 0.993 (0.19)\\
\hline
0.75 & 2 & 0.798 (0.03) & 0.817 (0.07) & 0.902 (0.12) & 0.916 (0.16) & 0.97 (0.18) & 0.989 (0.21)\\
\hline
1.00 & 2 & 0.79 (0.03) & 0.851 (0.07) & 0.878 (0.1) & 0.929 (0.15) & 0.955 (0.21) & 0.983 (0.24)\\
\hline
1.25 & 2 & 0.782 (0.02) & 0.858 (0.06) & 0.879 (0.12) & 0.924 (0.16) & 0.958 (0.2) & 0.981 (0.23)\\
\hline
1.50 & 2 & 0.779 (0.03) & 0.838 (0.07) & 0.888 (0.11) & 0.917 (0.16) & 0.947 (0.2) & 0.977 (0.24)\\
\hline
0.00 & 3 & 0.787 (0.02) & 0.852 (0.06) & 0.879 (0.1) & 0.936 (0.16) & 0.947 (0.2) & 0.99 (0.21)\\
\hline
0.25 & 3 & 0.782 (0.02) & 0.839 (0.06) & 0.867 (0.1) & 0.928 (0.16) & 0.97 (0.19) & 0.988 (0.22)\\
\hline
0.50 & 3 & 0.793 (0.03) & 0.856 (0.06) & 0.892 (0.12) & 0.934 (0.15) & 0.966 (0.18) & 0.988 (0.22)\\
\hline
0.75 & 3 & 0.792 (0.03) & 0.84 (0.05) & 0.875 (0.1) & 0.923 (0.15) & 0.957 (0.21) & 0.993 (0.21)\\
\hline
1.00 & 3 & 0.79 (0.04) & 0.839 (0.07) & 0.865 (0.11) & 0.916 (0.17) & 0.95 (0.21) & 0.981 (0.24)\\
\hline
1.25 & 3 & 0.791 (0.03) & 0.832 (0.07) & 0.877 (0.12) & 0.907 (0.17) & 0.953 (0.2) & 0.978 (0.24)\\
\hline
1.50 & 3 & 0.792 (0.04) & 0.827 (0.06) & 0.864 (0.12) & 0.9 (0.15) & 0.944 (0.2) & 0.975 (0.23)\\
\hline
\end{tabular}
\caption{Power and power boosts of Weighted PROCOVA over PROCOVA in the case that $\gamma_2 \neq 0$. Throughout all simulations the power of PROCOVA was $0.8$, and the power boosts of Weighted PROCOVA are in parentheses.}
\label{tab:power_table_2}
\end{sidewaystable}

\section{Case Studies}
\label{sec:case_studies}

\subsection{Description of Randomized Controlled Trials on Alzheimer's Disease}
\label{sec:case_studies_outline}

We demonstrate the advantages of Weighted PROCOVA over PROCOVA and the traditional unadjusted analysis by re-analyzing data from three RCTs on Alzheimer's disease. All three of these trials were conducted by the Alzheimer’s Disease Cooperative Study (ADCS), and were multicenter, randomized, double-blind, placebo-controlled RCTs. The ADCS is a consortium of academic medical centers and private Alzheimer’s disease clinics funded by the National Institute on Aging to conduct clinical trials on Alzheimer’s disease. The first trial investigated Docosahexaenoic Acid (DHA) supplementation, and its design and analysis were documented by \citet{quinn_et_al_2010} (ClinicalTrials.gov ID:NCT00440050). The second studied the effects of Resveratrol, and was documented by \citet{turner_2015} (ClinicalTrials.gov ID:NCT01504854). The third trial assessed the effects of Valproate and was documented by \citet{tariot_2011} (ClinicalTrials.gov ID:NCT00071721). We first summarize these RCTs, and then proceed to describe the DTG that was used in our analyses and present the results of the analyses for each study.  

The objective of the DHA trial was to assess the effects of DHA supplementation in treating cognitive and functional decline for individuals with mild to moderate Alzheimer’s disease \citep[p.~1903]{quinn_et_al_2010}. This trial was conducted from $2007$ to $2009$. In this trial, $238$ participants were randomly assigned to the active treatment arm, and the remaining $164$ participants were assigned placebo. Multiple covariates were measured at baseline, including demographics and patient characteristics (e.g., sex, age, region, and weight), lab tests (e.g., blood pressure and ApoE4 status \citep{coon_et_al_2007, safieh_et_al_2019}), and component scores of cognitive tests. The inclusion critera for this trial required subjects to be at least $50$ years old and have a baseline ``mini-mental state'' \citep[MMSE,][]{folstein_et_al_1975} between $14$ and $26$. Two co-primary endpoints were considered: changes in the Alzheimer’s Disease Assessment Scale - Cognitive Subscale \citep[ADAS-Cog 11,][]{rosen_et_al_1984} over $6$, $12$, and $18$ months, and changes in the Clinical Dementia Rating Scale Sum of Boxes scores \citep[CDR-SB,][]{morris_1992} over $6$, $12$, and $18$ months. The timepoint of interest in the original analysis was $18$ months. The original set of results for this RCT did not indicate a statistically significant improvement on the ADAS-Cog score under DHA supplementation, with the average change from baseline being $7.98$ points for the treatment group and $8.27$ points for the placebo group \citep[p.~1906--1909]{quinn_et_al_2010}.

The objective of the Resveratrol trial was to assess the effects of this treatment on plasma and CSF biomarker levels in patients with mild to moderate Alzheimer’s disease \citep[p.~1383]{turner_2015}. This trial was conducted from $2012$ to $2014$. In this trial, $64$ participants were randomly assigned to the active treatment arm, and the remaining $55$ participants were assigned placebo. Covariates measured at baseline include plasma and CSF biomarkers for Alzheimer’s disease (e.g., plasma A$\beta$40 and A$\beta$42, CSF A$\beta$40, A$\beta$42, tau, and phospho-tau 181), brain volumes (e.g., ventricular, hippocampal, and total volume), lab values (e.g., insulin, glucose, and APOE), and clinical measures (e.g., ADAS-Cog, MMSE, and CDR-SB). The inclusion critera for this trial required subjects to be at least $50$ years old and have a baseline MMSE between $14$ and $26$. The primary endpoints were adverse event counts and MRI imaging data, and ADAS-Cog and CDR-SB at the $12$ month timepoint were considered as secondary measures of disease progression. The original analyses did not demonstrate statistically significant improvements on ADAS-Cog or CDR-SB, and \citet[p.~1387]{turner_2015} noted that the RCT was underpowered for the analysis of these secondary outcomes.

The objective of the Valproate trial was to assess whether this treatment could delay or prevent the emergence of agitation and/or psychosis in participants with mild to moderate Alzheimer’s disease who had not yet experienced agitation or psychosis \citep[p.~854]{tariot_2011}. This trial was conducted from $2005$ to $2009$. In this trial, $153$ participants were randomly assigned to the active treatment arm, and the remaining $160$ participants were assigned placebo. The assignment was done in permuted blocks of four participants \citep[p.~856]{tariot_2011}. Covariates measured at baseline include demographics (e.g., age, sex, and race), vital signs (e.g., blood pressure and temperature), lab tests (e.g., ApoE4), and component scores of behavioral tests (e.g., the Neuropsychiatric Inventory \citep[NPI,][]{cummings_et_al_1994} and the Cohen-Mansfield Agitation Inventory \citep[CMAI,][]{cohen-mansfield_1986}) and cognitive tests (e.g., ADAS-Cog, MMSE, ADCS-ADL \citep{galasko_et_al_1997}, CDR-SB, and ADCS-CGIC \citep{schneider_et_al_2006}). The inclusion critera for this trial required subjects to be between $55$ and $90$ years of age, and have a baseline MMSE between $12$ and $20$. The primary endpoint was NPI at $24$ months, but secondary measures of ADAS-Cog and CDR-SB at $24$ months were also collected and analyzed. Both the ADAS-Cog and CDR-SB endpoints were measured at $6$, $12$, $18$, and $24$ months. The original analysis indicated that Valproate was statistically significantly worse than placebo for the ADAS-Cog endpoint, and that there were no differences between Valproate and placebo for the CDR-SB endpoint \citep[p.~859, 869--870]{tariot_2011}.

\subsection{Description of the Digital Twin Generator for Alzheimer's Disease}
\label{sec:case_studies_DTG}

The DTG that we employ in our three Alzheimer's case studies is Unlearn.AI's AD-DTG-3.1 \citep{unlearn_2023}. This DTG is an updated version of the new DTG architecture that was created by Unlearn.AI in June 2023 \citep{smith_2023, walsh_2023}. AD-DTG-3.1 employs a proprietary model architecture with two important features. First is that, conditional on a participant’s baseline covariates, AD-DTG-3.1 yields participant-specific expected values and variances of the control potential outcomes for the endpoints of interest at a three-month cadence post-baseline. Second is that AD-DTG-3.1 incorporates a new neural network imputation model to handle missing baseline data. This feature yields reduced sensitivity to missing baseline data, and better performance for the DTG overall. Details on the inputs, outputs, performance, and sensitivity of predictions to missing data are provided with respect to a reference dataset in the specification sheet for AD-DTG-3.1 \citep{unlearn_2023}. Additional information about the fundamental neural network architecture underlying AD-DTG-3.1 is provided by \citet{lang_et_al_2023}.  

AD-DTG-3.1 was trained on a large historical control dataset comprised of approximately 20,000 Alzheimer's patients and approximately 10,000 patients with Mild Cognitive Impairment (MCI). These data were sourced from both clinical trial control arms and observational studies. The clinical trial data used in training AD-DTG-3.1 were obtained from the Critical Path Institute’s Critical Path for Alzheimer’s Disease \citep{cpad_2023} consortium. As such, the investigators within CPAD contributed to the design and implementation of the CPAD database and/or provided data, but did not participate in the analysis of the data or the writing of this report. These data were downloaded on January 26, 2023 (which is an older version of the current release), and consist of control arms of $35$ Alzheimer’s disease clinical trials. The observational data were combined from the Alzheimer's Disease Neuroimaging Initiative (ADNI), the European Prevention of Alzheimer's Dementia Consortium (EPAD, ClinicalTrials.gov ID: NCT02804789), the National Alzheimer's Coordinating Center (NACC), and Open Access Series of Imaging Studies \citep[OASIS,][]{koenig_2020}. All of these sources include measurements of ADAS-Cog 11 and/or CDR-SB (as well as other endpoints) at three- or six-month intervals post-baseline. These training data contain the same baseline covariates as in the three case studies. Prior to examining any RCT data, AD-DTG-3.1 was fit to the change in ADAS-Cog 11 and CDR-SB at three-month intervals post-baseline (up to and including $24$ months) from the historical control data conditional on the measured covariates. After training, AD-DTG-3.1 generated prognostic scores and personalized precisions for each participant at each timepoint of interest in the three RCTs. It is important to recognize that AD-DTG-3.1 was not trained on any of the case study data, and that the sole inputs for generating prognostic scores and personalized precisions from AD-DTG-3.1 in the three RCTs are the participants' baseline covariates (not their outcomes). 

\subsection{Summary of Analyses}
\label{sec:case_studies_summary}

For each RCT, we provide the treatment effect estimates and the squares of the heteroskedasticity consistent (HC) standard errors at the timepoints of interest for both ADAS-Cog 11 and CDR-SB. We calculate HC standard errors according to the HC1 approach \citep[p.~6--7]{mackinnnon_white_1985, romano_resurrecting_2016}. The use of HC standard errors in our analyses corresponds to regulatory guidance on the use of robust standard errors as protection against model misspecification \citep{food_and_drug_administration_adjusting_2023}. All of our analyses utilized case-wise deletion in the case of missing responses. We compare inferences obtained from the prognostic covariate adjustment (abbreviated as ``PROCOVA'' in our summaries) and Weighted PROCOVA methods via the estimated percentage of variance reduction for the treatment effect estimator, and the prospective power gain for testing the treatment effect in a future trial, with respect to the unadjusted analysis. The sole predictors that are utilized in the mean model for both the PROCOVA and Weighted PROCOVA analyses are the treatment indicator and the prognostic score. Similar to the prospective power analysis of \citet[p.~1905]{quinn_et_al_2010}, in our power calculations we consider the case of a decline of $3.8$ ADAS-Cog 11 points per year in the placebo group, and set the unadjusted analysis as the baseline by identifying the parameters of the power calculation such that the unadjusted analysis will have $80\%$ power to detect a $33\%$ reduction in the rate of ADAS-Cog 11 decline in the treatment group. Although our power calculations utilize estimated variances of the treatment effects from the three RCTs, they are not post-hoc power calculations. 

The results for the DHA trial are summarized in Tables \ref{tab:DHA_mean_variance} and \ref{tab:DHA_power}. We observe from Table \ref{tab:DHA_mean_variance} that the treatment effect estimates for the PROCOVA and Weighted PROCOVA methods are similar to those of the unadjusted analysis for each endpoint at each respective timepoint. This helps to indicate that neither of these two methods introduce bias in the treatment effect estimators. We also have that Weighted PROCOVA is estimated to further reduce the variance of the treatment effect estimator from $2\%$ to $9\%$ compared to PROCOVA for the ADAS-Cog 11 endpoint, and from $0.2\%$ to $8\%$ for the CDR-SB endpoint. Table \ref{tab:DHA_power} indicates that Weighted PROCOVA can further increase the power for testing the treatment effect by $1\%$ to $3\%$ compared to PROCOVA for the ADAS-Cog 11 endpoint, and up to $3\%$ for the CDR-SB endpoint. 

\begin{table}[ht]
\centering
\begin{tabular}{| c | c | c | c | c |}
\hline
Endpoint & Analysis Method & $6$ Months & $12$ Months & $18$ Months \\
\hline
ADAS-Cog 11 & Unadjusted & $-0.415$ ($0.32$) & $-1.515$ ($0.544$) & $-0.098$ ($1.071$) \\
\hline
ADAS-Cog 11 & PROCOVA & $-0.274$ ($0.285$) & $-1.335$ ($0.496$) & $-0.079$ ($0.98$) \\
\hline
ADAS-Cog 11 & Weighted PROCOVA & $-0.423$ ($0.272$) & $-1.433$ ($0.451$) & $-0.385$ ($0.959$) \\
\hline
CDR-SB & Unadjusted & $0.119$ ($0.037$) & $-0.19$ ($0.063$) & $0.01$ ($0.111$) \\
\hline
CDR-SB & PROCOVA & $0.113$ ($0.035$) & $-0.193$ ($0.058$) & $-0.043$ ($0.101$) \\
\hline
CDR-SB & Weighted PROCOVA & $0.058$ ($0.032$) & $-0.207$ ($0.053$) & $-0.041$ ($0.101$) \\
\hline
\end{tabular}
\caption{Treatment effect estimates and the squares of their HC1 standard errors (in parentheses) for the ADAS-Cog 11 and CDR-SB endpoints at $6$, $12$, and $18$ months in the DHA trial.}
\label{tab:DHA_mean_variance}
\end{table}

\begin{table}[ht]
\centering
\begin{tabular}{| c | c | c | c | c |}
\hline
Endpoint & Analysis Method & $6$ Months & $12$ Months & $18$ Months \\
\hline
ADAS-Cog 11 & PROCOVA & $4.31\%$ & $3.45\%$ & $3.39\%$ \\
\hline
ADAS-Cog 11 & Weighted PROCOVA & $5.91\%$ & $6.75\%$ & $4.13\%$ \\
\hline
CDR-SB & PROCOVA & $2.11\%$ & $3.48\%$ & $3.51\%$ \\ 
\hline
CDR-SB & Weighted PROCOVA & $5.28\%$ & $6.43\%$ & $3.59\%$ \\
\hline
\end{tabular}
\caption{Prospective power boosts of the PROCOVA and Weighted PROCOVA methods over the unadjusted analysis for the ADAS-Cog 11 and CDR-SB endpoints at $6$, $12$, and $18$ months in the DHA trial. The unadjusted analysis was always powered to have $80\%$ power in these comparisons.}
\label{tab:DHA_power}
\end{table}

The results for the Resveratrol trial are summarized in Tables \ref{tab:resveratrol_mean_variance} and \ref{tab:resveratrol_power}. As for the DHA trial, we observe from Table \ref{tab:resveratrol_mean_variance} that the treatment effect estimates for the three methods are similar, indicating that the adjustment methods do not introduce bias. Weighted PROCOVA is estimated to further reduce the variance of the treatment effect estimator by up to $2\%$ compared to PROCOVA for the ADAS-Cog 11 endpoint, and by up to $20\%$ for the CDR-SB endpoint. Table \ref{tab:resveratrol_power} indicates that Weighted PROCOVA can further increase the power for testing the treatment effect by up to $0.5\%$ compared to PROCOVA for the ADAS-Cog 11 endpoint, and by up to $7\%$ for the CDR-SB endpoint. 

\begin{table}[ht]
\centering
\begin{tabular}{| c | c | c | c |}
\hline
Endpoint & Analysis Method & $6$ Months & $12$ Months \\
\hline
ADAS-Cog 11 & Unadjusted & $-2.356$ ($1.084$) & $-0.395$ ($1.631$) \\
\hline
ADAS-Cog 11 & PROCOVA & $-2.417$ ($1.048$) & $-0.534$ ($1.367$) \\
\hline
ADAS-Cog 11 & Weighted PROCOVA & $-2.399$ ($1.044$) & $-0.547$ ($1.345$)  \\
\hline
CDR-SB & Unadjusted & $-0.164$ ($0.091$) & $-0.208$ ($0.21$)  \\
\hline
CDR-SB & PROCOVA & $-0.16$ ($0.089$) & $-0.243$ ($0.188$)  \\
\hline
CDR-SB & Weighted PROCOVA & $-0.093$ ($0.072$) & $-0.269$ ($0.148$) \\
\hline
\end{tabular}
\caption{Treatment effect estimates and the squares of their HC1 standard errors (in parentheses) for the ADAS-Cog 11 and CDR-SB endpoints at $6$ and $12$ months in the Resveratrol trial.}
\label{tab:resveratrol_mean_variance}
\end{table}

\begin{table}[ht]
\centering
\begin{tabular}{| c | c | c | c |}
\hline
Endpoint & Analysis Method & $6$ Months & $12$ Months \\
\hline
ADAS-Cog 11 & PROCOVA & $1.32\%$ & $6.41\%$ \\
\hline
ADAS-Cog 11 & Weighted PROCOVA & $1.46\%$ & $6.96\%$ \\
\hline
CDR-SB & PROCOVA & $1.12\%$ & $4.17\%$ \\ 
\hline
CDR-SB & Weighted PROCOVA & $8.33\%$ & $11.67\%$ \\
\hline
\end{tabular}
\caption{Prospective power boosts of the PROCOVA and Weighted PROCOVA methods over the unadjusted analysis for the ADAS-Cog 11 and CDR-SB endpoints at $6$ and $12$ months in the Resveratrol trial. The unadjusted analysis was always powered to have $80\%$ power in these comparisons.}
\label{tab:resveratrol_power}
\end{table}

Finally, the results for the Valproate trial are summarized in Tables \ref{tab:valproate_mean_variance} and \ref{tab:valproate_power}. Once again, we observe that the adjustment methods do not introduce bias because the respective treatment effect estimates in Table \ref{tab:valproate_mean_variance} are similar. Weighted PROCOVA is estimated to further reduce the variance of the treatment effect estimator by up to $4\%$ compared to PROCOVA for the ADAS-Cog 11 endpoint, and by up to $3\%$ for the CDR-SB endpoint. Table \ref{tab:valproate_power} indicates that Weighted PROCOVA has similar power compared to the PROCOVA for the ADAS-Cog 11 endpoint, and improved power for the CDR-SB endpoint. For the $12$ and $24$ month timepoints, both adjustment methods have less than $80\%$ power for testing the treatment effect. 

\begin{table}[ht]
\centering
\begin{tabular}{| c | c | c | c | c | c |}
\hline
Endpoint & Analysis Method & $6$ Months & $12$ Months & $18$ months & $24$ months \\
\hline
ADAS-Cog 11 & Unadjusted & $1.189$ ($0.607$) & $2.507$ ($1.092$) & $0.528$ ($1.742$) & $1.106$ ($2.601$) \\
\hline
ADAS-Cog 11 & PROCOVA & $0.374$ ($0.574$) & $1.669$ ($1.14$) & $-0.351$ ($1.632$) & $-0.389$ ($2.649$) \\
\hline
ADAS-Cog 11 & Weighted PROCOVA &  $0.371$ ($0.577$) & $1.53$ ($1.098$) & $-0.362$ ($1.624$) & $-0.385$ ($2.642$) \\
\hline
CDR-SB & Unadjusted & $0.372$ ($0.065$) & $0.377$ ($0.137$) & $0.783$ ($0.249$) & $0.368$ ($0.267$) \\
\hline
CDR-SB & PROCOVA & $0.261$ ($0.064$) & $0.295$ ($0.134$) & $0.668$ ($0.242$) & $0.387$ ($0.251$) \\
\hline
CDR-SB & Weighted PROCOVA & $0.232$ ($0.063$) & $0.244$ ($0.13$) & 	$0.706$ ($0.235$) & $0.42$ ($0.248$) \\
\hline
\end{tabular}
\caption{Treatment effect estimates and the squares of their HC1 standard errors (in parentheses) for the ADAS-Cog 11 and CDR-SB endpoints at $6$, $12$, $18$, and $24$ months in the Valproate trial.}
\label{tab:valproate_mean_variance}
\end{table}

\begin{table}[ht]
\centering
\begin{tabular}{| c | c | c | c | c | c |}
\hline
Endpoint & Analysis Method & $6$ Months & $12$ Months & $18$ months & $24$ months \\
\hline
ADAS-Cog 11 & PROCOVA & $2.15\%$ & $-1.69\%$ & $2.48\%$ & $-0.72\%$ \\
\hline
ADAS-Cog 11 & Weighted PROCOVA & $1.99\%$ & $-0.21\%$ & $2.66\%$ & $-0.62\%$ \\
\hline
CDR-SB & PROCOVA & $0.67\%$ & $0.96\%$ & $0.98\%$ & $2.39\%$ \\
\hline
CDR-SB & Weighted PROCOVA & $1.01\%$ & $2.27\%$ & $2.08\%$ & $2.73\%$ \\
\hline
\end{tabular}
\caption{Prospective power boosts of the PROCOVA and Weighted PROCOVA methods over the unadjusted analysis for the ADAS-Cog 11 and CDR-SB endpoints at $6$, $12$, and $18$ months in the Valproate trial. The unadjusted analysis was always powered to have $80\%$ power in these comparisons.}
\label{tab:valproate_power}
\end{table}

\section{Concluding Remarks}
\label{sec:concluding_remarks}

The limitations, challenges, and long timelines associated with RCTs can be effectively resolved via generative AI-powered digital twins and statistical methods that utilize the rich set of information uniquely provided by DTGs. Training a DTG on a historical control dataset with a large number of covariates enables one to leverage a small set of summaries from the digital twin distributions as powerful adjustment factors to explain substantial variation in the outcomes. In the case of heteroskedasticity, one can adjust for both the expectation and the variance of the digital twin distributions via Weighted PROCOVA to increase the efficiency and power of the treatment effect inferences. As we have demonstrated theoretically and empirically, our method can yield point estimators for treatment effects that are unbiased and have less variance, and hypothesis tests for treatment effects with controlled Type I error rates that are more powerful, compared to PROCOVA or unadjusted methods that do not account for the variances of the digital twin distributions. Weighted PROCOVA had three additional attractive features. First, the DTG is trained on external control data, so that it effectively incorporates additional information beyond that in the RCT. Second, it does not introduce bias in the analysis because the external control data are independent of the RCT and the inputs for the DTG in the RCT are the participants' baseline covariates. Third, the logic underlying the method is intuitive and easy to explain, as it assigns more weight to participants for whom the DTG is more confident in terms of predicting their control potential outcomes, and less weight to participants for whom the DTG is less confident. Our analyses of three RCTs further demonstrate the consistency in the benefits and features of Weighted PROCOVA.

The combination of generative AI and Weighted PROCOVA can power the next generation of RCTs and accelerate drug development. They can be seamlessly integrated into both the design and analysis of a RCT, maintaining randomization and its benefits while contributing additional powerful adjustment features to reduce the uncertainty and boost the power of treatment effect inferences. In particular, digital twin distributions can be prospectively computed for RCT participants prior to treatment assignments. Furthermore, as Weighted PROCOVA yields valid inferences that meet regulatory guidance on bias, Type I error rates, and confidence interval coverage, adjustment for both the prognostic score and personalized precision can be incorporated into the primary or secondary analyses of the RCT. The benefits of this combination can also be estimated prospectively via external control data. Ultimately, combining generative AI with Weighted PROCOVA can enable trial designers to reduce the number of participants, and shorten the enrollment timeline, for a RCT without sacrificing the validity of the analysis or decreasing the likelihood of producing evidence suitable for making regulatory decisions.

\section*{Data Availability}

The data used in this study are available from the following sources, subject to their discretion.

Certain data used in the preparation of this article were obtained from the Alzheimer’s Disease Neuroimaging Initiative (ADNI) database (adni.loni.usc.edu). The ADNI was launched in 2003 as a public-private partnership, led by Principal Investigator Michael W. Weiner, MD. The primary goal of ADNI has been to test whether serial magnetic resonance imaging (MRI), positron emission tomography (PET), other biological markers, and clinical and neuropsychological assessment can be combined to measure the progression of mild cognitive impairment (MCI) and early Alzheimer’s disease (AD). A complete listing of ADNI investigators can be found at: http://adni.loni.usc.edu/wp-content/uploads/how\_to\_apply/ADNI\_Acknowledgement\_List.pdf.

Certain data used in the preparation of this article for the Quinn et al, 2010, Tariot et al, 2011, and Turner et al, 2015 studies were obtained from the University of California, San Diego Alzheimer’s Disease Cooperative Study Legacy database.

Certain data used in preparation of this article were obtained from the EPAD LCS data set V.IMI, doi:10.34688/epadlcs\_v.imi\_20.10.30. The EPAD LCS was launched in 2015 as a public-private partnership, led by Chief Investigator Professor Craig Ritchie MB BS. The primary research goal of the EPAD LCS is to provide a well-phenotyped probability-spectrum population for developing and continuously improving disease models for Alzheimer’s disease in individuals without dementia. A complete list of EPAD Investigators can be found at: http://ep-ad.org/wp-content/uploads/2020/12/202010\_List-of-epadistas.pdf.

\section*{Acknowledgments}

Data collection and sharing for this project was funded in part by the Alzheimer’s Disease Neuroimaging Initiative (ADNI) (National Institutes of Health Grant U01 AG024904) and DOD ADNI (Department of Defense award number W81XWH-12-2-0012). ADNI is funded by the National Institute on Aging, the National Institute of Biomedical Imaging and Bioengineering, and through generous contributions from the following: AbbVie, Alzheimer’s Association; Alzheimer’s Drug Discovery Foundation; Araclon Biotech; BioClinica, Inc.; Biogen; Bristol-Myers Squibb Company; CereSpir, Inc.; Cogstate; Eisai Inc.; Elan Pharmaceuticals, Inc.; Eli Lilly and Company; EuroImmun; F. Hoffmann-La Roche Ltd and its affiliated company Genentech, Inc.; Fujirebio; GE Healthcare; IXICO Ltd.; Janssen Alzheimer Immunotherapy Research \& Development, LLC.; Johnson \& Johnson Pharmaceutical Research \& Development LLC.; Lumosity; Lundbeck; Merck \& Co., Inc.; Meso Scale Diagnostics, LLC.; NeuroRx Research; Neurotrack Technologies; Novartis Pharmaceuticals Corporation; Pfizer Inc.; Piramal Imaging; Servier; Takeda Pharmaceutical Company; and Transition Therapeutics. The Canadian Institutes of Health Research is providing funds to support ADNI clinical sites in Canada. Private sector contributions are facilitated by the Foundation for the National Institutes of Health (www.fnih.org). The grantee organization is the Northern California Institute for Research and Education, and the study is coordinated by the Alzheimer’s Therapeutic Research Institute at the University of Southern California. ADNI data are disseminated by the Laboratory for Neuro Imaging at the University of Southern California. 

Data collection and sharing for this project was funded in part by the University of California, San Diego Alzheimer’s Disease Cooperative Study (ADCS) (National Institute on Aging Grant Number U19AG010483). 

This work used data and/or samples from the EPAD project which received support from the EU/EFPIA Innovative Medicines Initiative Joint Undertaking EPAD grant agreement no 115736 and an Alzheimer’s Association Grant (SG-21-818099-EPAD). 

Certain data used in the preparation of this article were provided [in part] by OASIS [OASIS-4: Clinical Cohort: Principal Investigators: T. Benzinger, L. Koenig, P. LaMontagne].

Certain data used in the preparation of this article were obtained from the National Alzheimer's Coordinating Center (NACC) database. The NACC database is funded by NIA/NIH Grant U24 AG072122. NACC data are contributed by the NIA-funded ADRCs: P30 AG062429 (PI James Brewer, MD, PhD), P30 AG066468 (PI Oscar Lopez, MD), P30 AG062421 (PI Bradley Hyman, MD, PhD), P30 AG066509 (PI Thomas Grabowski, MD), P30 AG066514 (PI Mary Sano, PhD), P30 AG066530 (PI Helena Chui, MD), P30 AG066507 (PI Marilyn Albert, PhD), P30 AG066444 (PI John Morris, MD), P30 AG066518 (PI Jeffrey Kaye, MD), P30 AG066512 (PI Thomas Wisniewski, MD), P30 AG066462 (PI Scott Small, MD), P30 AG072979 (PI David Wolk, MD), P30 AG072972 (PI Charles DeCarli, MD), P30 AG072976 (PI Andrew Saykin, PsyD), P30 AG072975 (PI David Bennett, MD), P30 AG072978 (PI Neil Kowall, MD), P30 AG072977 (PI Robert Vassar, PhD), P30 AG066519 (PI Frank LaFerla, PhD), P30 AG062677 (PI Ronald Petersen, MD, PhD), P30 AG079280 (PI Eric Reiman, MD), P30 AG062422 (PI Gil Rabinovici, MD), P30 AG066511 (PI Allan Levey, MD, PhD), P30 AG072946 (PI Linda Van Eldik, PhD), P30 AG062715 (PI Sanjay Asthana, MD, FRCP), P30 AG072973 (PI Russell Swerdlow, MD), P30 AG066506 (PI Todd Golde, MD, PhD), P30 AG066508 (PI Stephen Strittmatter, MD, PhD), P30 AG066515 (PI Victor Henderson, MD, MS), P30 AG072947 (PI Suzanne Craft, PhD), P30 AG072931 (PI Henry Paulson, MD, PhD), P30 AG066546 (PI Sudha Seshadri, MD), P20 AG068024 (PI Erik Roberson, MD, PhD), P20 AG068053 (PI Justin Miller, PhD), P20 AG068077 (PI Gary Rosenberg, MD), P20 AG068082 (PI Angela Jefferson, PhD), P30 AG072958 (PI Heather Whitson, MD), P30 AG072959 (PI James Leverenz, MD).

\section*{Financial Disclosure}
AMV, AAV, JLR, AS, JRW, and CKF are equity-holding employees of Unlearn.AI, Inc., a company that creates software for clinical research.

\appendix

\section{Derivations and Proofs}
\label{sec:appendix}

\subsection{Skedastic Function Model Specification and Coefficient Estimators}
\label{sec:appendix_skedastic_function_model_and_estimators}

In all of our derivations and proofs, the skedastic function model is specified as
\begin{equation}
\label{eq:appendix_skedastic_function_model}
\mathrm{log} \left ( e_i^2 \right ) = \gamma_0 + \gamma_1 \mathrm{log} \left ( s_i^2 \right ) + \xi_i,
\end{equation}
where $s_i^2$ is the variance of the probability distribution from the DTG for participant $i$. We define $u_i = \left ( 1, \mathrm{log} \left ( s_i^2 \right ) \right )^{\mathsf{T}}$ and $\mathbf{U}$ as the $N \times 2$ matrix whose $i$th row is $u_i^{\mathsf{T}}$. We let $d_i \in \mathbb{R}^N$ be the vector that has zero in all of its entries except for the $i$th entry, which is equal to $1$. The estimators for the coefficients in equation (\ref{eq:appendix_skedastic_function_model}) that we consider throughout are obtained via $\hat{\gamma} = \left ( \mathbf{U}^{\mathsf{T}} \mathbf{U} \right )^{-1} \mathbf{U}^{\mathsf{T}} \mathrm{log} \left \{ \left ( y - \mathbf{H} y \right )^2 \right \}$, where the logarithmic operation in this expression is performed entrywise. Specifically, 
\[
\begin{pmatrix} \hat{\gamma}_0 \\ \hat{\gamma}_1 \end{pmatrix} = \left ( \mathbf{U}^{\mathsf{T}}\mathbf{U} \right )^{-1}\mathbf{U}^{\mathsf{T}} \begin{pmatrix} \mathrm{log} \left [ \left \{  d_1^{\mathsf{T}} \left ( I - \mathbf{V} \left ( \mathbf{V}^{\mathsf{T}}\mathbf{V} \right )^{-1}\mathbf{V}^{\mathsf{T}} \right ) y \right \}^2 \right ] \\ \vdots \\ \mathrm{log} \left [ \left \{  d_N^{\mathsf{T}} \left ( I - \mathbf{V} \left ( \mathbf{V}^{\mathsf{T}}\mathbf{V} \right )^{-1}\mathbf{V}^{\mathsf{T}} \right ) y \right \}^2 \right ] \end{pmatrix}.
\]

\subsection{The Distributions of Residuals and Their Transformations}
\label{sec:appendix_residuals_proofs}

As $\left [ y \mid \mathbf{U}, \mathbf{V} \right ]$ is Multivariate Normal, the distribution of the vector of residuals conditional on the predictors, i.e., $\left [ \left \{ I - \mathbf{V} \left ( \mathbf{V}^{\mathsf{T}}\mathbf{V} \right )^{-1}\mathbf{V}^{\mathsf{T}} \right \}y \mid \mathbf{U}, \mathbf{V} \right ]$, is Multivariate Normal. We need only calculate $\mathbb{E} \left [ \left \{ I - \mathbf{V} \left ( \mathbf{V}^{\mathsf{T}}\mathbf{V} \right )^{-1}\mathbf{V}^{\mathsf{T}} \right \}y \mid \mathbf{U}, \mathbf{V} \right ]$ and $\mathrm{Cov} \left [ \left \{ I - \mathbf{V} \left ( \mathbf{V}^{\mathsf{T}}\mathbf{V} \right )^{-1}\mathbf{V}^{\mathsf{T}} \right \}y \mid \mathbf{U}, \mathbf{V} \right ]$ to fully derive the distribution of the residuals. As $I - \mathbf{V} \left ( \mathbf{V}^{\mathsf{T}}\mathbf{V} \right )^{-1}\mathbf{V}^{\mathsf{T}}$ is a projection matrix into the orthogonal complement of the column space of $\mathbf{V}$, we have $\mathbb{E} \left [ \left \{ I - \mathbf{V} \left ( \mathbf{V}^{\mathsf{T}}\mathbf{V} \right )^{-1}\mathbf{V}^{\mathsf{T}} \right \}y \mid \mathbf{U}, \mathbf{V} \right ] = \left \{ I - \mathbf{V} \left ( \mathbf{V}^{\mathsf{T}}\mathbf{V} \right )^{-1}\mathbf{V}^{\mathsf{T}} \right \}\mathbf{V}\beta = \left (0, \ldots, 0 \right )^{\mathsf{T}}$. For the second calculation,
\begin{equation}
\label{eq:residuals_covariance_matrix}
\mathrm{Cov} \left [ \left \{ I - \mathbf{V} \left ( \mathbf{V}^{\mathsf{T}}\mathbf{V} \right )^{-1}\mathbf{V}^{\mathsf{T}} \right \}y \mid \mathbf{U}, \mathbf{V} \right ] = \left \{ I - \mathbf{V} \left ( \mathbf{V}^{\mathsf{T}}\mathbf{V} \right )^{-1}\mathbf{V}^{\mathsf{T}} \right \} \Omega \left \{ I - \mathbf{V} \left ( \mathbf{V}^{\mathsf{T}}\mathbf{V} \right )^{-1}\mathbf{V}^{\mathsf{T}} \right \},
\end{equation}
where $\Omega$ is the $N \times N$ diagonal matrix whose $i$th diagonal entry is $\sigma_i^2$. The $(i,i)$ entry of the matrix in equation (\ref{eq:residuals_covariance_matrix}) is $d_i^{\mathsf{T}} \left \{ I - \mathbf{V} \left ( \mathbf{V}^{\mathsf{T}} \mathbf{V} \right )^{-1}\mathbf{V}^{\mathsf{T}} \right \} \Omega \left \{ I - \mathbf{V} \left ( \mathbf{V}^{\mathsf{T}} \mathbf{V} \right )^{-1}\mathbf{V}^{\mathsf{T}} \right \} d_i$, and is directly calculated as $\left ( 1 - h_{ii} \right )^2 \sigma_i^2 + \displaystyle \sum_{k \neq i} h_{ik}^2\sigma_k^2$. Similarly, the $(i,j)$ entry is directly calculated as $\displaystyle \sum_{k \neq i, j} h_{ik}h_{jk}\sigma_k^2 - h_{ij} \left ( 1 - h_{ii} \right )\sigma_i^2 - h_{ij} \left ( 1 - h_{jj} \right ) \sigma_j^2$ for $i \neq j$. Thus, for distinct $i, j \in \{1, \ldots, N\}$, $\mathrm{Var} \left ( e_i \mid \mathbf{U}, \mathbf{V} \right ) = \left ( 1 - h_{ii} \right )^2 \sigma_i^2 + \displaystyle \sum_{k \neq i} h_{ik}^2 \sigma_k^2$ and $\mathrm{Cov} \left ( e_i, e_j \mid \mathbf{U}, \mathbf{V} \right ) = \displaystyle \sum_{k \neq i, j} h_{ik}h_{jk}\sigma_k^2 - h_{ij} \left ( 1 - h_{ii} \right )\sigma_i^2 - h_{ij} \left ( 1 - h_{jj} \right ) \sigma_j^2$.

We represent the joint distribution of the squared residuals conditional on $\mathbf{U}$ and $\mathbf{V}$ via
\begin{equation}
\label{eq:squared_residual_representation}
e_i^2 = \left \{ \left ( 1 - h_{ii} \right )^2 \sigma_i^2 + \sum_{k \neq i} h_{ik}^2 \sigma_k^2 \right \} Z_i^2,
\end{equation}
where $\left [ \left ( Z_1, \ldots, Z_N \right )^{\mathsf{T}} \mid \mathbf{U}, \mathbf{V} \right ]$ is Multivariate Normal with $\mathbb{E} \left \{ \left ( Z_1, \ldots, Z_N \right )^{\mathsf{T}} \mid \mathbf{U}, \mathbf{V} \right \} = (0, \ldots, 0)^{\mathsf{T}}$, $\mathrm{Var} \left ( Z_i \mid \mathbf{U}, \mathbf{V} \right ) = 1$, and
\[
\mathrm{Cov} \left ( Z_i, Z_j \mid \mathbf{U}, \mathbf{V} \right ) = \frac{\displaystyle \sum_{k \neq i, j} h_{ik}h_{jk}\sigma_k^2 - h_{ij} \left ( 1 - h_{ii} \right )\sigma_i^2 - h_{ij} \left ( 1 - h_{jj} \right ) \sigma_j^2 }{\sqrt{\left ( 1 - h_{ii} \right )^2 \sigma_i^2 + \displaystyle \sum_{k \neq i} h_{ik}^2 \sigma_k^2} \sqrt{\left ( 1 - h_{jj} \right )^2 \sigma_j^2 + \displaystyle \sum_{k \neq j} h_{jk}^2 \sigma_k^2}}.
\]
We calculate $\mathrm{Cov} \left ( e_i^2, e_j^2 \mid \mathbf{U}, \mathbf{V} \right )$ according to

\begin{align*}
\mathrm{Cov} \left ( e_i^2, e_j^2 \mid \mathbf{U}, \mathbf{V} \right ) &= \mathbb{E} \left ( e_i^2 e_j^2 \mid \mathbf{U}, \mathbf{V} \right ) - \left \{ \left ( 1 - h_{ii} \right )^2 \sigma_i^2 + \displaystyle \sum_{k \neq i}  h_{ik}^2 \sigma_k^2 \right \} \left \{ \left ( 1 - h_{jj} \right )^2 \sigma_j^2 + \displaystyle \sum_{k \neq j} h_{jk}^2 \sigma_k^2 \right \} \\
&= \mathbb{E} \left \{ e_i^2 \mathbb{E} \left( e_j^2 \mid e_i, \mathbf{U}, \mathbf{V} \right ) \mid \mathbf{U}, \mathbf{V} \right \} \\
& \ \ \ - \left \{ \left ( 1 - h_{ii} \right )^2 \sigma_i^2 + \displaystyle \sum_{k \neq i} h_{ik}^2 \sigma_k^2 \right \} \left \{ \left ( 1 - h_{jj} \right )^2 \sigma_j^2 + \displaystyle \sum_{k \neq j} h_{jk}^2 \sigma_k^2 \right \} \\
&= \mathbb{E} \Bigg{\{} e_i^2 \Bigg{(} \left ( 1 - h_{jj} \right )^2 \sigma_j^2 + \sum_{k \neq j} h_{jk}^2 \sigma_k^2 \\
& \ \ \ \ \ \ \ - \frac{ \left \{ \displaystyle \sum_{k \neq i, j} h_{ik}h_{jk}\sigma_k^2 - h_{ij} \left ( 1 - h_{ii} \right ) \sigma_i^2 - h_{ij} \left ( 1 - h_{jj} \right ) \sigma_j^2 \right \}^2}{\left ( 1 - h_{ii} \right )^2 \sigma_i^2 + \displaystyle \sum_{k \neq i} h_{ik}^2 \sigma_k^2} \\
& \ \ \ \ \ \ \ + \frac{ \left \{ \displaystyle \sum_{k \neq i, j} h_{ik}h_{jk}\sigma_k^2 - h_{ij} \left ( 1 - h_{ii} \right ) \sigma_i^2 - h_{ij} \left ( 1 - h_{jj} \right ) \sigma_j^2 \right \}^2}{\left \{ \left ( 1 - h_{ii} \right )^2 \sigma_i^2 + \displaystyle \sum_{k \neq i} h_{ik}^2 \sigma_k^2 \right \}^2}e_i^2 \Bigg{)} \mid \mathbf{U}, \mathbf{V} \Bigg{\}} \\
& \ \ \ - \left \{ \left ( 1 - h_{ii} \right )^2 \sigma_i^2 + \displaystyle \sum_{k \neq i} h_{ik}^2 \sigma_k^2 \right \} \left \{  \left ( 1 - h_{jj} \right )^2 \sigma_j^2 + \displaystyle \sum_{k \neq j} h_{jk}^2 \sigma_k^2 \right \} \\
&= 2 \left \{ \displaystyle \sum_{k \neq i, j} h_{ik}h_{jk}\sigma_k^2 - h_{ij} \left ( 1 - h_{ii} \right )\sigma_i^2 - h_{ij} \left ( 1 - h_{jj} \right ) \sigma_j^2 \right \}^2.
\end{align*}

Equation (\ref{eq:squared_residual_representation}) enables us to represent the joint distribution of the logarithmic transformation of the squared residuals conditional on $\mathbf{U}$ and $\mathbf{V}$ via
\begin{equation}
\label{eq:log_squared_residual_representation}
\mathrm{log} \left ( e_i^2 \right ) = \mathrm{log} \left ( \left ( 1 - h_{ii} \right )^2 \sigma_i^2 + \sum_{k \neq i} h_{ik}^2 \sigma_k^2 \right ) + \mathrm{log} \left ( Z_i^2 \right ).
\end{equation}
This representation yields
\begin{align*}
\mathbb{E} \left \{ \mathrm{log} \left ( e_i^2 \right ) \mid \mathbf{U}, \mathbf{V} \right \} &= \mathrm{log} \left ( \left ( 1 - h_{ii} \right )^2 \sigma_i^2  + \sum_{k \neq i} h_{ik}^2 \sigma_k^2 \right ) + \mathbb{E} \left \{ \mathrm{log} \left ( Z_i^2 \right ) \mid \mathbf{U}, \mathbf{V} \right \} \\
&= \mathrm{log} \left ( \left ( 1 - h_{ii} \right )^2 \sigma_i^2 + \sum_{k \neq i} h_{ik}^2 \sigma_k^2 \right ) - (\gamma_{\mathrm{EM}} + \mathrm{log} 2 ),
\end{align*}
\noindent where $\gamma_{\mathrm{EM}} \approx 0.577$ denotes the Euler-Mascheroni constant (and does not correspond to a skedastic function model coefficient). In addition, $\mathrm{Var} \left \{ \mathrm{log} \left ( e_i^2 \right ) \mid \mathbf{U}, \mathbf{V} \right \} = \mathrm{Var} \left \{ \mathrm{log} \left ( Z_i^2 \right ) \mid \mathbf{U}, \mathbf{V} \right \} = \pi^2/2$. From the Cauchy-Schwarz inequality, $-\pi^2/2 - \left ( \gamma_{\mathrm{EM}} + \mathrm{log}2 \right )^2 \leq \mathbb{E} \left \{ \mathrm{log} \left ( Z_i^2 \right ) \mathrm{log} \left ( Z_j^2 \right ) \mid \mathbf{U}, \mathbf{V} \right \} \leq \pi^2/2 + \left ( \gamma_{\mathrm{EM}} + \mathrm{log} 2 \right )^2$. Similarly, as $\mathbb{E} \left \{ \mathrm{log} \left ( e_i^2 \right ) \mid \mathbf{U}, \mathbf{V} \right \}$ and $\mathrm{Var} \left \{ \mathrm{log} \left ( e_i^2 \right ) \mid \mathbf{U}, \mathbf{V} \right \}$ are both finite, $\bigg{|} \mathbb{E} \left \{ \mathrm{log} \left ( e_i^2 \right ) \mathrm{log} \left ( e_j^2 \right ) \mid \mathbf{U}, \mathbf{V} \right \} \bigg{|} < \sqrt{\mathbb{E} \left [ \left \{ \mathrm{log} \left ( e_i^2 \right ) \right \}^2 \mid \mathbf{U}, \mathbf{V} \right ] \mathbb{E} \left [ \left \{ \mathrm{log} \left ( e_j^2 \right ) \right \}^2 \mid \mathbf{U}, \mathbf{V} \right ]}$, and so the $\mathrm{log} \left ( e_i^2 \right )$ and $\mathrm{log} \left ( e_j^2 \right )$ have finite covariance conditional on $\mathbf{U}, \mathbf{V}$.

\subsection{Proofs of the Frequentist Properties of the Coefficient Estimators}
\label{sec:appendix_frequentist_proofs}

The frequentist properties of the coefficient estimators are derived via the representation of the logarithmic transformation of the squared residuals in equation (\ref{eq:log_squared_residual_representation}). Specifically, we represent the coefficient estimators as
\begin{equation}
\label{eq:coefficient_estimators_representation}
\begin{pmatrix} \hat{\gamma}_0 \\ \hat{\gamma}_1 \end{pmatrix} = \left ( \mathbf{U}^{\mathsf{T}}\mathbf{U} \right )^{-1}\mathbf{U}^{\mathsf{T}} \begin{pmatrix} \mathrm{log} \left ( \left ( 1 - h_{11} \right )^2 \sigma_1^2 + \displaystyle \sum_{k \neq 1} h_{1k}^2 \sigma_k^2 \right ) \\ \vdots \\ \mathrm{log} \left ( \left ( 1 - h_{NN} \right )^2 \sigma_N^2 + \displaystyle \sum_{k \neq N} h_{Nk}^2 \sigma_k^2 \right ) \end{pmatrix} + \left ( \mathbf{U}^{\mathsf{T}}\mathbf{U} \right )^{-1}\mathbf{U}^{\mathsf{T}} \begin{pmatrix} \mathrm{log} \left ( Z_1^2 \right ) \\ \vdots \\ \mathrm{log} \left ( Z_N^2 \right ) \end{pmatrix}.
\end{equation}
\noindent Letting $\overline{\mathrm{log} \left ( s^2 \right )} = \displaystyle \frac{1}{N} \sum_{k=1}^N \mathrm{log} \left ( s_k^2 \right )$ and $c =  \left [ N  \displaystyle \sum_{k=1}^N \left \{ \mathrm{log} \left ( s_k^2 \right ) - \overline{\mathrm{log} \left ( s^2 \right )} \right \}^2 \right ]^{-1}$, we have 
\[
\left ( \mathbf{U}^{\mathsf{T}}\mathbf{U} \right )^{-1} = c \begin{pmatrix} \displaystyle \sum_{k=1}^N \left \{ \mathrm{log} \left ( s_k^2 \right ) \right \}^2 & \displaystyle -N\overline{\mathrm{log} \left ( s^2 \right )} \\ \displaystyle -N\overline{\mathrm{log} \left ( s^2 \right )} & N \end{pmatrix}
\]
so that
\begin{align*}
\left ( \mathbf{U}^{\mathsf{T}}\mathbf{U} \right )^{-1}\mathbf{U}^{\mathsf{T}} &= c\begin{pmatrix} \displaystyle \left \{ \sum_{k=1}^N \left \{ \mathrm{log} \left ( s_k^2 \right ) \right \}^2 - N\mathrm{log} \left ( s_1^2 \right ) \overline{\mathrm{log} \left ( s^2 \right )} \right \} & \cdots & \left \{ \displaystyle \sum_{k=1}^N \left \{ \mathrm{log} \left ( s_k^2 \right ) \right \}^2 - N \mathrm{log} \left ( s_N^2 \right ) \overline{\mathrm{log} \left ( s^2 \right )} \right \} \\ \left \{ N\mathrm{log} \left ( s_1^2 \right ) - N\overline{\mathrm{log} \left ( s^2 \right )} \right \} & \cdots & \left \{ N\mathrm{log} \left ( s_N^2 \right ) - N\overline{\mathrm{log} \left ( s^2 \right )} \right \} \end{pmatrix}.
\end{align*}

\begin{proof}[Proof of Proposition \ref{lem:conditional_expectations}]
We first calculate  
\begin{equation}
\label{eq:coefficient_estimators_representation_1}
\left ( \mathbf{U}^{\mathsf{T}}\mathbf{U} \right )^{-1}\mathbf{U}^{\mathsf{T}} \begin{pmatrix} \mathrm{log} \left \{ \left ( 1 - h_{11} \right )^2 \sigma_1^2 + \displaystyle \sum_{k \neq 1} h_{1k}^2\sigma_k^2 \right \} \\ \vdots \\ \mathrm{log} \left \{ \left ( 1 - h_{NN} \right )^2 \sigma_N^2 + \displaystyle \sum_{k \neq N}  h_{Nk}^2 \sigma_k^2 \right \} \end{pmatrix} 
\end{equation}
\noindent and
\begin{equation}
\label{eq:coefficient_estimators_representation_2}
\mathbb{E} \left \{ \left ( \mathbf{U}^{\mathsf{T}}\mathbf{U} \right )^{-1}\mathbf{U}^{\mathsf{T}} \begin{pmatrix} \mathrm{log} \left ( Z_1^2 \right ) \\ \vdots \\ \mathrm{log} \left ( Z_N^2 \right )^2 \end{pmatrix} \mid \mathbf{U}, \mathbf{V} \right \}.
\end{equation}
The two entries in equation (\ref{eq:coefficient_estimators_representation_1}) are 
\begin{align*}
\frac{1}{N \displaystyle \sum_{k=1}^N \left \{ \mathrm{log} \left ( s_k^2 \right ) - \overline{\mathrm{log} \left ( s^2 \right )} \right \}^2} & \Bigg{(} \left [ \displaystyle \sum_{k=1}^N \left \{ \mathrm{log} \left ( s_k^2 \right ) \right \}^2 \right ] \displaystyle \sum_{i=1}^N \mathrm{log} \left \{ \left ( 1 - h_{ii} \right )^2 \sigma_i^2 + \displaystyle \sum_{k \neq i} h_{ik}^2 \sigma_k^2 \right \} \\
& \ \ \ - N \overline{\mathrm{log} \left ( s^2 \right )} \displaystyle \sum_{i=1}^N\mathrm{log} \left ( s_i^2 \right ) \mathrm{log} \left \{ \left ( 1 - h_{ii} \right )^2 \sigma_i^2 + \displaystyle \sum_{k \neq i} h_{ik}^2 \sigma_k^2 \right \} \Bigg{)}
\end{align*}
and 
\begin{align*}
\frac{1}{N \displaystyle \sum_{k=1}^N \left \{ \mathrm{log} \left ( s_k^2 \right ) - \overline{\mathrm{log} \left ( s^2 \right )} \right \}^2} & \Bigg{(} N \displaystyle \sum_{i=1}^N \mathrm{log} \left ( s_i^2 \right ) \mathrm{log} \left \{ \left ( 1 - h_{ii} \right )^2 \sigma_i^2 + \sum_{k \neq i} h_{ik}^2 \sigma_k^2 \right \} \\
& \ \ \ - N \overline{\mathrm{log} \left ( s^2 \right )} \sum_{i=1}^N \mathrm{log} \left \{ \left ( 1 - h_{ii} \right )^2 \sigma_i^2 + \sum_{k \neq i} h_{ik}^2 \sigma_k^2 \right \}  \Bigg{)}.
\end{align*}
\noindent In addition, the two entries in $\left ( \mathbf{U}^{\mathsf{T}}\mathbf{U} \right )^{-1}\mathbf{U}^{\mathsf{T}} \begin{pmatrix} \mathrm{log} \left ( Z_1^2 \right ) \\ \vdots \\ \mathrm{log} \left ( Z_N^2 \right ) \end{pmatrix}$ are
\[
\frac{1}{N \displaystyle \sum_{k=1}^N \left \{ \mathrm{log} \left ( s_k^2 \right ) - \overline{\mathrm{log} \left ( s^2 \right )} \right \}^2} \left ( \left [ \displaystyle \sum_{k=1}^N \left \{ \mathrm{log} \left ( s_k^2 \right ) \right \}^2 \right ] \displaystyle \sum_{k=1}^N \mathrm{log} \left ( Z_k^2 \right ) - N \overline{\mathrm{log} \left ( s^2 \right )} \displaystyle \sum_{k=1}^N \mathrm{log} \left ( s_k^2 \right ) \mathrm{log} \left ( Z_k^2 \right ) \right ) 
\]
\noindent and
\[
\frac{1}{N \displaystyle \sum_{k=1}^N \left \{ \mathrm{log} \left ( s_k^2 \right ) - \overline{\mathrm{log} \left ( s^2 \right )} \right \}^2} \left \{ N \displaystyle \sum_{k=1}^N \mathrm{log} \left ( s_k^2 \right )  \mathrm{log} \left ( Z_k^2 \right ) - N\overline{\mathrm{log} \left ( s^2 \right )} \displaystyle \sum_{k=1}^N \mathrm{log} \left ( Z_k^2 \right ) \right \}.  
\]
\noindent Thus
\begin{align*}
\mathbb{E} \left ( \hat{\gamma}_0 \mid \mathbf{U}, \mathbf{V} \right ) &= 
\frac{1}{N \displaystyle \sum_{k=1}^N \left \{ \mathrm{log} \left ( s_k^2 \right ) - \overline{\mathrm{log} \left ( s^2 \right )} \right \}^2} \Bigg{(} \left [ \displaystyle \sum_{k=1}^N \left \{ \mathrm{log} \left ( s_k^2 \right ) \right \}^2 \right ] \displaystyle \sum_{i=1}^N \mathrm{log} \left \{ \left ( 1 - h_{ii} \right )^2 \sigma_i^2 + \displaystyle \sum_{k \neq i} h_{ik}^2 \sigma_k^2 \right \} \\
& \ \ \ \ \ \ \ \ \ \ \ \ \ \ \ \ \ \ \ \ \ \ \ \ \ \ \ \ \ \ \ \ \ \ \ \ \ \ \ \ \ \ \ - N\overline{\mathrm{log} \left ( s^2 \right )}  \displaystyle \sum_{i=1}^N \mathrm{log} \left ( s_i^2 \right ) \mathrm{log} \left \{ \left ( 1 - h_{ii} \right )^2 \sigma_i^2 + \displaystyle \sum_{k \neq i} h_{ik}^2 \sigma_k^2 \right \} \Bigg{)} \\
& \ \ \ + \frac{1}{N \displaystyle \sum_{k=1}^N \left \{ \mathrm{log} \left ( s_k^2 \right ) - \overline{\mathrm{log} \left ( s^2 \right )} \right \}^2}  \Bigg{(}\ \left [\displaystyle \sum_{k=1}^N \left \{ \mathrm{log} \left ( s_k^2 \right ) \right \}^2 \right ] \displaystyle \sum_{k=1}^N \mathbb{E} \left \{ \mathrm{log} \left ( Z_k^2 \right ) \mid \mathbf{U}, \mathbf{V} \right \} \\
& \ \ \ \ \ \ \ \ \ \ \ \ \ \ \ \ \ \ \ \ \ \ \ \ \ \ \ \ \ \ \ \ \ \ \ \ \ \ \ \ \ \ \ \ \ \ \ - N \overline{\mathrm{log} \left ( s^2 \right ) } \displaystyle \sum_{k=1}^N \mathrm{log} \left ( s_k^2 \right ) \mathbb{E} \left \{ \mathrm{log} \left ( Z_k^2 \right ) \mid \mathbf{U}, \mathbf{V} \right \}\Bigg{)}
\end{align*}
and
\begin{align*}
\mathbb{E} \left ( \hat{\gamma}_1 \mid \mathbf{U}, \mathbf{V} \right ) &= \frac{1}{N \displaystyle \sum_{k=1}^N \left \{ \mathrm{log} \left ( s_k^2 \right ) - \overline{\mathrm{log} \left ( s^2 \right )} \right \}^2} \Bigg{(} N \displaystyle \sum_{i=1}^N \mathrm{log} \left ( s_i^2 \right ) \mathrm{log} \left \{ \left ( 1 - h_{ii} \right )^2 \sigma_i^2 + \sum_{k \neq i} h_{ik}^2 \sigma_k^2 \right \} \\
& \ \ \ \ \ \ \ \ \ \ \ \ \ \ \ \ \ \ \ \ \ \ \ \ \ \ \ \ \ \ \ \ \ \ \ \ \ \ \ \ \ \ \ - N \overline{\mathrm{log} \left ( s^2 \right )} \sum_{i=1}^N \mathrm{log} \left \{ \left ( 1 - h_{ii} \right )^2 \sigma_i^2 + \sum_{k \neq i} h_{ik}^2 \sigma_k^2 \right \}  \Bigg{)} \\
& \ \ \ + \frac{1}{N \displaystyle \sum_{k=1}^N \left \{ \mathrm{log} \left ( s_k^2 \right ) - \overline{\mathrm{log} \left ( s^2 \right )} \right \}^2} \Bigg{(} N \displaystyle \sum_{k=1}^N \mathrm{log} \left ( s_k^2 \right ) \mathbb{E} \left \{ \mathrm{log} \left ( Z_k^2 \right ) \mid \mathbf{U}, \mathbf{V} \right \} \\
& \ \ \ \ \ \ \ \ \ \ \ \ \ \ \ \ \ \ \ \ \ \ \ \ \ \ \ \ \ \ \ \ \ \ \ \ \ \ \ \ \ \ \ - N\overline{\mathrm{log} \left ( s^2 \right )} \displaystyle \sum_{k=1}^N \mathbb{E} \left \{ \mathrm{log} \left ( Z_k^2 \right ) \mid \mathbf{U}, \mathbf{V} \right \} \Bigg{)}.
\end{align*}
\noindent Now
\begin{align*}
\frac{1}{N \displaystyle \sum_{k=1}^N \left \{ \mathrm{log} \left ( s_k^2 \right ) - \overline{\mathrm{log} \left ( s^2 \right )} \right \}^2} & \Bigg{(} \left [ \displaystyle \sum_{k=1}^N \left \{ \mathrm{log} \left ( s_k^2 \right ) \right \}^2 \right ] \displaystyle \sum_{k=1}^N \mathbb{E} \left \{ \mathrm{log} \left ( Z_k^2 \right ) \mid \mathbf{U}, \mathbf{V} \right \} \\
& \ \ \ - N \overline{\mathrm{log} \left ( s^2 \right )} \displaystyle \sum_{k=1}^N \mathrm{log} \left ( s_k^2 \right ) \mathbb{E} \left \{ \mathrm{log} \left ( Z_k^2 \right ) \mid \mathbf{U}, \mathbf{V} \right \} \Bigg{)} = - \left ( \gamma_{\mathrm{EM}} + \mathrm{log} 2 \right ),
\end{align*}
so that
\begin{align*}
\mathbb{E} \left ( \hat{\gamma}_0 \mid \mathbf{U}, \mathbf{V} \right ) &= 
\frac{1}{N \displaystyle \sum_{k=1}^N \left \{ \mathrm{log} \left ( s_k^2 \right ) - \overline{\mathrm{log} \left ( s^2 \right )} \right \}^2} \Bigg{(} \left [ \displaystyle \sum_{k=1}^N \left \{ \mathrm{log} \left ( s_k^2 \right ) \right \}^2 \right ] \displaystyle \sum_{i=1}^N \mathrm{log} \left \{ \left ( 1 - h_{ii} \right )^2 \sigma_i^2 + \displaystyle \sum_{k \neq i} h_{ik}^2 \sigma_k^2 \right \} \\
& \ \ \ \ \ \ \ \ \ \ \ \ \ \ \ \ \ \ \ \ \ \ \ \ \ \ \ \ \ \ \ \ \ \ \ \ \ \ \ \ \ \ \ - N\overline{\mathrm{log} \left ( s^2 \right )}  \displaystyle \sum_{i=1}^N \mathrm{log} \left ( s_i^2 \right ) \mathrm{log} \left \{ \left ( 1 - h_{ii} \right )^2 \sigma_i^2 + \displaystyle \sum_{k \neq i} h_{ik}^2 \sigma_k^2 \right \} \Bigg{)} \\
& \ \ \ - \left ( \gamma_{\mathrm{EM}} + \mathrm{log} 2 \right ).
\end{align*}
Also, as
\[
N \displaystyle \sum_{k=1}^N \mathrm{log} \left ( s_k^2 \right ) \mathbb{E} \left \{ \mathrm{log} \left ( Z_k^2 \right ) \mid \mathbf{U}, \mathbf{V} \right \} - N\overline{\mathrm{log} \left ( s^2 \right )} \displaystyle \sum_{k=1}^N \mathbb{E} \left \{ \mathrm{log} \left ( Z_k^2 \right ) \mid \mathbf{U}, \mathbf{V} \right \} = 0
\]
we have that
\begin{align*}
\mathbb{E} \left ( \hat{\gamma}_1 \mid \mathbf{U}, \mathbf{V} \right ) &= \frac{1}{N \displaystyle \sum_{k=1}^N \left \{ \mathrm{log} \left ( s_k^2 \right ) - \overline{\mathrm{log} \left ( s^2 \right )} \right \}^2} \Bigg{[} N \displaystyle \sum_{i=1}^N \mathrm{log} \left ( s_i^2 \right ) \mathrm{log} \left \{ \left ( 1 - h_{ii} \right )^2 \sigma_i^2 + \sum_{k \neq i} h_{ik}^2 \sigma_k^2 \right \} \\
& \ \ \ \ \ \ \ \ \ \ \ \ \ \ \ \ \ \ \ \ \ \ \ \ \ \ \ \ \ \ \ \ \ \ \ \ \ \ \ \ \ \ \ - N \overline{\mathrm{log} \left ( s^2 \right )} \sum_{i=1}^N \mathrm{log} \left \{  \left ( 1 - h_{ii} \right )^2 \sigma_i^2 + \sum_{k \neq i} h_{ik}^2 \sigma_k^2 \right \}  \Bigg{]}.
\end{align*}
\end{proof}

\begin{proof}[Proof of Proposition \ref{lem:covariances}]
We calculate the finite-sample covariance matrix of the skedastic function model coefficient estimators via
\[
\mathrm{Cov} \left \{ \left ( \mathbf{U}^{\mathsf{T}}\mathbf{U} \right )^{-1}\mathbf{U}^{\mathsf{T}} \begin{pmatrix} \mathrm{log} \left ( Z_1^2 \right ) \\ \vdots \\ \mathrm{log} \left ( Z_N^2 \right ) \end{pmatrix} \bigg| \mathbf{U}, \mathbf{V} \right \} = \left ( \mathbf{U}^{\mathsf{T}}\mathbf{U} \right )^{-1}\mathbf{U}^{\mathsf{T}} \mathrm{Cov} \left ( \begin{pmatrix} \mathrm{log} \left ( Z_1^2 \right ) \\ \vdots \\ \mathrm{log} \left ( Z_N^2 \right ) \end{pmatrix} \bigg| \mathbf{U}, \mathbf{V} \right ) \mathbf{U} \left ( \mathbf{U}^{\mathsf{T}}\mathbf{U} \right )^{-1}. 
\]
We have $\mathrm{Var} \left \{ \mathrm{log} \left ( Z_i^2 \right ) \mid \mathbf{U}, \mathbf{V} \right \} = \pi^2/2$ and $\bigg| \mathrm{Cov} \left \{ \mathrm{log} \left ( Z_i^2 \right ), \mathrm{log} \left ( Z_j^2 \right )  \mid \mathbf{U}, \mathbf{V} \right \} \bigg| < \infty$ for all $i, j \in \{1, \ldots, N \}$. Let
\[
C = \mathrm{Cov} \left ( \begin{pmatrix} \mathrm{log} \left ( Z_1^2 \right ) \\ \vdots \\ \mathrm{log} \left ( Z_N^2 \right ) \end{pmatrix} \bigg| \mathbf{U}, \mathbf{V} \right ),
\]
and let $C_{ij}$ denote the $(i,j)$ entry of $C$ (where $C_{ij} = C_{ji}$) for all $i, j \in \{ 1, \ldots, N \}$. Then $\mathbf{U}^{\mathsf{T}}C\mathbf{U}$ is
\[
\begin{pmatrix} \displaystyle \sum_{i=1}^N \sum_{j=1}^N C_{ij} & \displaystyle \sum_{i=1}^N \sum_{j=1}^N C_{ij} \mathrm{log} \left ( s_i^2 \right ) \\ \displaystyle \sum_{i=1}^N \sum_{j=1}^N C_{ij} \mathrm{log} \left ( s_i^2 \right ) & \displaystyle \sum_{i=1}^N \sum_{j=1}^N C_{ij} \mathrm{log} \left ( s_i^2 \right ) \mathrm{log} \left ( s_j^2 \right ) \end{pmatrix},
\]
which is
\[
\begin{pmatrix} \mathrm{Var} \left \{ \displaystyle \sum_{i=1}^N \mathrm{log} \left ( Z_i^2 \right ) \mid \mathbf{U}, \mathbf{V} \right \} & \mathrm{Cov} \left \{ \displaystyle \sum_{i=1}^N \mathrm{log} \left ( s_i^2 \right ) \mathrm{log} \left ( Z_i^2 \right ), \displaystyle \sum_{j=1}^N \mathrm{log} \left ( Z_j^2 \right ) \mid \mathbf{U}, \mathbf{V} \right \} \\ \mathrm{Cov} \left \{ \displaystyle \sum_{i=1}^N \mathrm{log} \left ( s_i^2 \right ) \mathrm{log} \left ( Z_i^2 \right ), \displaystyle \sum_{j=1}^N \mathrm{log} \left ( Z_j^2 \right ) \mid \mathbf{U}, \mathbf{V} \right \} & \mathrm{Var} \left \{ \displaystyle \sum_{i=1}^N \mathrm{log} \left ( s_i^2 \right ) \mathrm{log} \left ( Z_i^2 \right ) \mid \mathbf{U}, \mathbf{V} \right \} \end{pmatrix}.
\]
\noindent As such, 
\begin{align*}
&\mathrm{Cov} \left \{ \begin{pmatrix} \hat{\gamma}_0 \\ \hat{\gamma}_1 \end{pmatrix} \bigg| \mathbf{U}, \mathbf{V} \right \} = \left ( \mathbf{U}^{\mathsf{T}}\mathbf{U} \right )^{-1} \left ( \mathbf{U}^{\mathsf{T}} C \mathbf{U} \right ) \left ( \mathbf{U}^{\mathsf{T}} \mathbf{U} \right )^{-1} \\
&= \left [ N \displaystyle \sum_{k=1}^N \left \{ \mathrm{log} \left ( s_k^2 \right ) - \overline{\mathrm{log} \left ( s^2 \right )} \right \}^2 \right ]^{-2} \begin{pmatrix} \displaystyle \sum_{k=1}^N \left \{ \mathrm{log} \left ( s_k^2 \right ) \right \}^2 & -N \overline{\mathrm{log} \left ( s^2 \right )} \\ -N \overline{\mathrm{log} \left ( s^2 \right )} & N \end{pmatrix} \\
& \ \ \ \times \begin{pmatrix} \mathrm{Var} \left \{ \displaystyle \sum_{i=1}^N \mathrm{log} \left ( Z_i^2 \right ) \mid \mathbf{U}, \mathbf{V} \right \} & \mathrm{Cov} \left \{ \displaystyle \sum_{i=1}^N \mathrm{log} \left ( s_i^2 \right ) \mathrm{log} \left ( Z_i^2 \right ), \displaystyle \sum_{j=1}^N \mathrm{log} \left ( Z_j^2 \right ) \mid \mathbf{U},\mathbf{V} \right \} \\ \mathrm{Cov} \left \{ \displaystyle \sum_{i=1}^N \mathrm{log} \left ( s_i^2 \right ) \mathrm{log} \left ( Z_i^2 \right ), \displaystyle \sum_{j=1}^N \mathrm{log} \left ( Z_j^2 \right ) \mid \mathbf{U}, \mathbf{V} \right \} & \mathrm{Var} \left ( \displaystyle \sum_{i=1}^N \mathrm{log} \left \{ s_i^2 \right ) \mathrm{log} \left ( Z_i^2 \right ) \mid \mathbf{U}, \mathbf{V} \right \} \end{pmatrix} \\
& \ \ \ \times \begin{pmatrix} \displaystyle \sum_{k=1}^N \left \{ \mathrm{log} \left ( s_k^2 \right ) \right \}^2 & -N \overline{\mathrm{log} \left ( s^2 \right )} \\ -N \overline{\mathrm{log} \left ( s^2 \right )} & N \end{pmatrix}.
\end{align*}
In particular,
\[
\mathrm{Var} \left ( \hat{\gamma}_0 \mid \mathbf{U}, \mathbf{V} \right ) = \frac{\mathrm{Var} \left [ \left \{ \displaystyle \sum_{k=1}^N \mathrm{log} \left ( s_k^2 \right ) \right \} \displaystyle \sum_{i=1}^N \mathrm{log} \left ( Z_i^2 \right ) - N \overline{\mathrm{log} \left ( s^2 \right )} \displaystyle \sum_{j=1}^N \mathrm{log} \left ( s_j^2 \right ) \mathrm{log} \left ( Z_j^2 \right ) \mid \mathbf{U}, \mathbf{V} \right ]}{\left [ N \displaystyle \sum_{k=1}^N \left \{ \mathrm{log} \left ( s_k^2 \right ) - \overline{\mathrm{log} \left ( s^2 \right )} \right \}^2 \right ]^{2}},
\]
\[
\mathrm{Var} \left ( \hat{\gamma}_1 \mid \mathbf{U}, \mathbf{V} \right ) = \frac{\mathrm{Var} \left \{ N \overline{\mathrm{log} \left ( s^2 \right )} \displaystyle \sum_{i=1}^N \mathrm{log} \left ( Z_i^2 \right ) - N \displaystyle \sum_{j=1}^N \mathrm{log} \left ( s_j^2 \right ) \mathrm{log} \left ( Z_j^2 \right ) \mid \mathbf{U}, \mathbf{V} \right \}}{\left [ N \displaystyle \sum_{k=1}^N \left \{ \mathrm{log} \left ( s_k^2 \right ) - \overline{\mathrm{log} \left ( s^2 \right )} \right \}^2 \right ]^{2}} ,
\]
\noindent and
\begin{align*}
\mathrm{Cov} \left ( \hat{\gamma}_0, \hat{\gamma}_1 \mid \mathbf{U}, \mathbf{V} \right ) &= \left [ N \displaystyle \sum_{k=1}^N \left \{ \mathrm{log} \left ( s_k^2 \right ) - \overline{\mathrm{log} \left ( s^2 \right )} \right \}^2 \right ]^{-2} \\
& \ \ \ \times  \Bigg{[} -N \overline{\mathrm{log} \left ( s^2 \right )} \mathrm{Var} \left ( \sum_{i=1}^N \mathrm{log} \left ( Z_i^2 \right ) \mid \mathbf{U}, \mathbf{V} \right ) \sum_{k=1}^N \left \{ \mathrm{log} \left ( s_k^2 \right ) \right \}^2 \\
& \ \ \ \ \ \ \ \ + N^2 \left \{ \overline{\mathrm{log} \left ( s^2 \right )} \right \}^2 \mathrm{Cov} \left \{ \sum_{i=1}^N \mathrm{log} \left ( Z_i^2 \right ), \sum_{j=1}^N \mathrm{log} \left ( s_j^2 \right ) \mathrm{log} \left ( Z_j^2 \right ) \mid \mathbf{U}, \mathbf{V} \right \} \\
& \ \ \ \ \ \ \ \ + N \mathrm{Cov} \left \{ \sum_{i=1}^N \mathrm{log} \left ( Z_i^2 \right ), \sum_{j=1}^N \mathrm{log} \left ( s_j^2 \right ) \mathrm{log} \left ( Z_j^2 \right ) \mid \mathbf{U}, \mathbf{V} \right \} \sum_{k=1}^N \left \{ \mathrm{log} \left ( s_k^2 \right ) \right \}^2 \\
& \ \ \ \ \ \ \ \ - N^2 \overline{\mathrm{log} \left ( s^2 \right )} \mathrm{Var} \left \{ \sum_{i=1}^N \mathrm{log} \left ( s_i^2 \right ) \mathrm{log} \left ( Z_i^2 \right ) \mid \mathbf{U}, \mathbf{V} \right \} \Bigg{]}.
\end{align*}
\end{proof}

\begin{proof}[Proof of Theorem \ref{thm:finite_sample_beta_expectation}]
First, 
\begin{align*}
\mathbb{E} \left ( \hat{\beta} \mid \mathbf{U}, \mathbf{V} \right ) &= \mathbb{E} \left \{ \mathbb{E} \left ( \hat{\beta} \mid e_1^2, \ldots, e_N^2, \mathbf{U}, \mathbf{V} \right ) \mid \mathbf{U}, \mathbf{V} \right \} \\
&= \mathbb{E} \left \{ \left ( \mathbf{V}^{\mathsf{T}} \hat{\Omega}^{-1} \mathbf{V} \right )^{-1} \mathbf{V}^{\mathsf{T}} \hat{\Omega}^{-1} \mathbb{E} \left ( y \mid e_1^2, \ldots, e_N^2, \mathbf{U}, \mathbf{V} \right ) \mid \mathbf{U}, \mathbf{V} \right \} \\
&= \beta + \mathbb{E} \left \{ \left ( \mathbf{V}^{\mathsf{T}} \hat{\Omega}^{-1} \mathbf{V} \right )^{-1} \mathbf{V}^{\mathsf{T}} \hat{\Omega}^{-1} \mathbb{E} \left ( \epsilon \mid e_1^2, \ldots, e_N^2, \mathbf{U}, \mathbf{V} \right ) \mid \mathbf{U}, \mathbf{V} \right \},
\end{align*}
where $\epsilon = \left ( \epsilon_1, \ldots, \epsilon_N \right )^{\mathsf{T}}$. Next, we have from equation (\ref{eq:linear_model_observed_outcomes}) that 
\[
\begin{pmatrix} e_1^2 \\ \vdots \\ e_N^2 \end{pmatrix} =  \begin{pmatrix} \left \{ d_1^{\mathsf{T}} \left ( I - \mathbf{H} \right ) \epsilon \right \}^2 \\ \vdots \\ \left \{ d_N^{\mathsf{T}} \left ( I - \mathbf{H} \right ) \epsilon \right \}^2 \end{pmatrix}.
\]
\noindent Given $e_1^2, \ldots, e_N^2$, if a vector $\epsilon^*$ satisfies this set of equations then $-\epsilon^*$ also satisfies the set of equations. Accordingly, each $\left [ \epsilon_i \mid e_1^2, \ldots, e_N^2, \mathbf{U}, \mathbf{V} \right ]$ takes only finitely many possible values that are symmetric around zero. Furthermore, as $\left [ \epsilon_i \mid \mathbf{U}, \mathbf{V} \right ]$ is a symmetric distribution centered at zero, the probability mass function of $\left [ \epsilon_i \mid e_1^2, \ldots, e_N^2, \mathbf{U}, \mathbf{V} \right ]$ is also symmetric around zero. Thus $\mathbb{E} \left ( \epsilon_i \mid e_1^2, \ldots, e_N^2, \mathbf{U}, \mathbf{V} \right ) = 0$ so that $\mathbb{E} \left ( \hat{\beta} \mid \mathbf{U}, \mathbf{V} \right ) = \beta$.
\end{proof}

\begin{proof}[Proof of Proposition \ref{lem:consistency_expectation}]
From the Weak Law of Large Numbers and the assumptions, we have 
\[
N^{-1} \sum_{k=1}^N \left \{ \mathrm{log} \left ( s_k^2 \right ) - \overline{\mathrm{log} \left ( s^2 \right )} \right \} \rightarrowp \mathrm{Var} \left \{ \mathrm{log} \left ( S^2 \right ) \right \}
\]
and
\begin{align*}
&\frac{1}{N} \sum_{i=1}^N \left \{ \mathrm{log} \left ( s_i^2 \right ) - \overline{\mathrm{log} \left ( s^2 \right )} \right \} \mathrm{log} \left \{ \left ( 1 - h_{ii} \right )^2 \sigma_i^2 + \sum_{k \neq i} h_{ik}^2 \sigma_k^2 \right \} \\
&= \frac{1}{N} \sum_{i=1}^N \left \{ \mathrm{log} \left ( s_i^2 \right ) - \overline{\mathrm{log} \left ( s^2 \right )} \right \} \Bigg{[} \mathrm{log} \left \{  \left ( 1 - h_{ii} \right )^2 \sigma_i^2 + \sum_{k \neq i} h_{ik}^2 \sigma_k^2 \right \} \\
& \ \ \ \ \ \ \ \ \ \ \ \ \ \ - \frac{1}{N} \sum_{j=1}^N \mathrm{log} \left \{ \left ( 1 - h_{jj} \right )^2 \sigma_j^2 + \sum_{k \neq j} h_{jk}^2 \sigma_k^2 \right \} \Bigg{]} \\
&\rightarrowp \mathrm{Cov} \left ( \mathrm{log} \left ( S^2 \right ), \mathrm{log} \left ( H_{\sigma} \right ) \right ).
\end{align*}
\noindent By virtue of the Continuous Mapping Theorem, $\mathbb{E} \left ( \hat{\gamma}_1 \mid \mathbf{U}, \mathbf{V} \right ) \rightarrowp$ $\mathrm{Cov} \left \{ \mathrm{log} \left ( S^2 \right ), \mathrm{log} \left ( H_{\sigma} \right ) \right \}$ $/$ $\mathrm{Var} \left \{ \mathrm{log} \left ( S^2 \right ) \right \}$, and so from the portmanteau lemma $\mathbb{E} \left ( \hat{\gamma}_1 \right ) = \mathbb{E} \left \{ \mathbb{E} \left ( \hat{\gamma}_1 \mid \mathbf{U}, \mathbf{V} \right ) \right \} \rightarrow$ $\mathrm{Cov} \left \{ \mathrm{log} \left ( S^2 \right ), \mathrm{log} \left ( H_{\sigma} \right ) \right \}/\mathrm{Var} \left \{ \mathrm{log} \left ( S^2 \right ) \right \}$ as $N \rightarrow \infty$.

A similar argument applies to $\mathbb{E} \left ( \hat{\gamma}_0 \right )$. In particular, we have from the Weak Law of Large Numbers and the Continuous Mapping Theorem that
\[
\frac{1}{N^2} \left [ \sum_{k=1}^N \left \{ \mathrm{log} \left ( s_k^2 \right ) \right \}^2 \right ] \sum_{i=1}^N \mathrm{log} \left \{ \left ( 1 - h_{ii} \right )^2 \sigma_i^2 + \sum_{k \neq i} h_{ik}^2 \sigma_k^2 \right \} \rightarrowp \mathbb{E} \left [ \left \{ \mathrm{log} \left ( S^2 \right ) \right \}^2 \right ] \mathbb{E} \left \{ \mathrm{log} \left ( H_{\sigma} \right ) \right \}
\]
and
\[
\frac{1}{N} \left \{ \overline{\mathrm{log} \left ( s^2 \right )} \right \} \sum_{i=1}^N \left [ \mathrm{log} \left ( s_i^2 \right ) \mathrm{log} \left \{ \left ( 1 - h_{ii} \right )^2 \sigma_i^2 + \sum_{k \neq i} h_{ik}^2 \sigma_k^2 \right \} \right ] \rightarrowp \mathbb{E} \left \{ \mathrm{log} \left ( S^2 \right ) \right \} \mathbb{E} \left \{ \mathrm{log} \left ( S^2 \right ) \mathrm{log} \left ( H_{\sigma} \right ) \right \},
\]
so that 
\[
\mathbb{E} \left ( \hat{\gamma}_0 \mid \mathbf{U}, \mathbf{V} \right ) \rightarrowp \frac{\mathbb{E} \left [ \left \{ \mathrm{log} \left ( S^2 \right ) \right \}^2 \right ] \mathbb{E} \left \{ \mathrm{log} \left ( H_{\sigma} \right ) \right \} - \mathbb{E} \left \{ \mathrm{log} \left ( S^2 \right ) \right \} \mathbb{E} \left \{ \mathrm{log} \left ( S^2 \right ) \mathrm{log} \left ( H_{\sigma} \right ) \right \}}{\mathrm{Var} \left \{ \mathrm{log} \left ( S^2 \right ) \right \}} - \left ( \gamma_{\mathrm{EM}} + \mathrm{log} 2 \right ).
\]
\noindent Again, the portmanteau lemma implies that as $N \rightarrow \infty$
\begin{align*}
\mathbb{E} \left ( \hat{\gamma}_0 \right ) &= \mathbb{E} \left \{ \mathbb{E} \left ( \hat{\gamma}_0 \mid \mathbf{U}, \mathbf{V} \right ) \right \} \\
& \ \ \ \rightarrow \frac{\mathbb{E} \left [ \left \{ \mathrm{log} \left ( S^2 \right ) \right \}^2 \right ] \mathbb{E} \left \{ \mathrm{log} \left ( H_{\sigma} \right ) \right \} - \mathbb{E} \left \{ \mathrm{log} \left ( S^2 \right ) \right \} \mathbb{E} \left \{ \mathrm{log} \left ( S^2 \right ) \mathrm{log} \left ( H_{\sigma} \right ) \right \}}{\mathrm{Var} \left \{ \mathrm{log} \left ( S^2 \right ) \right \}} - \left ( \gamma_{\mathrm{EM}} + \mathrm{log} 2 \right ).
\end{align*}
\end{proof}

\bibliographystyle{apalike}
\bibliography{references}

\end{document}